\newif\ifcomments
\newif\ifchanges
\newcommand{\Member}[1]{\myproblem{Member}(#1)}
\newcommand{\RangeMember}[1]{\myproblem{RangeMember}(#1)}
\newcommand{\RangeEval}[1]{\myproblem{RangeEval}(#1)}
\newcommand{\StringEquality}{\myproblem{StringEquality}}
\newcommand{\qrange}{\mtext{range}}
\newcommand{\qmember}{\mtext{member}}
\newcommand{\qequals}{\mtext{equals}}
\newcommand{\tsucc}[2]{\ensuremath{\lceil #1 \rceil^{#2}}}
\newcommand{\tpred}[2]{\ensuremath{\lfloor #1 \rfloor_{#2}}}
\newenvironment{qclist}{\begin{tabular}[t]{@{}rll@{}}}{\end{tabular}}
\newcommand{\qclitem}[3]{#1 & #2 & #3 \\}
\newcommand{\tpl}{\bar}
\newcommand{\mtext}[1]{\textsc{#1}}
\newcommand{\ins}{\mtext{ins}\xspace}
\newcommand{\del}{\mtext{del}\xspace} \newcommand{\set}{\mtext{set}\xspace}
\newcommand{\reset}{\mtext{reset}\xspace}
\newcommand{\schema}{\ensuremath{\tau}\xspace}
\newcommand{\domain}{\ensuremath{D}}
\newcommand{\query}{\ensuremath{Q}}
\newcommand{\db}{\ensuremath{\calD}\xspace}
\newcommand{\inp}{\ensuremath{\calI}\xspace}
\newcommand{\aux}{\ensuremath{\calA}\xspace}
\newcommand{\word}[1]{\text{word}(#1)}
\newcommand{\N}{\ensuremath{\mathbb{N}}}
\newcommand{\bigO}{\ensuremath{\mathcal{O}}}
\newcommand{\df}{\ensuremath{\mathrel{\smash{\stackrel{\scriptscriptstyle{
    \text{def}}}{=}}}} \;}
\newcommand{\ut}[4]{
  \@ifmtarg{#4}{t^{#1}_{#2}(#3) }{t^{#1}_{#2}(#3; #4)}
}
\newcommand{\ite}[3]{
  \@ifmtarg{#1}{
    \mtext{ITE}
   }{
    \mtext{ITE}\text{$(#1,#2,#3)$}  
  }
}
\DeclarePairedDelimiter\size{\lvert}{\rvert}
\newcommand{\ifte}[3]{\textbf{if }#1\textbf{ then }#2\textbf{ else }#3}
\newcommand{\fordo}[2]{\textbf{for }#1\textbf{ pardo }#2}
\newcommand{\ret}[1]{\textbf{return }#1}
\newcommand{\true}{\texttt{true}}
\newcommand{\false}{\texttt{false}}
\newcommand{\algexists}[2]{\textbf{exists}(#1 \ensuremath{\mid} #2)}
\newcommand{\algunique}[2]{\textbf{getUnique}(#1 \ensuremath{\mid} #2)}
\newcommand{\algfor}[2]{\textbf{for }#1\textbf{ do }#2}
\newcommand{\cond}{\texttt{condition}}
\newcommand{\term}{\texttt{term}}
\newcommand{\instance}{\texttt{instance}}
\newcommand  {\myclass} [1]  {\ensuremath{\textsf{\upshape #1}}}
\newcommand{\StaClass}[1]{\myclass{#1}\xspace}
\newcommand{\DynClass}[1]{\myclass{Dyn#1}\xspace}
\newcommand  {\myproblem} [1] {\normalfont{\textsc{#1}}\xspace}
\newcommand{\dynProblemDefinition}[4]{
    \setlength{\tabcolsep}{1.5pt}
    \begin{flushleft}
        \begin{tabular}[t]{r p{0.8\textwidth}}
            \textbf{Problem:} & #1 \\
            \textbf{Input:} & #2 \\
            \textbf{Changes:} & #3 \\
            \textbf{Queries:} & #4 \\
        \end{tabular}
    \end{flushleft}
}
\newcommand     {\AC}   {\StaClass{AC}}
\newcommand{\FO}{\StaClass{FO}}
\newcommand{\FOar}{\StaClass{FO$(\leq,+,\times)$}}
\newcommand{\CQ}[1][]{\StaClass{CQ}}
\newcommand{\UCQ}[1][]{\StaClass{UCQ}}
\newcommand{\CQneg}[1][]{\StaClass{CQ\ensuremath{^{\mneg}}}}
\newcommand{\UCQneg}[1][]{\StaClass{UCQ\ensuremath{^{\mneg}}}}
\newcommand{\mneg}{\neg} 
\newcommand{\DynProp}{\DynClass{Prop}}
\newcommand{\DynFO}{\DynClass{FO}}
\newcommand{\DynFOar}{\DynClass{FO$(\leq,+,\times)$}}
\spnewtheorem{goal}[theorem]{Goal}{\bfseries}{\itshape}
\let\llncsproof\proof
\renewcommand{\proof}[1][]{%
  \ifx!#1!\else\renewcommand{\proofname}{#1}\fi
  \llncsproof
}
\newenvironment{proofsketch}{\begin{proof}[Proof sketch]}{\end{proof}}
\providecommand {\calA}      {{\mathcal A}\xspace}
\providecommand {\calC}      {{\mathcal C}\xspace}
\providecommand {\calD}      {{\mathcal D}\xspace}
\providecommand {\calI}      {{\mathcal I}\xspace}
\providecommand {\calP}      {{\mathcal P}\xspace}
\providecommand {\calS}      {{\mathcal S}\xspace}
 \newcommand{\changeRuleWhere}[5]{\textbf{on change}\ #1\ \textbf{update}\ #2\ \textbf{at}\ #3\ \textbf{as}\ #4\ \textbf{where}\ #5} 
 \newcommand{\queryRule}[2]{\textbf{on query}\ #1\ \textbf{yield}\ #2}
\newcommand{\prog}{\ensuremath{\calP}\xspace}
\newcommand{\prob}{\ensuremath{\Pi}\xspace}
\algnewcommand\algorithmiconchange{\textbf{on change}}
\algnewcommand\algorithmiconquery{\textbf{on query}}
\algnewcommand\algorithmicupdate{\textbf{update}}
\algnewcommand\algorithmicat{\textbf{at}}
\algnewcommand\algorithmicby{\textbf{by}}
\algnewcommand\algorithmicpardo{\textbf{pardo}}
\algnewcommand\algorithmicwhere{\textbf{where}}
\algnewcommand\algorithmicwith{\textbf{with}}
\algnewcommand\algorithmicunique{\textbf{unique}}
\algnewcommand\algorithmicmin{\textbf{min}}
\algnewcommand\algorithmicmax{\textbf{max}}
\algnewcommand{\False}{\textbf{false}}
\algnewcommand{\True}{\textbf{true}}
\algnewcommand{\To}{\textbf{to}}
\algnewcommand{\Select}[4]{\State \ensuremath{#1 \gets #2(#3 \mid #4)}}
\algnewcommand{\Unique}[3]{\Select{#1}{\algorithmicunique}{#2}{#3}}
\algnewcommand{\Min}[3]{\Select{#1}{\algorithmicmin}{#2}{#3}}
\algnewcommand{\Max}[3]{\Select{#1}{\algorithmicmax}{#2}{#3}}
\newcommand{\auxSchema}{\ensuremath{\schema_{\text{aux}}}\xspace}
\newcommand{\workbound}{w}
\newcommand{\commentbox}[1]{\noindent\framebox{\parbox{0.98\linewidth}{#1}}}
\newcommand{\acomment}[2]{\ \\ \fbox{\parbox{0.98\linewidth}{{\sc #1}: #2}}}
\newcommand{\mcomment}[2]{{\color{blue}(#1)}\footnote{#1: #2}} 
\newcommand{\commentbox}[1]{}
\newcommand{\mcomment}[2]{}
\newcommand{\acomment}[2]{}
 \newcommand{\tzm}[1]{\mcomment{TZ}{#1}}
 \newcommand{\tsm}[1]{\mcomment{TS}{#1}}
 \newcommand{\nilsm}[1]{\mcomment{NV}{#1}}
\newcommand{\longversion}[1]{#1}
\tikzstyle{mnode}=[
\tikzstyle{mnodeinvisible}=[
\tikzstyle{invisible}=[
\tikzstyle{invisiblel}=[
\tikzstyle{invisibleEdge}=[
\tikzstyle{nameNode}=[
\tikzstyle{namingNode}=[
\tikzstyle{mEdge}=[
\tikzstyle{dDashedEdge}=[
\tikzstyle{dEdge}=[
\tikzstyle{dhEdge}=[
\tikzstyle{uEdge}=[
\tikzstyle{uhEdge}=[
\tikzstyle{cEdge}=[
\tikzstyle{dotsEdge}=[
\tikzstyle{class rectangle}=[
\tikzstyle{mline}=[
\tikzstyle{mainclass rectangle}=[
\newcommand{\mnodedrawcolor}{black!80}
\newcommand{\mnodefillcolor}{black!40}
\tikzstyle{background rectangle}=[
\begin{document}
\title{Work-sensitive Dynamic Complexity of Formal Languages}
%
%


\author{Jonas Schmidt\inst{1}\and
Thomas Schwentick\inst{1}\and
Till Tantau\inst{2}\and
Nils Vortmeier\inst{3}\and
Thomas Zeume\inst{4}
}
\authorrunning{J. Schmidt, T. Schwentick, T. Tantau, N. Vortmeier,  and T. Zeume}
%
\institute{TU Dortmund University, Germany\\\email{\{jonas2.schmidt,thomas.schwentick\}@tu-dortmund.de} \and
Universität zu Lübeck, Germany\\\email{tantau@tcs.uni-luebeck.de}\and
University of Zurich, Switzerland\\\email{nils.vortmeier@uzh.ch} \and
Ruhr University Bochum, Germany\\\email{thomas.zeume@rub.de} }

\maketitle              
\begin{abstract}
Which amount of parallel resources is needed for updating a query result after changing an input? In this work we study the amount of work required for dynamically answering membership and range queries for formal languages in parallel constant time with polynomially many processors. As a prerequisite, we propose a framework for specifying dynamic, parallel, constant-time programs that require small amounts of work. This framework is based on the dynamic descriptive complexity framework by Patnaik and Immerman.

\keywords{Dynamic complexity \and work \and parallel constant time.}
\end{abstract}
    
    \section{Introduction}\label{section:introduction}
    
Which amount of parallel resources is needed for updating a query result after changing an input, in particular if we only want to spend constant parallel time? 

In classical, non-dynamic computations, parallel constant time is well understood. Constant time on CRAMs, a variant of CRCW-PRAMs used by Immerman \cite{Immerman12}, corresponds to constant-depth in circuits, so, to the circuit class $\AC^0$, as well as to expressibility in first-order logic with built-in arithmetic (see, for instance, the books of Immerman \cite[Theorem 5.2]{Immerman12} and Vollmer \cite[Theorems 4.69 and 4.73]{Vollmer13}). Even more, the amount of work, that is, the overall number of operations of all processors, is connected  to the number of variables required by a first-order formula \cite[Theorem 5.10]{Immerman12}.

However, the work aspect of constant parallel time algorithms is less understood for scenarios where the input is subject to changes. To the best of our knowledge, there is only little previous work on constant-time PRAMs in dynamic scenarios. A notable exception is early work showing that spanning trees and connected components can be computed in constant time by CRCW-PRAMs with $O(n^4)$ and $O(n^2)$ processors, respectively~\cite{SherlekarPR85}.

In an orthogonal line of research, parallel dynamic constant time has been studied from a logical perspective in the dynamic complexity framework by Patnaik and Immerman \cite{PatnaikI94} and Dong, Su, and Topor \cite{DongT92,DongS93}. In this framework, the update of query results after a change is expressed by first-order formulas. The formulas may refer to auxiliary relations, whose updates in turn are also specified by first-order formulas (see Section \ref{section:model} for more details). The queries maintainable in this fashion constitute the dynamic complexity class $\DynFO$. 
Such queries can be updated by PRAMs in constant time with a polynomial number of processors. In this line of work, the main focus in recent years has been on proving that queries are in $\DynFO$, and thus emphasised the constant time aspect.  It has, for instance, been shown that all context-free languages \cite{GeladeMS12} and the reachability query \cite{DattaKMSZ18} are in $\DynFO$.

However, if one tries to make the ``\DynFO approach'' for dynamic problems relevant for practical considerations, the work that is needed to carry out the specified updates, hence the \emph{work} of a parallel algorithm implementing them, is a crucial factor. The current general polynomial upper bounds are too coarse. In this paper, we therefore  initiate the investigation of more work-efficient dynamic programs that can be specified by first-order logic and that can therefore be carried out by PRAMs in constant time. 
To do so, we propose a framework for specifying such dynamic, parallel, constant-time programs, which is based on the \DynFO framework, but allows for more precise (and better) bounds on the necessary work of a program.

\begin{goal}
  Extend the formal framework of dynamic complexity towards the consideration of parallel work.
\end{goal}
Towards this goal, we link the framework we propose to the CRAM framework  in Section \ref{section:model}. In fact, the new framework also takes a somewhat wider perspective, since it does not focus exclusively at one query under a set of change operations, but rather considers dynamic problems that may have several change and query operations (and could even have operations that combine the two). Therefore, from now on we speak about dynamic problems and not about (single) queries.

\begin{goal}
    Find work-efficient \DynFO-programs for dynamic problems that are known to be in \DynFO
    (but whose dynamic programs\footnote{In the field of dynamic complexity the term ``dynamic program'' is traditionally used for the programs for updating the auxiliary data after a change.
    The term should not be confused with the ``dynamic programming'' technique used in algorithm design.}
    are not competitive, work-wise).
\end{goal}
	
Ideally we aim at showing that dynamic problems can be maintained in $\DynFO$ with sublinear or even polylogarithmic work.
One line of attack for this goal is to study dynamic algorithms and to see whether they can be transformed into parallel $\bigO(1)$-time algorithms with small work. There is a plethora of work that achieves polylogarithmic sequential update time (even though, sometimes only amortised), see for instance \cite{AlstrupHusfeldt+2004,FrandsenMiltersen+1997,HolmLT01,HolmR20}. 
For many of these problems, it is known that they can be maintained in constant parallel time with polynomial work, e.g. as mentioned above, it has been shown that connectivity and maintenance of regular (and even context-free) languages is in $\DynFO$.


In  this paper, we follow this approach for dynamic string problems, more specifically, dynamic problems that allow membership and range queries for regular and context-free languages. Our results can be summarised as follows.

We show in Section~\ref{section:regular} that regular languages can be maintained in constant time with $\bigO(n^\epsilon)$ work for all $\epsilon > 0$ and that for star-free languages even work $\bigO(\log n)$ can be achieved. These results hold for range and membership queries.  \tsm{The discussion why star-free makes sense is postponed to the text. Less obvious now, since we say star-free...}

For context-free languages, the situation is not as nice, as we observe in Section~\ref{section:cfl}.
We show that subject to a well-known conjecture, we cannot hope for maintaining membership in general context-free languages in $\DynFO$ with less than $\bigO(n^{1.37-\epsilon})$ work.
The same statement holds even for the bound $\bigO(n^{2-\epsilon})$ and ``combinatorial dynamic programs''.
For Dyck languages,  that is, sets of well-formed strings of parentheses, we show that this barrier does not apply.
Their membership problem can be maintained with $\bigO(n (\log n)^3)$ work in general, and with polylogarithmic work if there is only one kind of parentheses.
By a different approach, range queries can be maintained with work $\bigO(n^{1+\epsilon})$ in general, and $\bigO(n^\epsilon)$ for one parenthesis type.

\emph{Related work.} A complexity theory of incremental time has been developed in~\cite{MiltersenSVT94}.
We discuss previous work on dynamic complexity of formal languages in Sections~\ref{section:regular} and \ref{section:cfl}. 


    \section{Preliminaries}\label{section:preliminaries}
\newcommand{\qd}{\text{qd}}

Since dynamic programs are based on first-order logic, we represent inputs like graphs and strings as well as ``internal'' data structures as logical structures. 

A\emph{
schema} $\schema$ consists of a set of relation symbols and function symbols with a corresponding arity. A constant symbol is a function symbol with arity $0$.
A \emph{structure} $\db$ over schema $\schema$ with finite domain $\domain$ has, for every $k$-ary relation symbol $R \in \schema$,
a relation $R^\db \subseteq \domain^k$, as well as a function $f^\db \colon D^k \to D$ for every $k$-ary function symbol $f \in \schema$.
We allow partially defined functions and write $f^\db(\tpl a) = \bot$ if $f^\db$ is not defined for $\tpl a$ in $\db$. Formally, this can be realized using an additional relation that contains the domain of $f^\db$. 
We occasionally also use functions $f^\db \colon D^k \to D^\ell$ for some $\ell > 1$. Formally, such a function represents $\ell$ functions $f^\db_1, \ldots, f^\db_\ell \colon D^k \to D$ with $f^\db(\tpl a) \df (f^\db_1(\tpl a),\ldots,f^\db_\ell(\tpl a))$.

Throughout this work, the structures we consider provide a linear order $\leq$ on their domain~$\domain$. As we can thus identify $\domain$ with an initial sequence of the natural numbers, we usually just assume that $\domain = [n] \df \{0, \ldots, n-1\}$ for some natural number $n$.

We assume familiarity with first-order logic $\FO$, and refer to \cite{Libkin04} for basics of Finite Model Theory.
In this paper, unless stated otherwise, first-order formulas \emph{always} have access to a linear order on the domain, as well as compatible functions $+$ and $\times$ that express addition and multiplication, respectively. This holds in particular for formulas in dynamic programs.
We use the following ``if-then-else'' construct
:
if $\varphi$ is a formula, and $t_1$ and $t_2$ are terms, then $\ite{\varphi}{t_1}{t_2}$ is a term.
Such a term evaluates to the result of $t_1$ if $\varphi$ is satisfied, otherwise to $t_2$. 
  


Following \cite{GeladeMS12}, we encode words of length (at most) $n$ over an alphabet $\Sigma$ by \emph{word structures}, that is, as relational structures $W$ with universe $\{0, \ldots, n-1\}$,
one unary relation $R_{\sigma}$ for each symbol $\sigma \in \Sigma$ and the canonical linear order $\leq$ on $\{0, \ldots, n-1\}$.
We only consider structures for which, for every position $i$, $R_{\sigma}(i)$ holds for at most one $\sigma \in \Sigma$ and
write $W(i) = \sigma$ if $R_{\sigma}(i)$ holds and $W(i) = \epsilon$ if no such $\sigma$ exists.
We write $\word{W}$ for the word represented by $W$, that is, the concatenation $w = W(0) \circ \ldots \circ W(n-1)$. As an example, the word structure $W_0$ with domain $\{0,1,2,3\}$, $W(1)=a$, $W(3)=b$ and $W(0)=W(2)=\epsilon$ represents the string $ab$. We write $\word{W}[\ell,r]$ for the word  $W(\ell) \circ \ldots \circ W(r)$.

Informally, a \emph{dynamic problem} can be seen as a data type: it consists of some underlying structure together with a set $\Delta$ of operations. We distinguish between \emph{change operations} that can modify the structure and \emph{query operations} that yield information about the structure, but combined operations could be allowed, as well. 
Thus, a dynamic problem is characterised by the schema of its underlying structures and the operations that it supports.\footnote{This view is a bit broader than the traditional setting of Dynamic Complexity, where there can be various change operations but usually only one fixed query is supported.}

In this paper, we are particularly interested in dynamic language problems, defined as follows. Words are represented as word structures $W$ with elementary change operations $\set_\sigma(i)$ (with the effect that $W(i)$ becomes $\sigma$ if it was $\epsilon$ before) and  $\reset(i)$ (with the effect that $W(i)$ becomes $\epsilon$).

For some fixed language $L$ over some alphabet $\Sigma$, the dynamic problem $\RangeMember{L}$ further supports one query operation $\qrange(\ell,r)$. It yields the result true, if $\word{W}[\ell,r]$   is in $L$, and otherwise false.\tsm{Please let us use $\ell$ instead of $l$.}

In the following, we denote a word structure $W$ as a sequence $w_0 \ldots w_{n-1}$ of letters with $w_i \in \Sigma \cup \{\epsilon\}$ in order to have an easier, less formal notation.
Altogether, the dynamic problem $\RangeMember{L}$ is defined as follows.\tsm{I wonder whether it should be renamed, e.g.,  as DynLang or DynRangeLang. Or DynRangeMember?}\tsm{I hope we can have some background colours here, like in tables. }\tzm{need some formatting}

\dynProblemDefinition{$\RangeMember{L}$}
    {A sequence $w = w_0 \ldots w_{n-1}$ of letters with $w_i \in \Sigma \cup \{\epsilon\}$}
    {\begin{qclist}
        \qclitem{$\set_\sigma(i)$ for $\sigma \in \Sigma$:}{}{Sets $w_i$ to $\sigma$, if $w_i = \epsilon$}
        \qclitem{$\reset(i)$:}{}{Sets $w_i$ to $\epsilon$}
    \end{qclist}}
    {\begin{qclist}
        \qclitem{$\qrange(\ell,r)$:}{}{Is $w_\ell \circ \cdots \circ w_r \in L$?}
    \end{qclist}}
In this example, the query $\qrange$ maps  (binary) pairs of domain elements to a truth value and thus  defines a (binary) relation over the universe of the input word structure.
We call such a query \emph{relational}. We will also consider  \emph{functional} queries mapping tuples of elements to elements.

Another dynamic problem considered here is $\Member{L}$ which is defined similarly as $\RangeMember{L}$ but instead of $\qrange$ only has the Boolean query operation $\qmember$ that yields true if $w_0 \circ \ldots \circ w_{n-1} \in L$ holds.

    \section{Work-sensitive Dynamic Complexity}\label{section:model}
    \newcommand{\Rep}{U}

Since we are interested in the work that a dynamic program does, our specification mechanism for dynamic programs is considerably more elaborated than the one used in previous papers on dynamic complexity.
We introduce the mechanism in this section in two steps. First  the general form of dynamic programs and then  a more pseudo-code oriented syntax. 
Afterwards, we discuss how these dynamic programs translate into work-efficient constant-time parallel programs.

\subsection{The Dynamic Complexity Framework}
\label{section:model:DynFO}

Our  general form of dynamic programs mainly follows \cite{SchwentickZ16}, but is adapted to the slightly broader view of a dynamic problem as a data type.
For a more gentle introduction to dynamic complexity, we refer to \cite{SchwentickVZ20}.

The goal of a \emph{dynamic program} for a dynamic problem $\prob$ is to support all its operations~$\Delta$. To do so, it stores and updates an auxiliary structure $\aux$ over some schema $\auxSchema$, over the same domain as the input structure $\inp$ for $\prob$. 

 
A (first-order) dynamic program $\prog$ consists of a set of (first-order) \emph{update rules} for change operations and \emph{query rules} for query operations.\tsm{I stick to \emph{query rule}, here. As for change rules, a query rule uses a formula (or a program, for that matter).} More precisely, a program has one query rule over schema $\auxSchema$ per query operation that specifies how the (relational) result of that operation is obtained from the auxiliary structure. 
Furthermore, for each change operation $\delta \in \Delta$, it has one update rule per auxiliary relation or function that specifies the updates after a change based~on~$\delta$. 

A query rule is of the form
$\queryRule{\query(\tpl{p})}
  {\varphi_\query(\tpl{p})},
$
where $\varphi_\query$ is the (first-order) \emph{query formula} with free variables from $\tpl{p}$.

An update rule for a $k$-ary auxiliary relation $R$ is of the form
\[
    \changeRuleWhere{\delta(\tpl{p})}
        {R}{(t_1(\tpl{p};\tpl{x}), \ldots, t_k(\tpl{p};\tpl{x}))}
        {\varphi_\delta^R(\tpl{p};\tpl{x})}
       {C(\tpl{x})}.
\]
Here, $\varphi^R_\delta$ is the (first-order) \emph{update formula},
$t_1, \ldots, t_k$ are first-order terms (possibly using the $\mtext{ITE}$ construct) over $\auxSchema$, and
$C(\tpl{x})$, called a \emph{constraint} for the tuple \mbox{$\tpl x = x_1, \ldots, x_\ell$} of variables, is a conjunction of inequalities $x_i \le f_i(n)$ using functions $f_i \colon \N \to \N$, where $n$ is the size of the domain and $1 \le i \le \ell$.
We demand that all functions $f_i$ are first-order definable from $+$ and $\times$. 

The effect of such an update rule after a change operation $\delta(\tpl a)$ is as follows:
the new relation $R^{\calA'}$ in the updated auxiliary structure $\calA'$ contains all tuples from $R^\calA$ that are \emph{not} equal to $(t_1(\tpl{a};\tpl{b}), \ldots, t_k(\tpl{a};\tpl{b}))$ for any tuple $\tpl{b}$ that satisfies the constraints $C$; and additionally $R^{\calA'}$ contains all tuples $(t_1(\tpl{a};\tpl{b}), \ldots, t_k(\tpl{a};\tpl{b}))$ such that $\tpl b$ satisfies $C$ and $\aux \models \varphi_\delta^R(\tpl a; \tpl b)$ holds.

Phrased more operationally, an update is performed by enumerating all tuples $\tpl b$ that satisfy $C$, evaluating $\varphi_\delta^R(\tpl a; \tpl b)$ on the old auxiliary structure $\aux$, and depending on the result adding the tuple $(t_1(\tpl{a};\tpl{b}), \ldots, t_k(\tpl{a};\tpl{b}))$ to $R$ (if it was not already present), or removing that tuple from $R$ (if it was present).

Update rules for auxiliary functions are similar, but instead of an update formula that decides whether a tuple of the form $(t_1(\tpl{a};\tpl{b}), \ldots, t_k(\tpl{a};\tpl{b}))$ is contained in the updated relation, it features an update term that determines the new function value for a function argument of the form $(t_1(\tpl{a};\tpl{b}), \ldots, t_k(\tpl{a};\tpl{b}))$. 

We say that $\prog$ is a  dynamic program for a dynamic problem $\prob$  if it supports all its operations and, in particular, always yields correct results for query operations. More precisely, if the result of applying a query operation after a sequence $\alpha$ of change operations on an initial structure  $\inp_0$ yields the same result as the evaluation of the query rule on the auxiliary structure that is obtained by applying the update rules corresponding to the change operations in $\alpha$ to an initial auxiliary structure $\aux_0$.
%
Here, an initial input structure $\inp_0$ over some domain $\domain$ is \emph{empty}, that is, it is a structure with empty relations and with all function values being undefined ($\bot$). 
The initial auxiliary structure $\aux_0$ is over the same domain $\domain$ as $\inp_0$ and is defined from $\inp_0$ by some \FO-definable initialization.

By $\DynFO$, we denote the class of all dynamic problems that have a dynamic program in the sense we just defined.\tsm{\DynFO or $\DynFOar$?}

\subsection{A syntax for work-efficient dynamic programs}

In this paper we are particularly interested in dynamic programs that require little work to update the auxiliary structure after every change operation
and to compute the result of a query operation.
However, since dynamic programs do not come with an execution model, there is no direct way to define, say, when a \DynFO-programs has polylogarithmic-work, syntactically.
\longversion{But it is even not clear how a \emph{semantic} definition could be obtained: the obvious approach to require that the program has an implementation as a parallel program that only needs  polylogarithmic-work does not work, since it is not clear how to define when  a parallel program \emph{implements} a given dynamic program. This difficulty occurs in particular if one wants to prove that some problem does \emph{not} have a work-efficient dynamic program.

  Since, we are not interested in lower bounds in this paper, we follow a pragmatic approach here.}
We define a pseudo-code-based syntax for \emph{update} and \emph{query procedures} that will be used in place of the update and query \emph{formulas} in rules of dynamic programs. This syntax has three important properties: (1) it is reasonably well readable (as opposed to strict first-order logic formulas), (2) it allows a straightforward translation of rules into proper \DynFO-programs, and (3) it allows to associate a ``work-bounding function'' to each rule and to translate it into a PRAM program with $\bigO(1)$ parallel time and work bounded by this function.    

The  syntax of the pseudo-code has similarities with Abstract State Machines~\cite{Boerger05} and the PRAM-syntax of \cite{JaJa1992}. For simplicity, we describe a minimal set of syntactic elements that suffice for the dynamic programs in this paper. We encourage readers to have a look at Section~\ref{section:datastructures} for examples of update rules with pseudo-code syntax. 
\longversion{
  However, we mention some extensions that could be considered for more complicated programs.
}

We only spell out a syntax for \emph{update procedures} that can be used in place of the update formula $\varphi_\delta^R(\tpl{p};\tpl{x})$ of an update rule
\[
    \changeRuleWhere{\delta(\tpl{p})}
        {R}{(t_1(\tpl{p};\tpl{x}), \ldots, t_k(\tpl{p};\tpl{x}))}
        {\varphi_\delta^R(\tpl{p};\tpl{x})}
        {C(\tpl{x})}.
\]
Query procedures are defined similarly, but they can not invoke any change operations for supplementary instances, and their only free variables are from $\tpl p$.

We allow some compositionality: a dynamic program on some \emph{main instance} can use  \emph{supplementary instances} of other dynamic problems  and  invoke change or query operations of other dynamic programs on those instances. These supplementary instances are declared on a global level\tzm{Is this specified somewhere?} of the dynamic program and each has an associated identifier. 

Update procedures $P=P_1; P_2$ consist of two parts.  In the \emph{initial procedure} $P_1$ no reference to the free variables from $\tpl{x}$ are allowed, 
but change operations for 
supplementary instances can be invoked. We require that, for each change operation $\delta$ of the main instance 
and each supplementary  instance $\calS$, at most one update rule for $\delta$ invokes change operations for $\calS$.
\longversion{In general, some more flexibility could be added, e.g., additional local instances.\tzm{Not sure whether at this point this is helpful for readers. Maybe as footnote?}}
\tzm{It might be helpful to have a small example of an update procedure that shows how this might look like. Does not have to be meaningful. Or maybe refer to a later algorithm where many parts are used? And describe here which lines belong to initial and main procedure, which lines are instance invocations etc...}

In the \emph{main procedure} $P_2$, no change operations for supplementary instances  can be invoked, but references to $\tpl{x}$ are allowed.

More precisely, both $P_1$ and $P_2$ can use (a series of) instructions of the following forms:

\begin{itemize}
    \item assignments $f(\tpl y) \gets \term$ of a function value, 
    \item assignments $R(\tpl y) \gets \cond$ of a Boolean value, 
    \item conditional branches \ifte{\cond}{$P'$}{$P''$}, and
    \item parallel branches \fordo{$z \leq g(n)$}{$P'$}.
\end{itemize}

    Semantically, here and in the following $n$ always refers to the size of the domain of the main instance.
     The initial procedure $P_1$ can further use change invocations $\instance.\delta(\tpl y)$.
     \longversion{
    \begin{itemize}
   \item change invocations $\instance.\delta(\tpl y)$.
   \end{itemize}
   }
   However, they  are not allowed in the scope of parallel branches. And we recall that in $P_1$ no variables from $\tpl{x}$ can be used.
   
       The main procedure $P_2$ can further use return statements \ret{$\cond$} or \ret{$\term$}, but not inside parallel branches.
     \longversion{
   \begin{itemize}
    \item return statements \ret{$\cond$} or \ret{$\term$}, but not inside parallel branches.
    \end{itemize}
  }
  
    Of course, initial procedures can only have initial procedures $P'$ and $P''$ in conditional and parallel branches, and analogously for main procedures. 
    
Conditions and terms are defined as follows.
In all cases, $\tpl y$ denotes a tuple of terms and $z$ is a \emph{local variable}, not occurring in $\tpl{p}$ or $\tpl{x}$. 
In general, a \emph{term} evaluates to a domain element (or to $\bot$). It is built from
\begin{itemize}
    \item  local variables and variables from $\tpl p$ and $\tpl x$,
   \item  function symbols from $\auxSchema$ and previous function assignments,
     \item if-then-else terms \ifte{\cond}{\term$'$}{\term$''$},
    \item functional queries $\instance.\query(\tpl y)$, and
    \item expressions \algunique{$z \leq g(n)$}{\cond}. 
\end{itemize}
For the latter expression it is required that there is always exactly one domain element $a \le g(n)$ satisfying \cond.  

A \emph{condition} evaluates to $\true$ or $\false$. It may be
\begin{itemize}
    \item an atomic formula with relation symbols from $\auxSchema$ or previous assignments, with terms as above, 
    \item an expression \algexists{$z \leq g(n)$}{\cond}, 
    \item a relational query $\instance.\query(\tpl y)$ with terms $\tpl y$, and
    \item a Boolean combination of conditions. 
\end{itemize}

All functions $g \colon \N \to \N$ in these definitions are required to be \FO-definable.
For assignments of relations $R$ and functions $f$ we demand that these symbols do \emph{not} appear in $\auxSchema$.
If an assignment with a head $f(\tpl y)$ or $R(\tpl y)$ occurs in the scope of a parallel branch that binds variable $z$, then $z$ has to occur as a term $y_i$ in $\tpl y$.
We further demand that update procedures are well-formed, in the sense that
every execution path ends with a return statement of appropriate type.

In our pseudo-code algorithms, we display  update procedures $P=P_1;P_2$ with initial procedure $P_1$ and main procedure $P_2$ as
\begin{algorithmic}[0]
  \OnChangeWith{$\delta(\tpl p)$}{$P_1$}
        \UpdateAtWhereBy{$R$}{$(t_1(\tpl p, \tpl x), \ldots, t_k(\tpl p, \tpl x))$}{$C(\tpl x)$}
            $P_2$.
        \EndUpdateAtWhereBy
    \EndOnChangeWith
\end{algorithmic}
to emphasise that $P_1$ only needs to be evaluated once for the update of $R$, and not once for every different value of $\tpl x$.


In a nutshell, the semantics of an update rule
\[
    \changeRuleWhere{\delta(\tpl{p})}
        {R}{(t_1(\tpl{p};\tpl{x}), \ldots, t_k(\tpl{p};\tpl{x}))}
        {P}
        {C(\tpl{x})}
      \]
  is defined as in Subsection~\ref{section:model:DynFO}, but 
      $\aux \models \varphi_\delta^R(\tpl a, \tpl b)$ has to be replaced by the condition that $P$ returns true under the assignment $(\tpl p\mapsto \tpl a;\tpl x\mapsto \tpl b)$.



For update rules for auxiliary functions, $P$ returns the new function value instead of a Boolean value.

Since $P_1$ is independent of $\tpl x$, in the semantics, it is only evaluated once. In particular, any change invocations are triggered only once.





\newcommand{\PseudoDynFO}{\ensuremath{\text{Procedural-DynFO}}}
We refer to the above class of dynamic update programs as \PseudoDynFO-programs.
Here and later we will introduce abbreviations as syntactic sugar, for example the sequential loop \algfor{$z \leq m$}{$P$}, where $m \in \N$ needs to be a fixed natural number.

We show next that update and query procedures can be translated into constant-time CRAM programs. Since the latter can be translated into \FO-formulas \cite[Theorem 5.2]{immermanDC}, therefore \PseudoDynFO-programs can be translated in \DynFO-programs.

\longversion{
It is not hard to see that \PseudoDynFO-programs can be transformed into \DynFO-programs, as stated in the following proposition.

\begin{proposition}\label{prop:PseudoDynFOvsDynFO}
    If a dynamic problem $\prob$ can be specified by a \PseudoDynFO-program then $\prob \in \DynFO$.
  \end{proposition}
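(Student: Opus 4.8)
The plan is to prove the statement by a structural translation that turns every update rule and every query rule of a \PseudoDynFO-program into an equivalent first-order update resp.\ query formula, so that the resulting collection of formulas is a \DynFO-program for $\prob$. Because procedures may invoke operations on \emph{supplementary} instances, which are themselves specified by \PseudoDynFO-programs, I would first organise the argument as an induction on the (assumed well-founded, i.e.\ acyclic) hierarchy of supplementary-instance declarations. In the base case a program uses no supplementary instances; in the inductive step each supplementary instance is, by the induction hypothesis, already in \DynFO and thus maintains its own auxiliary structure by \FO-formulas. The auxiliary schema of the combined \DynFO-program is then the disjoint union of the $\auxSchema$ of the main instance and the auxiliary schemas of all supplementary instances.

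For the core of the translation I would follow the route indicated before the statement: show that each update procedure and each query procedure can be compiled into a constant-time CRAM program over the representation of $\aux$, and then invoke the equivalence of constant-time CRAMs with \FO~\cite[Theorem 5.2]{immermanDC} to obtain an \FO-formula. Equivalently, and perhaps more transparently, one compiles directly into \FO by maintaining the invariant that at every program point the current contents of each freshly assigned relation or function, together with the condition under which that point is reached, is \FO-definable over $\auxSchema$ (with arithmetic) in terms of $\tpl p$, $\tpl x$ and the enclosing parallel-loop variables. This invariant is preserved by all available constructs: assignments substitute the current \FO-representation into the head symbol; conditional branches \ifte{\cond}{P'}{P''} combine the branch representations guarded by the \FO-definable condition; sequential loops \algfor{$z \le m$}{P} with a fixed $m \in \N$ unfold into constantly many copies; and \algexists{$z \le g(n)$}{\cond} and \algunique{$z \le g(n)$}{\cond} are directly \FO-definable since $g$ is \FO-definable. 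For parallel branches \fordo{$z \le g(n)$}{P'} one uses the well-formedness requirement that any assignment head inside the branch mentions $z$: this guarantees that distinct threads write to disjoint cells, so the updated fresh symbol stays single-valued and \FO-definable with $z$ as an additional argument. Since no unbounded iteration rewrites a symbol, only constant nesting of substitutions occurs and the invariant is maintained. In particular the Boolean value returned by an update procedure $P$ is captured by an \FO-formula $\varphi^R_\delta(\tpl p;\tpl x)$ and the update terms $t_i$ are \FO-terms, so the prescribed semantics of the rule is expressed by the single update formula
\[
R^{\calA'}(\tpl u) \liff \exists \tpl x\bigl(C(\tpl x) \mand \tpl u = \tpl t(\tpl a;\tpl x) \mand \varphi^R_\delta(\tpl a;\tpl x)\bigr) \mor \bigl(R^{\calA}(\tpl u) \mand \neg\exists \tpl x\, (C(\tpl x) \mand \tpl u = \tpl t(\tpl a;\tpl x))\bigr),
\]
and analogously for auxiliary functions (with the update term replacing $\varphi^R_\delta$) and for query rules.

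It remains to account for invocations on supplementary instances. A relational or functional query $\instance.\query(\tpl y)$ occurring in a condition or term I would replace by the \FO query formula resp.\ term that the supplementary instance provides by the induction hypothesis, evaluated on its auxiliary structure; since \FO is closed under substitution of \FO-definable relations and functions, the core translation goes through unchanged. Change invocations $\instance.\delta(\tpl y)$ occur only in the initial procedure $P_1$, outside the scope of parallel branches and independent of $\tpl x$, so each main change triggers, for every supplementary instance $\calS$, a fixed program-bounded number of changes to $\calS$ in a well-defined order; the requirement that at most one update rule for a given main change invokes changes for $\calS$ makes this sequence unambiguous. Composing the corresponding constantly many \FO-update steps of $\calS$ yields, again by closure of \FO under composition, a single \FO-update of the auxiliary structure of $\calS$ per main change. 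Collecting the main update formulas, the folded supplementary update formulas and the query formula then gives a \DynFO-program for $\prob$.

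I expect the main obstacle to be precisely this bookkeeping around supplementary instances: arguing that the number of change invocations triggered per main change is bounded by a constant of the program (using that change invocations lie outside parallel branches and that sequential loop bounds are fixed), that the triggered sequence is well-defined (using the ``at most one rule'' restriction and the $\tpl x$-independence of $P_1$), and that folding a constant-length sequence of \FO-updates back into a single \DynFO update step is legitimate. The remaining part—that loop-free, constantly-nested procedures over \FO-definable primitives compile into \FO—is routine once the symbolic-state invariant is set up, with the only genuine care needed for the write-disjointness of parallel assignments.
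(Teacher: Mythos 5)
Your proposal is correct, and its declared core is exactly the paper's own argument: the paper settles this proposition by translating update and query rules into constant-time CRAM programs (Proposition~\ref{prop:PseudoDynFOvsPRAM}) and then citing the known equivalence of constant-time CRAMs with \FO-formulas \cite[Theorem 5.2]{immermanDC}. What you add beyond that is a genuinely different, self-contained second track: a direct syntactic compilation of procedures into \FO via a symbolic-state invariant, together with explicit handling of everything the paper leaves implicit --- the induction over the (acyclic) hierarchy of supplementary instances, substitution of their \FO query formulas into conditions and terms, the observation that each main change triggers only a constant, unambiguous sequence of change invocations per supplementary instance (these sit in $P_1$, outside parallel branches, with at most one rule per change and instance allowed to invoke them), and the folding of that constant-length sequence of \FO-updates into a single \DynFO update step. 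Your explicit update formula correctly reproduces the paper's stated rule semantics, and your appeal to the head-variable requirement for parallel branches is precisely what guarantees write-disjointness and hence single-valuedness of the compiled assignments. The trade-off is clear: the paper's route is shorter because the CRAM-to-\FO step is quoted as a black box and all constructs (parallel branches, exists/getUnique expressions, instance invocations) map naturally onto CRAM parallelism, whereas your direct route avoids the detour through the machine model and makes the compositional bookkeeping explicitly checkable. The one detail you could still add is that the combined \DynFO-program needs an \FO-definable initialization of the joint auxiliary structure (main plus supplementary, including static functions such as tree navigation); this is immediate from the definitions and does not affect correctness.
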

However, we do not need to prove this proposition, since we show next that \PseudoDynFO-programs can be translated into constant-time CRAMs and it is known that such programs can be translated into \FOar-formulas \cite[Theorem 5.2]{immermanDC}.
}
\subsection{Implementing \PseudoDynFO-programs as PRAMs}

We use \emph{Parallel Random Access Machines} (PRAMs) as the computational model
to measure the work of our dynamic programs.
A PRAM consists of a number of processors that work in parallel and use a shared memory.
We only consider \emph{CRAMs}, a special case  of Concurrent-Read Concurrent-Write model (CRCW PRAM),
i.e. processors are allowed to read and write concurrently from and to the same memory location,
but if multiple processors concurrently write the same memory location, then all of them need to write the same value.
For an input of size $n$ we denote the \emph{time} that a PRAM algorithm needs to compute the solution as $T(n)$.
The \emph{work} $W(n)$ of a PRAM algorithm is the sum of the number of all computation steps of all processors made during the computation.
For further details we refer to \cite{immermanDC,JaJa1992}.

It is easy to see that \PseudoDynFO{} programs $\calP$ can be translated into $\bigO(1)$-time CRAM-programs $\calC$.
To be able to make a statement about (an upper bound of) the work of $\calC$, we associate a function $\workbound$ with update rules and
show that every update rule $\pi$ can be implemented by a $\bigO(1)$-time CRAM-program with work $\bigO(\workbound)$.
Likewise for query rules.

In a nutshell, the work of an update procedure mainly depends on the scopes of the (nested) parallel branches and the amount of work needed to query and update the supplementary instances.
The work of a whole update rule is then determined by adding the work of the initial procedure once and adding the work of the main procedure for each tuple that satisfies the constraint of the update rule.

The function $\workbound$ is defined as follows. 
Let the update rule $\pi$ be of the form 
\begin{algorithmic}[0]
    \OnChange{$\delta(\tpl p)$}
        \UpdateAtWhereBy{$R$}{$(t_1(\tpl p, \tpl x), \ldots, t_k(\tpl p, \tpl x))$}{$x_1 \le g_1(n) \land \ldots \land x_\ell \le g_\ell(n)$}
            $P$
        \EndUpdateAtWhereBy
    \EndOnChange
  \end{algorithmic}
  with $P=P_1;P_2$ consisting of an initial procedure $P_1$ and a main procedure $P_2$.
  For simplicity we require that for each variable $x_i$ there is an inequality $x_i\le g_i(n)$, but it could be $g_i(n)=n$.
We set $\workbound(\pi) \df \max(\workbound(P_1), g_1(n) \cdot \ldots \cdot g_\ell(n) \cdot \workbound(P_2))$, where $\workbound(P)$   is inductively defined as follows.

For terms $t$ and conditions $C$,  we define $\workbound(t)$ and $\workbound(C)$, inductively. 

\[
    \workbound(t) \df \begin{cases}
    		1 &\text{if $t$ is a constant or a variable}\\
        \max(\workbound(t_1), \ldots, \workbound(t_\ell))  &\text{if $t = f(t_1, \ldots, t_\ell)$,}\\
        \max(\workbound(C), \workbound(t_1), \workbound(t_2))         &\text{if $t = \ifte{C}{t_1}{t_2}$}\\
        g(n) \cdot \workbound(C)                  &\text{if $t = \algunique{z \leq g(n)}{C}$.}
    \end{cases}
\]

\[
    \workbound(C) \df \begin{cases}
      1                                           &\text{if $C$ is atomic,}\\
      \max(\workbound(C_1),\ldots, \workbound(C_m)) &\text{if $C$ is a Boolean combination} \\ & \quad \text{of conditions } C_1,\ldots,C_m,\\
      g(n) \cdot \workbound(C')                  &\text{if $C = \algexists{z \leq g(n)}{C'}$},\\
      \workbound(\pi') & \text{if $C$ is $\instance.\query(\tpl y)$ and $\pi'$ is} \\ & \quad \text{its query rule.}
    \end{cases}
  \]
Furthermore,
\begin{itemize}
    \item $\workbound(f(\tpl y) \gets C) \df \workbound(C)$,
    \item $\workbound(R(\tpl y) \gets t) \df \workbound(t)$,
  \item $\workbound(\ifte{C}{P'}{P''}) \df \max(\workbound(C),\workbound(P'),\workbound(P''))$,
    \item $\workbound(\fordo{z \leq g(n)}{P'}) \df g(n) \cdot \workbound(P')$,
    \item $\workbound(\ret{\varphi}) \df 1$,
    \item $\workbound(\instance.\delta(\tpl y)) \df \max(\workbound(\pi_1),\ldots, \workbound(\pi_m))$, where $\pi_1,\ldots,\pi_m$ are the update rules for change operation $\delta$.
\end{itemize}    

\begin{proposition}\label{prop:PseudoDynFOvsPRAM}
    For every update rule $\pi$ a $\bigO(1)$-time PRAM-program  with work $\bigO(\workbound(\pi))$ can be constructed. Likewise for query rules.
  \end{proposition}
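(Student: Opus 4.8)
The plan is to prove the statement by structural induction on the syntax of procedures, terms, and conditions, following exactly the inductive definition of $\workbound$. I would establish the stronger invariant that \emph{every} syntactic object $X$ occurring in a rule --- be it a term $t$, a condition $C$, an instruction, a (sub-)procedure $P$, or a whole rule $\pi$ --- compiles into a CRAM fragment that runs in $\bigO(1)$ parallel time, performs $\bigO(\workbound(X))$ work, and writes into its designated cells the value that the semantics assigns to $X$. Since conditions may invoke queries $\instance.\query(\tpl y)$ on supplementary instances, whose query rules are again procedures, the induction is carried out with the supplementary-instance hierarchy as the outer (well-founded) order and the syntactic structure as the inner order; here it is essential that supplementary instances are declared globally and do not depend on each other cyclically, so that this combined order is well-founded.

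For the base cases, a constant or a variable is read from its cell by a single processor ($\bigO(1)$ work), and an atomic condition over $\auxSchema$ or previous assignments is evaluated by a constant number of reads, matching $\workbound = 1$; the built-in $+$, $\times$, $\leq$ and the \FO-definable functions are available as constant-time CRAM primitives. The crucial constructs are the aggregations $\algexists{z \leq g(n)}{C'}$ and $\algunique{z \leq g(n)}{C}$. For the former I would spawn $g(n)$ independent groups of processors, one per candidate $z \leq g(n)$, let each evaluate $C'$ in parallel by the induction hypothesis (time $\bigO(1)$, work $\bigO(\workbound(C'))$ each), initialise a result cell to \false, and have every group that found $C'$ true write \true{} into it; because all successful groups write the \emph{same} value, this is a legal concurrent write computing the disjunction in additional $\bigO(1)$ time, for total work $g(n)\cdot\bigO(\workbound(C')) = \bigO(\workbound(C))$. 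The \textbf{unique} construct is identical, except that the guarantee that exactly one $z$ satisfies \cond{} ensures that precisely one group writes, so its value is placed in the result cell with no conflict, in time $\bigO(1)$ and work $g(n)\cdot\bigO(\workbound(C))$.

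For the composite constructs the CRAM is assembled by placing the fragments for the sub-objects in parallel or in sequence as dictated by the construct, and the time stays $\bigO(1)$ because the syntax tree of a fixed program has constant size and depth independent of $n$. A function application $f(t_1,\dots,t_\ell)$ or a Boolean combination evaluates its (constantly many) sub-objects in parallel and then combines them, so its work is the maximum of the sub-works; an if-then-else evaluates the condition and both branches and selects; a parallel branch $\fordo{z \leq g(n)}{P'}$ runs $g(n)$ side-by-side copies of the fragment for $P'$, multiplying the work by $g(n)$; assignments inherit the work of their right-hand side; a return statement costs $\bigO(1)$; and a change invocation $\instance.\delta(\tpl y)$ is realised by inlining the fragments of the update rules $\pi_1,\dots,\pi_m$ for $\delta$, whose work is the stated maximum. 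Each case reproduces literally the corresponding clause in the definition of $\workbound$, so the induction goes through.

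Finally, for a whole update rule $\pi$ I would execute the compiled fragment for the initial procedure $P_1$ once (work $\bigO(\workbound(P_1))$), and then, in parallel over all at most $g_1(n)\cdots g_\ell(n)$ tuples $\tpl b$ satisfying the constraint $C$, execute the fragment for the main procedure $P_2$ (each of work $\bigO(\workbound(P_2))$) and write the head tuple $(t_1(\tpl a;\tpl b),\dots,t_k(\tpl a;\tpl b))$ together with the returned value into $R$. The total work is $\bigO(\workbound(P_1)) + g_1(n)\cdots g_\ell(n)\cdot\bigO(\workbound(P_2))$, within a constant factor of $\max(\workbound(P_1),\, g_1(n)\cdots g_\ell(n)\cdot\workbound(P_2)) = \workbound(\pi)$; a query rule is handled by the same induction, evaluating its query procedure once on the given argument tuple with no initial procedure or constraint. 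I expect the main obstacle to be the bookkeeping that keeps every concurrent write consistent --- in particular, verifying that writes into $R$ arising from distinct tuples $\tpl b$ that produce the same head tuple never disagree, which is precisely what well-formedness of the program must guarantee --- together with setting up the induction so that it stays well-founded across the mutually recursive dependence between query rules and the conditions that call them through supplementary instances.
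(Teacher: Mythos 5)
Your proposal is correct and follows essentially the same route as the paper's proof sketch: the top-level split (evaluate $P_1$ once, then run $P_2$ in parallel over the $g_1(n)\cdots g_\ell(n)$ constrained tuples, writing the head tuples into $R$) combined with a construct-by-construct compilation in which parallel branches, \textbf{exists}- and \textbf{unique}-expressions, and supplementary-instance operations become parallel PRAM branching whose work matches the corresponding factors in the definition of $\workbound$. The paper leaves the structural part as a ``straightforward but tedious'' sketch, so your fully worked-out induction (including the well-foundedness across supplementary instances and the concurrent-write discipline) is a more careful rendering of the same argument, not a different one.
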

  A sketch for the straightforward proof can be found in the appendix. 


\tsm{Proof sketch moved to appendix}

It follows from Proposition~\ref{prop:PseudoDynFOvsPRAM} that a dynamic program $\prog$ can be implemented by a $\bigO(1)$-time PRAM-program with work $\bigO(f)$, if for each update rule $\pi$ of $\prog$ it holds $\workbound(\pi)\le f$.
\tsm{This is \emph{not} the same as picking a maximum $\pi$ since that might not exist.}


    \section{A simple work-efficient Dynamic Program}\label{section:datastructures}
    \newcommand{\fst}{\text{1st}}
\newcommand{\snd}{\text{2nd}}
\newcommand{\anc}{\text{anc}}
\newcommand{\leaf}{\text{leaf}}
\newcommand{\rootT}{t}
\newcommand{\rootE}{\text{root}}

In this section we consider a simple dynamic problem with a fairly
work-efficient dynamic program. It serves as an example for our framework but will also be used as a subroutine in later sections.

The dynamic problem is to maintain a subset $K$ of an ordered set $D$ of elements under insertion and removal of elements in $K$, allowing for navigation from an element of $D$ to the next larger and
smaller element in~$K$. That is, we consider the following dynamic problem:

\newcommand{\NextInK}{\myproblem{NextInK}} 
\dynProblemDefinition{\NextInK}
    {A set $K \subseteq D$ with canonical linear order $\le$ on $D$}
    {\begin{qclist}
        \qclitem{$\ins(i)$:}{}{Inserts $i \in D$ into $K$}
        \qclitem{$\del(i)$:}{}{Deletes $i \in D$ from $K$}
    \end{qclist}}
    {\begin{qclist}
        \qclitem{$\mtext{pred}(i)$:}{}{Returns predecessor of $i$ in $K$, that is, $\max\{j \in K \mid i > j\}$}
        \qclitem{$\mtext{succ}(i)$:}{}{Returns successor of $i$ in $K$, that is, $\min\{j \in K \mid i < j\}$}
      \end{qclist}}

    For the smallest (largest) element the result of a $\mtext{pred}$ ($\mtext{succ}$) query is undefined, i.e. $\bot$.
    For simplicity, we assume in the following that $D$ is always of the form $[n]$, for some~$n\in\N$.

Sequentially, the changes and queries of \NextInK can be handled in sequential time $\bigO(\log \log n)$ \cite{FrandsenMiltersen+1997}.
Here we show that the problem also has a dynamic program with parallel time $\bigO(1)$ and work $\bigO(\log n)$. 
\begin{lemma}\label{lem:NextInKWork}
  There is a \DynFO-program for  \NextInK  with $\bigO(\log n)$ work per change and query operation.
\end{lemma}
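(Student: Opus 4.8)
The plan is to maintain an implicit complete binary tree over the domain $[n]$ (we may assume $n = 2^k$ after padding to the next power of two, which at most doubles $n$) whose leaves are the elements of $D$ and whose internal nodes correspond to binary prefixes. I would keep three kinds of auxiliary data, all indexed \FO-definably by a node $(\ell,P)$ — a length $\ell$ together with the value $P$ of the length-$\ell$ prefix, so that $\bigO(1)$-work lookups are possible: a relation $\mathrm{occ}$ recording whether the subtree below a node contains an element of $K$; functions $\mathrm{lo}$ and $\mathrm{hi}$ storing the smallest and largest element of $K$ below a node (its ``descendant pointers''); and functions $\mathrm{prev}$ and $\mathrm{next}$ that, on the elements of $K$, realise the sorted doubly linked list of $K$. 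Note that for a leaf $i$ the length-$\ell$ prefix is $\lfloor i/2^{k-\ell}\rfloor$, computable with $\bigO(1)$ work, and ``$x$ lies below $(\ell,P)$'' is just $\lfloor x/2^{k-\ell}\rfloor = P$, so every node primitive is $\bigO(1)$-work.

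For the queries I would treat the case $i\in K$ separately, answering $\mathrm{pred}(i)$ and $\mathrm{succ}(i)$ directly from $\mathrm{prev}(i)$ and $\mathrm{next}(i)$. For $i\notin K$ the key observation is monotonicity: along the path from the root to leaf $i$ the predicate ``$\mathrm{occ}$ holds at the length-$\ell$ prefix of $i$'' is antimonotone in $\ell$ (a longer shared prefix being occupied implies the shorter one is), so it forms a staircase and there is a unique deepest occupied prefix $P^*$, characterised by the purely local condition that $P^*$ is occupied but its $i$-side child is not. Hence $P^*$ (equivalently its depth $\ell^*$) can be found by a single bounded search $\mathbf{getUnique}(\ell\le k\mid \mathrm{occ}(\ell)\wedge\neg\,\mathrm{occ}(\ell+1))$ whose body is $\bigO(1)$-work, giving $\bigO(\log n)$ work in constant time. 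Since the $i$-side child of $P^*$ is empty, all elements of $K$ below $P^*$ lie strictly on one side of $i$; if they are all below $i$ then $\mathrm{pred}(i)=\mathrm{hi}(P^*)$ and $\mathrm{succ}(i)=\mathrm{next}(\mathrm{hi}(P^*))$, and symmetrically if they are all above $i$. The boundary cases ($K$ empty, or $i$ below/above all of $K$) yield $\bot$ and are detected from $\mathrm{occ}$ at the root. This is the heart of the argument and the reason the bound is $\bigO(\log n)$ and not $\bigO(\log^2 n)$: monotonicity replaces the global test ``$\forall \ell'<\ell$'' (which would cost an extra factor $\log n$) by the $\bigO(1)$ local test $\neg\,\mathrm{occ}(\ell+1)$.

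For the updates I would touch only the $\bigO(\log n)$ ancestors of leaf $i$, all in parallel, spending $\bigO(1)$ work per ancestor. An insertion $\ins(i)$ first runs a predecessor/successor query to obtain the current neighbours $p=\mathrm{pred}(i)$ and $s=\mathrm{succ}(i)$, splices $i$ into the linked list in $\bigO(1)$, sets $\mathrm{occ}$ to true at every ancestor, and refreshes the pointers by $\mathrm{lo}(P)\gets\min(\mathrm{lo}(P),i)$ and $\mathrm{hi}(P)\gets\max(\mathrm{hi}(P),i)$. A deletion $\del(i)$ reads $p=\mathrm{prev}(i)$ and $s=\mathrm{next}(i)$ from the linked list, removes $i$ from the list, and then at each ancestor $P$ updates in $\bigO(1)$: $\mathrm{occ}(P)$ becomes false exactly when $\mathrm{lo}(P)=\mathrm{hi}(P)=i$ (i.e. $i$ was the only element below $P$); if $\mathrm{lo}(P)=i$ but the node survives then $\mathrm{lo}(P)\gets s$, and dually $\mathrm{hi}(P)\gets p$ when $\mathrm{hi}(P)=i$. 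The correctness of this local rewrite rests on the fact that whenever an ancestor $P$ retains an element after deleting its current minimum $i$, the new minimum is necessarily the global successor $s$ of $i$, and $s$ still lies below $P$; this is what lets each ancestor be repaired independently, avoiding the sequential root-ward recomputation (a parallel prefix of minima) that would otherwise push the work to $\bigO(\log^2 n)$.

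Putting this together, every change and query operation is realised by an update/query procedure that runs a constant number of parallel branches over the $\bigO(\log n)$ levels with $\bigO(1)$-work bodies, plus a constant number of $\bigO(\log n)$-work bounded searches; by the work calculus of the framework (Proposition~\ref{prop:PseudoDynFOvsPRAM}) this yields a \DynFO-program with work $\bigO(\log n)$ per operation in constant parallel time. I expect the main obstacle to be exactly the deletion step: making the descendant pointers $\mathrm{lo},\mathrm{hi}$ repairable locally rather than by a sequential sweep is what the whole design is organised around, and it is where the linked list $\mathrm{prev}/\mathrm{next}$ earns its keep.
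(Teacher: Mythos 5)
Your proposal is correct and takes essentially the same approach as the paper's proof: a balanced binary tree over the domain storing the minimum and maximum of $K$ in each subtree (your $\mathrm{lo}/\mathrm{hi}$ are the paper's $\min/\max$ functions), queries answered by locating the unique lowest occupied ancestor via an $\bigO(1)$-work local monotone test searched over $\bigO(\log n)$ levels, and updates repairing all $\bigO(\log n)$ ancestors in parallel with the deleted minimum patched by the successor of the deleted element. The only cosmetic difference is your doubly linked list $\mathrm{prev}/\mathrm{next}$, which the paper replaces by a single $\bigO(\log n)$-work successor query executed once in the initial procedure of the deletion rule -- both variants stay within the claimed bound.
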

\begin{proof}
  The dynamic program uses an ordered binary balanced tree $T$ with leave set  $[n]$,
  and with $0$ as its leftmost leaf. Each inner node $v$ represents the interval $S(v)$ of numbers labelling the leafs of the subtree of $v$. 
To traverse the tree, the dynamic program uses functions $\fst$ and $\snd$ that map
an inner node to its first or second child, respectively, and a function $\anc(v,j)$ that
returns the $j$-th ancestor of $v$ in the tree.
\longversion{
  Formally, the $2|D|$ nodes of $T$ can be represented by pairs $(a,b)$ of elements from the domain $D$, where $b$ indicates the height of the node in the tree, and  $a$ gives the left-to-right position of the node among all nodes with the same height.
Then, we have to use pairs of binary functions $\fst^1, \fst^2$ etc.\ that map encodings of a node to the components of the encoding of another node. We disregard these technical issues and use nodes of $T$ just as domain elements, and also identify an element $i \in [n]$ with the $i$-th leaf of $T$. 
The technical translation is straightforward and does not affect our
reasoning regarding \FO-expressibility and work of dynamic programs.
}
So, $\anc(v,2)$
returns the parent of the parent of $v$.
\longversion{
  If there is no $j$-th ancestor of a node $v$, then $\anc(v,j)$ is undefined.   
  }
  
%


The functions $\fst$, $\snd$ and $\anc$ are static, that is, they are initialized beforehand and not affected by change operations. 

The idea of the dynamic program is to maintain, for each node $v$, the maximal and minimal element in $K \cap S(v)$ (which is undefined if
    $K\cap S(v)=\emptyset$), by maintaining two functions $\min$ and $\max$. It is easy to see that this information can be updated and  queries be answered in $\bigO(\log n)$ time as the tree has depth~$\bigO(\log n)$. For achieving $\bigO(\log n)$ work and constant time, we need to have a closer look.

    
    Using $\min$ and $\max$, it is easy to determine the $K$-successor
    of an element~\mbox{$i\in D$}: if $v$ is the lowest ancestor of $i$ with
    $\max(v) > i$, then the $K$-successor of $i$ is $\min(w)$ for the second child $w \df \snd(v)$ of $v$. Algorithm~\ref{alg:NextInK:Succ} shows a query rule for the
    query operation $\mtext{succ}(i)$. 
    \begin{algorithm}[t]
        \begin{algorithmic}[1]
            \OnQuery{$\mtext{succ}(i)$}
                    \If{$\max(T.\rootE) \leq i$}
                        \State \Return $\bot$
                    \Else\
                        \State $k \gets \algunique{1 \le k \le \log(n)}{\max(T.\anc(i,k)) > i}$\\ \hspace{6cm}$ \land \; \max(T.\anc(i,k-1)) \leq i$  \label{alg:NextInK:Succ:lineUnique}
                        \State \Return $\min(T.\snd(T.\anc(i,k)))$
                    \EndIf
            \EndOnQuery
        \end{algorithmic}
        \caption{Querying a successor.}
        \label{alg:NextInK:Succ}
    \end{algorithm}
    The update of these functions is easy when an element $i$ is
    inserted into $K$. This is spelled out for $\min$  in Algorithm~\ref{alg:NextInK:UpdateMinAfterIns}. The dynamic program only needs to check if the
    new element becomes the minimal element in $S(v)$, for
    every node $v$ that is an ancestor of the leaf $i$.

    \begin{algorithm}[t]
        \begin{algorithmic}[1]
            \OnChangeAtWhereBy{$\ins(i)$}{$\min$}{$T.\anc(i,k)$}{$k \le \log n$}
                    \State $v \gets T.\anc(i,k)$
                    \If{$\min(v)>i$}
                        \State \Return $i$
                    \Else\
                        \State \Return $\min(v)$
                    \EndIf
            \EndOnChangeAtWhereBy
        \end{algorithmic}
        \caption{Updating $\min$ after an insertion.}
        \label{alg:NextInK:UpdateMinAfterIns}
    \end{algorithm}

  Algorithm~\ref{alg:NextInK:UpdateMinAfterDel} shows how $\min$ can
  be updated if an element $i$ is deleted from~$K$: if $i$
  is the minimal element of $K$ in $S(v)$, for some node
  $v$, then $\min(v)$ needs to be replaced by its $K$-successor,
  assuming it is in $S(v)$. 

    \begin{algorithm}[t]
        \begin{algorithmic}[1]
            \OnChange{$\del(i)$}
			\With{}
			 	\State $s \gets \mtext{succ}(i)$ \label{alg:NextInK:UpdateMinAfterDel:lineSucc}
			\EndWith            
            \UpdateAtWhereBy{$\min$}{$T.\anc(i,k)$}{$k\le \log n$}
            
                    \State $v \gets T.\anc(i,k)$
                    \If{$\min(v) \neq i$}
                        \State \Return $\min(v)$
                    \ElsIf{$\max(v) = i$}
                        \State \Return $\bot$
                    \Else
                        \State \Return $s$
                    \EndIf
            \EndUpdateAtWhereBy
            \EndOnChange        
        \end{algorithmic}
        \caption{Updating $\min$ after a deletion.}
        \label{alg:NextInK:UpdateMinAfterDel}
    \end{algorithm}

   It is easy to verify the claimed  work upper bounds for $\prog$.
    Querying a successor or predecessor via Algorithm~\ref{alg:NextInK:Succ} needs $\bigO(\log n)$ work,
    since Line \ref{alg:NextInK:Succ:lineUnique} requires
    $\bigO(\log n)$ and all others require $\bigO(1)$ work.
    For maintaining the function $\min$ the programs in Algorithms \ref{alg:NextInK:UpdateMinAfterIns} and~\ref{alg:NextInK:UpdateMinAfterDel}
    update the value of $\log n$ tuples, but the work per tuple is
    constant.     In the case of a deletion, Line
    \ref{alg:NextInK:UpdateMinAfterDel:lineSucc} requires
    $\bigO(\log n)$ work but is executed only once. The remaining
    part consists of $\bigO(\log n)$ parallel executions of
    statements, each
    with $\bigO(1)$ work.

    The handling of $\max$ and its work analysis is analogous.
\end{proof}



    \section{Regular Languages}\label{section:regular}
    In this section, we show that the range problem can be maintained with
$o(n)$ work  for all regular languages and with polylogarithmic work
for star-free languages.
For the former we show how to reduce the work of a known \DynFO-program.
For the latter we translate the idea of \cite{FrandsenMiltersen+1997} for maintaining the range problem for star-free languages in $\bigO(\log \log n)$ sequential time
into a dynamic program with $\bigO(1)$ parallel time.

\subsection{DynFO-programs with sublinear work for regular languages}
\begin{theorem}\label{theo:regularWorkBound}
    Let $L$ be a regular language. Then $\RangeMember{L}$ can be maintained in DynFO with work $\bigO(n^{\epsilon})$ per query and change operation, for every $\epsilon > 0$.
\end{theorem}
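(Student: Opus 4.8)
The plan is to maintain the product of $\word{W}[\ell,r]$ in the transition monoid $M$ of a DFA for $L$, using a static complete $b$-ary tree of \emph{constant} depth over the $n$ positions. Since $L$ is regular, fix a DFA whose transition monoid is $M$; each letter $\sigma$ (and $\epsilon$, mapped to the identity) corresponds to an element $m_\sigma \in M$, and $\word{W}[\ell,r] \in L$ iff the product $m_{w_\ell}\cdots m_{w_r}$ lies in a fixed accepting set $F \subseteq M$. The known $\DynFO$-program maintains this product for all $\bigO(n^2)$ pairs $(\ell,r)$, costing $\bigO(n^2)$ work; I reduce the work by storing products only for canonically aligned intervals. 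Concretely, set $d \df \lceil 2/\epsilon \rceil$ and branching $b \df \lceil n^{1/d}\rceil$ (which is $\FO$-definable as the least number with $b^d \ge n$), so the tree has constant depth $d$ and at least $n$ leaves. For every internal node $v$ with children $c_0,\dots,c_{b-1}$ and every pair $a \le b'$ of child indices, maintain an auxiliary value $\Pi(v,a,b')$ holding the product in $M$ of the leaf-labels under $c_a,\dots,c_{b'}$. The static navigation ($k$-th ancestor $\anc$, $j$-th child, the leaf interval of a node) is as in Lemma~\ref{lem:NextInKWork}, computable by integer arithmetic.

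The crucial observation is that the product over any leaf interval $[\ell,r]$ decomposes into only $\bigO(d)=\bigO(1)$ stored values. Walking down from the least common ancestor of $\ell$ and $r$, the interval splits into one ``middle'' block of full children at the common ancestor (a single stored $\Pi$ value), a left boundary that recurses as a suffix and at each of the $d$ levels contributes one $\Pi$ value (all full children to the right of the one containing $\ell$), and a symmetric right boundary. Hence a range product is a product of constantly many monoid elements, which is $\FO$-computable even though an unbounded product over a finite monoid is not. This yields the query rule for $\qrange(\ell,r)$: assemble the $\bigO(1)$ factors, multiply, and return whether the result lies in $F$; the work is $\bigO(\log n)\le\bigO(n^\epsilon)$ for locating ancestors and child indices.

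For a change $\set_\sigma(i)$ or $\reset(i)$ at position $i$, only the $d$ ancestors of leaf $i$ are affected. At the ancestor $v$ on level $k$, with $c_j$ the child containing $i$, exactly the values $\Pi(v,a,b')$ with $a \le j \le b'$ change, i.e.\ $\bigO(b^2)$ tuples. Each new value can be expressed over the \emph{old} auxiliary structure: it equals the old range product over $[\mathrm{left}(c_a), i-1]$, times $m_\sigma$ (resp.\ the identity for $\reset$), times the old range product over $[i+1, \mathrm{right}(c_{b'})]$, and each boundary product again decomposes into $\bigO(1)$ old stored $\Pi$ values by the observation above. Thus every updated tuple is computed with $\bigO(1)$ work from the old structure, and the update rule is a parallel branch over the $\bigO(d\,b^2)=\bigO(b^2)$ affected index pairs. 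The total work per change is $\bigO(b^2)=\bigO(n^{2/d})\le\bigO(n^\epsilon)$, and all formulas are first-order, placing the program in $\DynFO$.

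The main obstacle to watch is precisely this tension between first-order expressibility and long products: since a product of $\omega(1)$ monoid elements is in general not $\FO$, the tree must be designed so that every quantity ever computed, both query answers and the new values of changed tuples, is a product of only constantly many stored elements. Storing $\Pi(v,a,b')$ for arbitrary child \emph{ranges} rather than per single child is what collapses each boundary level to a single factor and keeps the product length at $\bigO(d)=\bigO(1)$; the remaining work then stems solely from the $\bigO(b^2)$ tuples an update must refresh, and balancing $b=n^{1/d}$ against the constant depth $d=\lceil 2/\epsilon\rceil$ delivers the $\bigO(n^\epsilon)$ bound.
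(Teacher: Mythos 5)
Your proposal is correct and is essentially the paper's own proof: your $b$-ary tree of constant depth $d=\lceil 2/\epsilon\rceil$ with stored child-range products $\Pi(v,a,b')$ is exactly the paper's hierarchy of \emph{special intervals} (a special interval of height $k$ is precisely a contiguous range of height-$k$ blocks inside a height-$(k{+}1)$ block), with the same parameter balance $b\approx n^{\epsilon/2}$, the same $\bigO(1)$-factor query decomposition, and the same $\bigO(b^2)$-per-level update count after reducing to products in a finite monoid. The only (inessential) deviation is that the paper recomputes affected products bottom-up in $h=\bigO(1)$ phases reusing already-updated values, whereas you express every new value directly over the old auxiliary structure via the query decomposition; both are valid within the framework.
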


The proof of this theorem makes use of the algebraic view of
regular languages. For readers not familiar with this view, the basic
idea is as follows: for a fixed DFA $\calA=(Q,\Sigma,\delta,q_0,F)$, we
first associate with each string $w$ a
function $f_w$ on $Q$ that is induced by the behaviour of $\calA$ on $w$
via $f_w(q)\df \delta^*(q,w)$, where $\delta^*$ is the extension of the transition function $\delta$ to strings. The set of all  functions $f \colon Q \to Q$
with composition as binary operation is a \emph{monoid}, that is, a
structure with an associative binary operation $\circ$ and a neutral element,
the identity function. Thus, composing the effect of $\calA$ on
subsequent substrings of a string corresponds to multiplication of the
monoid elements associated with these substrings. The \emph{syntactic monoid} $M(L)$ of a regular language $L$ is basically the monoid associated
with its minimal automaton.

It is thus clear that, for the dynamic problem
$\RangeMember{L}$ where $L$ is regular, a dynamic program  can be easily obtained from a
dynamic program for the dynamic problem $\RangeEval{M(L)}$, where
$\RangeEval{M}$, for finite monoids~$M$, is defined as
follows.\footnote{We note that, unlike for words, each position always
  carries a monoid element. However, the empty string of the word case
corresponds to the neutral element in the monoid case. In particular,
the initial ``empty'' sequence consists of $n$ copies of the neutral element.}

\dynProblemDefinition{$\RangeEval{M}$}
    {A sequence $m_0 \ldots m_{n-1}$ of monoid elements $m_i \in M$}
    {\begin{qclist}
        \qclitem{$\set_m(i)$ for $m \in M$:}{}{Replaces $m_i$ by $m$}
    \end{qclist}}
    {\begin{qclist}
        \qclitem{$\qrange(\ell,r)$:}{}{$m_\ell \circ \cdots \circ m_r$}
    \end{qclist}}



For the proof of Theorem~\ref{theo:regularWorkBound} we do not need any 
insights into monoid theory. However, when studying languages definable by first-order formulas in Theorem~\ref{theo:FO-work}
below, we will make use of a known decomposition result.

From the discussion above it is now clear that in order to prove
Theorem~\ref{theo:regularWorkBound}, it suffices to prove the
following result. 

\begin{proposition}\label{prop:monoidWorkBound}
    Let $M$ be a finite monoid. For every $\epsilon > 0$, $\RangeEval{M}$ can be maintained in DynFO with work $\bigO(n^{\epsilon})$ per query and change operation.
  \end{proposition}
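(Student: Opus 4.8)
The plan is to maintain, over a fixed complete $b$-ary tree of \emph{constant} height $h$ whose $n$ leaves carry the elements $m_0,\dots,m_{n-1}$, \emph{all interval products of the children of every internal node}, choosing $b\df n^{1/h}$ so that a single change forces only $\bigO(h\,b^2)=\bigO(n^{2/h})$ local recombinations. Concretely, I would fix $h\df\lceil 2/\epsilon\rceil$ and $b\df n^{1/h}$ (so $b^h=n$). For an internal node $v$ with children $u_0,\dots,u_{b-1}$ in left-to-right order I keep an auxiliary function $\Pi_v(a,z)=P(u_a)\circ\cdots\circ P(u_z)$, where $P(w)$ is the product of all leaves below $w$; thus $P(v)=\Pi_v(0,b-1)$ and a leaf~$i$ has $P(i)=m_i$. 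The static navigation functions of the tree (ancestor, $j$-th child, child-index, leftmost/rightmost leaf) are \FO-definable from $+$ and $\times$, which the framework provides, and are untouched by changes, exactly as for the tree in Lemma~\ref{lem:NextInKWork}.

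The query rule for $\qrange(\ell,r)$ walks down the two paths toward $\ell$ and $r$: at the lowest common ancestor $v$, with $\ell$ under child $a_L$ and $r$ under child $a_R$, the answer is the product of a \emph{suffix} product of the subtree of child $a_L$, the stored middle block $\Pi_v(a_L+1,a_R-1)$, and a \emph{prefix} product of the subtree of child $a_R$; the prefix/suffix parts are handled by recursing one level down. Since the height is constant, this composes $\bigO(h)$ stored values, so the query needs $\bigO(1)$ work and is expressible by a constant-length (hence $\FO$-definable) monoid product.

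The update rule for $\set_m(i)$ exploits that after changing the single leaf $i$ only the nodes on the path $v_0=i,v_1,\dots,v_h=\mathrm{root}$ change their product $P$. For an internal node $v=v_t$ whose changed child is $u_k=v_{t-1}$, an entry $\Pi_v(a,z)$ changes only if $a\le k\le z$, and then its new value is $\Pi_v(a,k-1)\circ P^{\mathrm{new}}(v_{t-1})\circ\Pi_v(k+1,z)$, where the two outer factors are \emph{old} entries (they do not involve child $k$). To phrase everything over the old auxiliary structure in one constant-time step, I expand $P^{\mathrm{new}}(v_{t-1})$ as the ordered product, taken along the path from $v_{t-1}$ down to leaf $i$, of the old interval products of the left siblings at each level, the new letter $m$, and the old interval products of the right siblings; this is a product of $\bigO(h)$ old values and hence $\FO$-computable with $\bigO(1)$ work. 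So every recomputed entry of $\Pi$ is obtained from the old structure with $\bigO(1)$ work, and only the $\bigO(h\,b^2)$ tuples $(v,a,z)$ with $v$ on the path are affected. In the procedural syntax this is a single update rule whose constraint ranges over these $\bigO(h\,b^2)$ tuples with a main procedure of work $\bigO(h)$, giving $\workbound=\bigO(h^2\,n^{2/h})=\bigO(n^{\epsilon})$; the time is $\bigO(h)=\bigO(1)$. By Proposition~\ref{prop:PseudoDynFOvsPRAM} this yields a constant-time PRAM program of work $\bigO(n^\epsilon)$, and combined with the reduction from $\RangeMember{L}$ to $\RangeEval{M(L)}$ described above it proves Theorem~\ref{theo:regularWorkBound}.

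The step I expect to be the main obstacle is precisely the one the interval-product table is designed to circumvent: for a monoid containing a nontrivial group the iterated product of $\omega(\log n)$ elements is \emph{not} in \ACz, so one cannot recompute a node's product of its $b=n^{1/h}$ children from scratch in constant parallel time. Maintaining all interval products $\Pi_v(a,z)$ is what turns each change into a purely local, constant-work recombination (only blocks straddling the changed child move, and they recombine as old-prefix $\circ$ new-middle $\circ$ old-suffix); the delicate point is balancing the branching factor $b$ against the $\bigO(b^2)$ per-node cost and the constant height so that the total lands at $\bigO(n^\epsilon)$ while keeping the dependence expressible over the old structure (avoiding any sequential chaining across levels).
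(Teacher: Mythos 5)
Your proposal is correct and is essentially the paper's own proof: your interval products $\Pi_v(a,z)$ over a constant-height $b$-ary tree with $b\approx n^{\epsilon/2}$ are exactly the paper's \emph{special intervals} (defined there via $t$-adic representations of positions with splitting factor $t=n^{\epsilon/2}$ and height $h=2/\epsilon$), with the same query composition from $\bigO(h)$ stored values and the same $\bigO(h\,t^2)=\bigO(n^\epsilon)$ bound on affected entries per change. The only cosmetic difference is that the paper recomputes the affected intervals bottom-up in $h=\bigO(1)$ phases, reusing the already-updated lower-level product $m'[\tpred{p}{k},\tsucc{p}{k})$, whereas you unfold that recursion along the path so that each new entry is a product of $\bigO(h)$ \emph{old} values plus the new letter --- both are valid since $h$ is constant, yours being marginally cleaner with respect to the strict one-step semantics of update rules.
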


\begin{proofsketch}
  In \cite{GeladeMS12}, it was (implicitly) shown that
  $\RangeMember{L}$ is in \DynProp (that is, quantifier-free \DynFO), for regular languages $L$. The idea was
    to maintain the effect of a DFA for $L$ on  $w[\ell,r]$, for
    each interval $(\ell,r)$ of positions. This approach can be easily
    used for $\RangeEval{M}$ as well, but it requires a quadratic
    number of updates after a change operation, in the worst case.

We adapt this approach and only store the effect of the DFA for $\bigO(n^\epsilon)$ intervals, by considering a hierarchy of intervals of bounded depth.

    The first level in the hierarchy of intervals is obtained by decomposing the input sequence into intervals of length $t$, for a carefully chosen $t$.
    We call these intervals \emph{base intervals} of height~$1$ and their subintervals \emph{special intervals} of height~$1$.
    The latter are \emph{special} in the sense that they are exactly the intervals for which the dynamic program maintaines the product of monoid elements.
    In particular, each base interval of height $1$ gives rise to $\bigO(t^2)$ special intervals of height $1$.
    The second level of the hierarchy is obtained by decomposing the sequence of base intervals of height $1$ into sequences of length~$t$.
    Each such sequence of length $t$ is combined to one base interval of height $2$; and each contiguous subsequence of such a sequence is combined to one special interval of height~$2$. Again, each base interval of height $2$ gives rise to $\bigO(t^2)$ special intervals of height~$2$.  This process is continued recursively for the higher levels of the hierarchy, until only one base interval of height $h$ remains. We refer to Figure \ref{fig:regular-example} for an illustration of this construction.
    
    The splitting factor $t$ is chosen in dependence of $n$ and $\epsilon$ such that the height of this hierarchy of special intervals only depends on $\epsilon$ and is thus constant for all $n$.
    More precisely, we fix $\lambda \df \frac{\epsilon}{2}$ and $t \df n^{\lambda}$, rounded up.
    \longversion{For  simplicity of exposition, we assume from now on that $n=t^{\frac{1}{\lambda}}$.}
    Therefore,  $h=\log_t(n) = \frac{1}{\lambda}$. 

    The idea for the dynamic program is to store the product of monoid elements for each special interval. The two crucial observations are then,
    that (1) the product of each (not necessary special) interval can be computed with the help of a constant number of special intervals, and
    (2) that each change operation affects at most $t^2$ special
    intervals per level of the hierarchy and thus at most $ht^2 \in
    \bigO(n^\epsilon)$ special intervals in total. We refer to the appendix for more details.  
%
    \begin{figure}[t]
        \begin{center}
           \scalebox{0.85}{
  \begin{tikzpicture}[scale=0.05,thick]
    \newcommand{\varn}{27}
    \newcommand{\vart}{3}
    \newcommand{\varh}{3}

    \newcommand{\borderoffset}{4}
    \newcommand{\elementdistance}{9}
    \newcommand{\leveldistance}{16}
    \newcommand{\heightBase}{6}
    \newcommand{\heightOffsetPerSpecial}{2}

    \newcommand{\nohighlightcolor}{black}
    \newcommand{\querycolor}{blue}
    \newcommand{\changecolor}{red}

    \newcommand{\intervalcolor}{\nohighlightcolor}

    \pgfmathtruncatemacro{\varnMinusOne}{\varn-1}
    \foreach \i in {0,...,\varnMinusOne}
    {
        \node[draw=none] at (\elementdistance*\i,-1) {\tiny $m_{\i}$};
    }
    
    \foreach \varlevel in {1, ..., \varh}
    {
        \pgfmathtruncatemacro{\height}{-1+(\varlevel)*\leveldistance}
        \pgfmathtruncatemacro{\heightBaseUpper}{\height + \heightBase/2}
        \pgfmathtruncatemacro{\heightBaseLower}{\height - \heightBase/2}

        \pgfmathtruncatemacro{\countBaseIntervals}{\varn / (\vart^\varlevel)-1}

        \node[draw=none] at (-22, \heightBaseLower) {level $\varlevel$};

        \edef\specialIntervalLinesCount{3}
        \pgfmathtruncatemacro{\levelRectTop}{\height + \heightOffsetPerSpecial*1.5}
        \pgfmathtruncatemacro{\levelRectBottom}{\height - \heightOffsetPerSpecial * (\specialIntervalLinesCount+1.5)}
        \pgfmathtruncatemacro{\levelRectLeft}{-\elementdistance}
        \pgfmathtruncatemacro{\levelRectRight}{\varn * \elementdistance -1}
        \definecolor{levelbackgroundcolor}{rgb}{0.93,0.93,0.93}
        \fill [levelbackgroundcolor,rounded corners=2pt] (\levelRectLeft,\levelRectBottom) rectangle (\levelRectRight, \levelRectTop);

        \pgfmathtruncatemacro{\elementCountBase}{(\vart^\varlevel)}
        \pgfmathtruncatemacro{\lengthBase}{(\elementdistance * \elementCountBase - \borderoffset}
        \pgfmathtruncatemacro{\lengthBaseLastWithoutOffset}{(\elementdistance * \vart^(\varlevel-1))}

        \foreach \i in {0, ..., \countBaseIntervals}
        {
            \pgfmathtruncatemacro{\startBase}{\elementdistance*\elementCountBase*\i - \borderoffset + 1}

            \ifnum \varlevel=1
                \ifnum \i=7
                    \renewcommand{\intervalcolor}{\changecolor}
                \fi
            \else
                \ifnum \varlevel=2
                    \ifnum \i=2
                        \renewcommand{\intervalcolor}{\changecolor}
                    \fi
                \else
                    \renewcommand{\intervalcolor}{\changecolor}
                \fi
            \fi

            \draw[draw=\intervalcolor] (\startBase,\height) -- +(\lengthBase,0);
            \renewcommand{\intervalcolor}{black}

            \pgfmathtruncatemacro{\tMinusTwo}{\vart-2}
            \edef\specialIndex{1}
            \foreach \specialstartelement in {0, ..., \tMinusTwo}
            {
                \pgfmathtruncatemacro{\heightSpecial}{\height - \heightOffsetPerSpecial * \specialIndex}
                \pgfmathtruncatemacro{\startSpecial}{\startBase + \specialstartelement * \lengthBaseLastWithoutOffset}

                \pgfmathtruncatemacro{\tMinusL}{\vart-\specialstartelement}
                \ifnum \specialstartelement=0
                    \pgfmathparse{\tMinusL-1}
                    \xdef\tMinusL{\pgfmathresult}
                \else
                \fi
                \foreach \specialwidth in {2, ..., \tMinusL}
                {
                    \pgfmathtruncatemacro{\lengthSpecial}{\specialwidth * \lengthBaseLastWithoutOffset - \borderoffset}
                    
                    \ifnum \varlevel=1
                        \ifnum \i=7
                            \ifnum \specialstartelement=0
                                \renewcommand{\intervalcolor}{\querycolor}
                            \fi
                        \fi
                    \fi

                    \ifnum \varlevel=2
                        \ifnum \i=0
                            \ifnum \specialstartelement=1
                                \renewcommand{\intervalcolor}{\querycolor}
                            \fi
                        \fi
                    \fi

                    \draw[draw=\intervalcolor] (\startSpecial,\heightSpecial) -- +(\lengthSpecial,0);
                    \renewcommand{\intervalcolor}{black}
                    
                    \pgfmathparse{\specialIndex+1}
                    \xdef\specialIndex{\pgfmathresult}
                }

            }

            \edef\specialwidth{1}
            \pgfmathtruncatemacro{\lengthSpecial}{\specialwidth * \lengthBaseLastWithoutOffset - \borderoffset}

            \pgfmathtruncatemacro{\tMinusOne}{\vart-1}
            \foreach \specialstartelement in {0, ..., \tMinusOne}
            {
                \pgfmathtruncatemacro{\heightSpecial}{\height - \heightOffsetPerSpecial * \specialIndex}
                \pgfmathtruncatemacro{\startSpecial}{\startBase + \specialstartelement * \lengthBaseLastWithoutOffset}

                \ifnum \varlevel=1
                    \ifnum \i=0
                        \ifnum \specialstartelement=2
                            \renewcommand{\intervalcolor}{\querycolor}
                        \fi
                    \fi
                \fi
                        
                \ifnum \varlevel=2
                    \ifnum \i=2
                        \ifnum \specialstartelement=0
                            \renewcommand{\intervalcolor}{\querycolor}
                        \fi
                    \fi
                \fi

                \ifnum \varlevel=3
                    \ifnum \specialstartelement=1
                        \renewcommand{\intervalcolor}{\querycolor}
                    \fi
                \fi

                \draw[draw=\intervalcolor] (\startSpecial,\heightSpecial) -- +(\lengthSpecial,0);
                \renewcommand{\intervalcolor}{black}
            }
        }

    }

\end{tikzpicture}
}
        \end{center}
        \caption{
            Illustration of special intervals, for $t=3$. 
            The special intervals of level $3$ are $[0,9), [9,18), [18,27), [0,18)$ and $[9,27)$ with base interval $[0,27)$. 
            The result of a query $\qrange(2,22)$ can be computed as
            \( \prod_{i=2}^{22} m_i = \big( m[2,3) \circ m[3,9) \big) \circ m[9,18)\circ \big( m[18,21) \circ m[21, 23) \big)\),
            illustrated above in {\color{blue}blue}.
            The affected base intervals for a change at position $23$ are marked in {\color{red}red}.
            E.g., the new product $m'[18,27)$ can be computed by $m'[18,27) = m[18,21)\circ m'[21,24) \circ m[24,27)$.
            As the products are recomputed bottom up, $m'[21,24)$ is already updated.
        }
        \label{fig:regular-example}
    \end{figure}
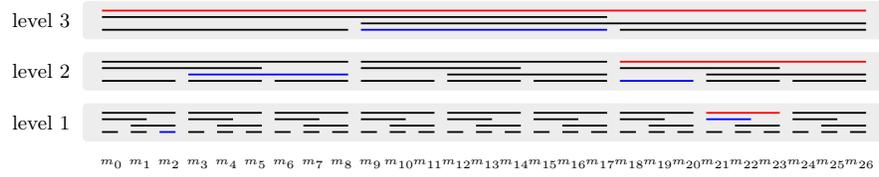
\end{proofsketch}

\subsection{DynFO-programs with polylogarithmic work for star-free languages}
Although the work bound of Theorem~\ref{theo:regularWorkBound}  for
regular languages is strongly sublinear, one might aim for an even
more work-efficient dynamic program, especially, since
$\RangeMember{L}$ can be maintained \emph{sequentially} with
logarithmic update time for regular languages
\cite{FrandsenMiltersen+1997}. We leave it as an open problem whether  for every regular
language $L$ 
there is a \DynFO-program for $\RangeMember{L}$ with a polylogarithmic work bound. 
However, we show next that such programs exist for star-free regular
languages, in fact they even have  a logarithmic  work bound. The
star-free languages are those that can be expressed by regular
expressions that do not use the Kleene star operator but can use
complementation. 

\begin{theorem}\label{theo:FO-work}
    Let $L$ be a star-free regular language. Then $\RangeMember{L}$ can be maintained in \DynFO with work $\bigO(\log n)$ per query and change operation.
  \end{theorem}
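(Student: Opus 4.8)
The plan is to reduce, exactly as for Theorem~\ref{theo:regularWorkBound}, the problem $\RangeMember{L}$ to $\RangeEval{M(L)}$ for the syntactic monoid $M(L)$, and then to exploit the known decomposition result that, by Schützenberger's theorem, $L$ is star-free precisely when $M(L)$ is \emph{aperiodic}, i.e.\ contains no nontrivial group. It therefore suffices to maintain $\RangeEval{M}$ with $\bigO(\log n)$ work for every fixed aperiodic monoid $M$.

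For this I would reuse the interval hierarchy of Proposition~\ref{prop:monoidWorkBound}, but now with a \emph{constant} splitting factor $t$ (say $t = 2$) instead of $t = n^{\epsilon/2}$. This is essentially a balanced binary tree over the positions, as in Lemma~\ref{lem:NextInKWork}, of height $h = \bigO(\log n)$, where each node stores the product of the monoid elements of the interval it represents. A change at position $p$ affects only the $\bigO(1)$ special intervals per level that contain $p$, hence $\bigO(\log n)$ node values in total; a range query $\qrange(\ell, r)$ decomposes $[\ell, r]$ into $\bigO(\log n)$ stored special intervals, whose products then have to be multiplied together.

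The point where aperiodicity enters --- and where general regular languages are expected to fail --- is the combination of these $\bigO(\log n)$ partial products into a single value in constant parallel time. For an aperiodic monoid $M$ the iterated product is first-order definable (with the order): for each target value $v \in M$, the set of $M$-sequences whose product is $v$ is a star-free language over the alphabet $M$, so ``the product equals $v$'' is expressible in \FO and thus computable by a CRAM in $\bigO(1)$ time. For a general monoid this already fails for modular counting or non-solvable group parts, whose word problem lies outside \ACz; this is exactly why the result is obtained only for the star-free case while the general regular case is left open. The same single-shot definability lets one recompute, along the changed root-to-leaf path, each affected node product directly from the \emph{old}, unaffected sibling products together with the new letter, so that an update is one round of \FO rather than a depth-$h$ recursion.

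The main obstacle is to bring the work of this combination step down to $\bigO(\log n)$ rather than merely polylogarithmic: the generic \ACz-circuit for the iterated product of $k = \bigO(\log n)$ elements has $\mathrm{poly}(k)$ size and hence only polylogarithmic work. Following the idea of \cite{FrandsenMiltersen+1997}, I would instead evaluate the product \emph{navigationally}, using the structure of aperiodic monoids (the $\mathcal{J}$-ordering and the reset behaviour of their elements): the value of a product over a range is determined by the positions of constantly many distinguished letters, which can be located by a constant number of predecessor/successor queries. Each such query is answered by the \NextInK data structure of Lemma~\ref{lem:NextInKWork} in $\bigO(\log n)$ work, so that both a change and a query cost $\bigO(\log n)$ work in total while the parallel time remains constant. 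Making this reduction to constantly many navigation queries precise --- in particular controlling the prefix-dependence that a single change can have on the reset pattern at higher levels of the decomposition --- is the step I expect to require the most care.
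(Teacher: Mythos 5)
Your opening reduction (to $\RangeEval{M(L)}$ for the syntactic monoid, using that $L$ is star-free iff $M(L)$ is finite and group-free) matches the paper, and your diagnosis of why aperiodicity is essential --- iterated products in monoids containing nontrivial groups are not computable in \ACz --- is also correct. But the proposal has a genuine gap exactly where you flag it: all of the content of the $\bigO(\log n)$ bound lies in the step you defer. Your first device, the product tree with constant splitting factor, cannot by itself give the theorem: a query decomposes $[\ell,r]$ into $\Theta(\log n)$ stored intervals, and combining their products --- even granting that the iterated product over a fixed aperiodic monoid is \FO-definable --- costs work polynomial in the number of factors, i.e.\ only polylogarithmic; the same issue arises when recomputing the $\Theta(\log n)$ node products along a root-to-leaf path after a change. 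You concede this and pivot to ``evaluate the product navigationally, using the $\mathcal{J}$-ordering and reset behaviour, locating constantly many distinguished letters by predecessor/successor queries,'' but this sentence \emph{is} the theorem rather than a proof of it. As stated it is not even literally true: already for the aperiodic syntactic monoid of $(ab)^\ast$, whether a range product equals the zero element depends on whether \emph{some} adjacent pair of non-identity factors multiplies to zero, which no constant number of input positions determines; it only becomes true relative to suitably maintained auxiliary data.

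The paper makes the navigational idea precise by induction over the Krohn--Rhodes-style decomposition of group-free monoids (Lemma~\ref{lem:KrohnRhodes}): either $M$ is trivial; or $M-\{1\}$ is generated by one element $\sigma$ with $\sigma^k=\sigma^{k+1}$, so a range product is read off from the exponents of the first $k$ non-identity positions, found by $k$ successor queries to a \myproblem{NextInK} instance; or $M-\{1\}$ satisfies $\sigma\sigma'=\sigma$, so the product is the first non-identity element in the range (one successor query); or $M=V\cup T$ with proper submonoids $V,T$ and $T-\{1\}$ a left ideal. In this last, crucial case the program maintains, besides recursively handled $\RangeEval{V}$- and $\RangeEval{T}$-instances, an auxiliary sequence of \emph{precomputed block products} stored at the ``switching positions'' where the input crosses from $T-\{1\}$ into its complement; a query combines a constant number of sub-range queries to these maintained sequences, and a change repairs the block products locally (at most two of them are touched) using a constant number of such sub-queries plus \myproblem{NextInK} operations. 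The prefix-dependence you worry about is exactly what this auxiliary sequence absorbs, and verifying that it can be repaired locally is the substance of the case analysis. Since $M$ is fixed, the recursion depth is constant, so the total cost is constantly many $\bigO(\log n)$-work operations (Lemma~\ref{lem:NextInKWork}). Without this decomposition-and-recursion argument, or an equivalent development of your $\mathcal{J}$-ordering sketch, your proposal establishes at best a polylogarithmic work bound, not the claimed $\bigO(\log n)$.
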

It is well-known that star-free regular languages are just the regular
languages that can be defined in first-order logic (without
arithmetic!) \cite{McNaughtonP71}. Readers might ask why we consider
dynamic first-order maintainability of a problem that can actually be
\emph{expressed} in first-order logic. The key point is the parallel work here:
even though the membership problem for star-free languages can be
solved by a parallel algorithm in time $\bigO(1)$, it inherently requires parallel work $\Omega(n)$. 
\begin{proofsketch}
    The proof uses the well-known connection between star-free
    languages and group-free monoids (see, e.g., \cite[Chapter
    V.3]{Straubing1994} and \cite[Theorem V.3.2]{Straubing1994}). It
   thus    follows the approach of \cite{FrandsenMiltersen+1997}.
    
     In a nutshell, our dynamic program simply implements the algorithms
  of the proof of Theorem 2.4.2 in \cite{FrandsenMiltersen+1997}. Those
  algorithms consist of a constantly bounded number of simple operations and a
 constantly bounded number of searches for a next neighbour in a set. Since
 the latter can be done in \DynFO with work $\bigO(\log n)$ thanks to
 Lemma~\ref{lem:NextInKWork}, we get the desired result for group-free
 monoids and then for star-free languages.
 We refer to the appendix for more details.
  \end{proofsketch}


    \section{Context Free Languages}\label{section:cfl}
    \newcommand{\str}{\textnormal{str}}
\newcommand{\open}{\langle}
\newcommand{\close}{\rangle}
\newcommand{\red}{\mu}

As we have seen in Section~\ref{section:regular}, range queries to regular languages can be maintained in \DynFO with strongly sublinear work. An immediate question is whether context-free languages are equally well-behaved. Already the initial paper by Patnaik and Immerman showed that \DynFO can maintain the membership problem for \emph{Dyck languages} $D_k$, for $k \geq 1$, that is, the languages of well-balanced parentheses expressions with $k$ types of parentheses \cite{PatnaikI94}. It was shown afterwards in \cite[Theorem 4.1]{GeladeMS12} that \DynFO actually captures the membership problem for all context-free languages and that Dyck languages even do not require quantifiers in formulas (but functions in the auxiliary structure) \cite[Proposition 4.4]{GeladeMS12}. These results can easily be seen to apply to range queries as well. However, the dynamic program of \cite[Theorem 4.1]{GeladeMS12} uses 4-ary relations and three nested existential quantifiers, yielding work in the order of $n^7$.

In the following, we show that the membership problem for context-free languages is likely  \emph{not} solvable in \DynFO with sublinear work, but that the Dyck language $D_1$ with one bracket type can be handled with polylogarithmic work for the membership problem and work $\bigO(n^\epsilon)$ for the range problem, and that for other Dyck languages these bounds hold with an additional linear factor $n$.

\subsection{A conditional lower bound for context-free languages}

Our conditional lower bound for context-free languages is based on a result from Abboud et al.\ \cite{AbboudBackurs+2018} and the simple  observation  that the word problem for a language $L$ can be solved, given a dynamic program for its  membership problem.
\begin{lemma}\label{lem:staticUsingDyn}
    Let $L$ be a language.
    If $\Member{L}$ can be maintained in $\DynFO$ with work $f(n)$, then the word problem for $L$ can be decided sequentially in  time $\bigO(n \cdot f(n))$.
\end{lemma}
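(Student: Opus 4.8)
The plan is to show that a single sequential pass, simulating the dynamic program, decides membership of a word $w = \sigma_0 \cdots \sigma_{n-1}$ in $L$ within the claimed time bound. First I would start from the empty word structure $W_\emptyset$ over domain $[n]$ together with the \FO-initialised auxiliary structure $\aux_0$. Then, for each position $i = 0, \ldots, n-1$ in turn, I apply the change operation $\set_{\sigma_i}(i)$, each time updating the auxiliary structure $\aux$ by executing the update rules of the dynamic program. After all $n$ changes have been applied, the auxiliary structure encodes the word $w$, so a single evaluation of the query rule for $\qmember$ yields the correct answer to ``$w \in L$?'' by the correctness of the dynamic program.

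The key point is the work accounting. By assumption $\Member{L}$ is maintainable in \DynFO with work $f(n)$, meaning (via Proposition~\ref{prop:PseudoDynFOvsPRAM}, or directly by the definition of the work bound $\workbound$) that each individual change operation and the final query operation can be carried out by a \bigO(1)-time PRAM program whose total work is $\bigO(f(n))$. A PRAM computation doing work $W$ can be simulated sequentially in time $\bigO(W)$, simply by having one processor execute all the steps of all processors one after another. Hence each of the $n$ change operations $\set_{\sigma_i}(i)$ costs sequential time $\bigO(f(n))$, and the single concluding $\qmember$ query costs $\bigO(f(n))$ as well. Summing over the $n$ changes plus the one query gives total sequential time $\bigO(n \cdot f(n)) + \bigO(f(n)) = \bigO(n \cdot f(n))$, as claimed.

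One should note that the initialisation of $\aux_0$ is \FO-definable and therefore computable within the stated bound (it is absorbed into the overall cost), and that we need the empty initial word structure, which is exactly the admissible starting point in the \DynFO semantics described in Section~\ref{section:model:DynFO}. The main technical obstacle is essentially bookkeeping: one must be careful that ``work'' in the dynamic-complexity sense translates into sequential time without hidden blow-up. This is straightforward here because the sequential simulation of a PRAM respects work up to constant factors, and because the number of change operations needed to build up $w$ is exactly $n$. No subtlety about amortisation arises, since the $f(n)$ bound is a worst-case per-operation bound, so the naive sum over the $n$ operations is valid.
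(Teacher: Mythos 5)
Your proposal is correct and follows essentially the same route as the paper's own proof: build up $w$ by applying the $n$ change operations $\set_{\sigma_i}(i)$ one after another, simulate each $\bigO(1)$-time, $\bigO(f(n))$-work PRAM step sequentially in time $\bigO(f(n))$, and finish with a single $\qmember$ query, giving $\bigO(n \cdot f(n))$ in total. The only difference is that you spell out details the paper leaves implicit (the empty initial structure, the final query, and the worst-case per-operation nature of the bound), which does not change the argument.
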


The announced lower bound is relative to the following conjecture  \cite{AbboudBBK17}.
\begin{conjecture}[$k$-Clique conjecture]
  For any $\epsilon>0$, and $k\ge 3$, $k$-Clique has no algorithm with time bound $\bigO(n^{(1-\epsilon)\frac{\omega}{3}k})$.
\end{conjecture}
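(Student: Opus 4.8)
The $k$-Clique conjecture is a \emph{hardness assumption}, not a theorem, so my plan is not to derive it unconditionally --- a genuine proof would require time lower bounds far beyond current techniques --- but to pin down exactly what it asserts, why it is plausible, and how it is used here. The natural reading is that the fastest known $k$-clique algorithms are optimal up to a $(1-\epsilon)$ factor in the exponent. First I would record the matching upper bound: for $k$ a multiple of $3$, the Ne\v{s}et\v{r}il--Poljak reduction detects a $k$-clique by a single Boolean matrix multiplication of two $n^{k/3}\times n^{k/3}$ matrices, hence in time $\bigO(n^{\omega k/3})$, where $\omega$ is the exponent of square matrix multiplication; for general $k$ the analogous $\bigO(n^{\omega k/3})$ bound (up to lower-order factors) follows from rectangular matrix multiplication in the style of Eisenbrand and Grandoni. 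Thus the exponent $\tfrac{\omega}{3}k$ in the statement is precisely the exponent achieved by the best known algorithm, and the conjecture says this exponent cannot be lowered by any constant factor $(1-\epsilon)$.

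Having fixed the upper bound, I would then collect the standard evidence that the threshold is not arbitrary but the conjectured optimum of a concrete classical algorithm: the case $k=3$ is the much-studied triangle-detection barrier (no $\bigO(n^{\omega-\delta})$ triangle algorithm is known), no fixed $k$ has ever admitted a polynomial improvement over $n^{\omega k/3}$, and such an improvement would be surprising within the fine-grained landscape of \cite{AbboudBBK17}. These remarks justify adopting the bound as a working hypothesis; they are deliberately not offered as a proof.

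Finally, rather than attempt to establish the conjecture, I would make explicit the role it plays in this section. Combined with Lemma~\ref{lem:staticUsingDyn} and the reduction of Abboud et al.\ \cite{AbboudBackurs+2018}, which exhibits a fixed context-free language whose word problem inherits the $k$-clique barrier, it yields the announced \emph{conditional} lower bound: were $\Member{L}$ maintainable in \DynFO with work $f(n)$ for that language $L$, then by Lemma~\ref{lem:staticUsingDyn} its word problem would be solvable in time $\bigO(n\cdot f(n))$, and a sufficiently small $f$ would beat the $n^{\omega k/3}$ barrier and hence refute the conjecture. The main obstacle is intrinsic and unavoidable: since the statement is an unproven hardness assumption, no unconditional argument exists, and the most a ``proof'' can honestly supply is the matching upper bound together with this placement inside fine-grained complexity.
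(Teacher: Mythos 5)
Your treatment is exactly right: the paper likewise offers no proof of this statement, since it is a hardness assumption imported from \cite{AbboudBBK17}, and your account of its role --- feeding into Theorem~\ref{theo:lower} via Lemma~\ref{lem:staticUsingDyn} and the reduction of Theorem~\ref{theo:cfg-clique} --- matches the paper's use of it precisely. Your added context (the matching $\bigO(n^{\omega k/3})$ upper bound via Ne\v{s}et\v{r}il--Poljak explaining why the exponent in the conjecture is the natural threshold) is correct and goes slightly beyond what the paper records, but the approach is the same.
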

Here, $\omega$ is the matrix multiplication exponent \cite{Gall2014,Williams2012}, which is known to be smaller than $2.373$ and believed to be exactly two \cite{Gall2014,Williams2012}.

In \cite {AbboudBackurs+2018}, the word problem for context-free languages was linked to the $k$-Clique problem as follows. 
\begin{theorem}[{\cite[Theorem 1.1]{AbboudBackurs+2018}}]\label{theo:cfg-clique}
  There is a context free grammar $G$ such that, if the word problem for $L(G)$ can be solved in time $T(n)$, 
    $k$-Clique can be solved on $n$ node graphs in $\bigO(T(n^{\frac{k}{3}+1}))$ time, for any $k \ge 3$.
  \end{theorem}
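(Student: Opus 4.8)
The plan is to exhibit a single fixed context-free grammar $G$ together with a polynomial-time reduction that turns a $k$-Clique instance into a membership question for $L(G)$ on a string of length $\bigO(n^{k/3+1})$. Running the hypothetical $T(n)$-time parser on this string then solves $k$-Clique in time $\bigO(T(n^{k/3+1}))$, where the $\bigO(n^{k/3+1})$ time needed to build the string is absorbed into this bound. Thus the whole argument reduces to designing $G$ and the encoding of $H$ into a string.

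The combinatorial backbone is the standard split of a clique into three parts. Assuming without loss of generality that $3 \mid k$, a graph $H$ on $n$ vertices has a $k$-clique if and only if there are three vertex sets $S_1, S_2, S_3$, each of size $k/3$, such that each $S_i$ induces a clique and all cross pairs between distinct $S_i$ and $S_j$ are edges; equivalently, $S_1 \cup S_2 \cup S_3$ is a $k$-clique. First I would encode, in three consecutive blocks of the reduction string, enumerations of all $\binom{n}{k/3} = \bigO(n^{k/3})$ candidate $(k/3)$-subsets, where the first and third blocks list the candidates in matched (mirror) order and the middle block carries the second selection together with edge-test gadgets. The design is arranged so that any derivation of $G$ must, while reading the string, commit to one subset per block and then \emph{verify} all required edges locally.

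The heart of the construction is the grammar $G$ itself, which must be \emph{fixed} and yet capable of checking cliques of every size $k$. The key point is that a context-free grammar may recurse to arbitrary depth, so one $G$ can realise a nested, palindrome-style matching whose recursion depth scales with $k$: each level of the nesting selects and checks one coordinate of a subset and confirms the corresponding edge, while the left-to-right mirroring enforces that the choices made while descending agree with those verified while ascending. The alphabet is chosen so that a production matching two symbols is permitted only when the corresponding pair of vertices forms an edge of $H$ (this is where $H$ enters the string, not the grammar). I would then verify correctness in two directions: a genuine $k$-clique yields a consistent selection passing every edge test, hence an accepting derivation; conversely, any accepting derivation pins down three mutually consistent $(k/3)$-subsets all of whose internal and cross edges are present, hence a $k$-clique.

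For the length bound, each of the $\bigO(n^{k/3})$ subsets is written with a description of its $k/3$ vertices and its $\bigO((k/3)^2)$ internal edge tests, and the cross-block edge gadgets contribute the extra factor $n$ that produces the exponent $k/3+1$; thus the string has length $\bigO(n^{k/3+1})$ and is computable in the same time. The main obstacle is precisely the simultaneous demand on $G$: it must be one grammar independent of $k$ and of $H$, yet a context-free derivation maintains only a single stack-like context, so enforcing \emph{all} pairwise edge constraints of a would-be clique — both within the three blocks and across them — requires threading these checks through the nesting so that each is certified by a local matching step. Getting this layered encoding to attest global clique-ness using only context-free power, while keeping the blow-up at $n^{k/3+1}$, is the delicate part of the argument.
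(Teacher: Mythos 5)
First, a point about the comparison target: the paper does not prove this statement at all. It is imported verbatim from \cite[Theorem 1.1]{AbboudBackurs+2018}, and the appendix block headed ``Proof of Theorem~\ref{theo:cfg-clique}'' is mislabeled --- it uses Theorem~\ref{theo:cfg-clique} as a black box and actually proves Theorem~\ref{theo:lower}. So your proposal has to be measured against the construction in \cite{AbboudBackurs+2018}, whose high-level skeleton you do reproduce correctly: split the clique into three parts of size $k/3$, build a string listing $\bigO(n^{k/3})$ candidate gadgets, pay an extra factor of $n$ per gadget for adjacency information (giving length $\bigO(n^{k/3+1})$), and let a fixed grammar perform palindrome-style nondeterministic matching.

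Nevertheless there is a genuine gap: the only object the theorem asserts to exist --- the fixed grammar $G$ --- is never constructed, and the one concrete mechanism you describe cannot work as stated. You write that ``a production matching two symbols is permitted only when the corresponding pair of vertices forms an edge of $H$''; but $G$ must be a single grammar over a fixed finite alphabet, independent of $k$, of $H$, and hence of $n$, so no production can be conditioned on vertex identities or edges. Instead, vertices must be encoded as $\Theta(\log n)$-bit strings over the fixed alphabet, adjacency must be written into the string itself (neighbor lists inside each gadget), and the grammar may only perform generic skip-and-match operations; designing these gadgets and proving soundness --- that \emph{every} accepting derivation pins down three sets whose cross pairs are all edges, rather than exploiting the nondeterministic skipping to cheat --- is the entire technical content of the proof in \cite{AbboudBackurs+2018}, and you explicitly defer exactly this as ``the delicate part.'' A second unaddressed obstacle: the three cross-checks $S_1$--$S_2$, $S_2$--$S_3$ and $S_1$--$S_3$ form \emph{crossing} matchings, while a single context-free derivation can only realize non-crossing (nested) ones; the cited construction resolves this by duplicating the relevant information inside the gadgets so that all comparisons nest, a point your sketch never confronts. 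As it stands, the proposal is a correct plan of attack, but not a proof.
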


Putting Lemma~\ref{lem:staticUsingDyn} and Theorem~\ref{theo:cfg-clique} together, we get the following result.
\begin{theorem}\label{theo:lower}
  There is a context free grammar $G$ such that, if the membership problem for $L(G)$ can be solved by a  \DynFO-program with work $\bigO(n^{\omega-1-\epsilon})$, for some $\epsilon>0$, then the $k$-Clique conjecture fails.
\end{theorem}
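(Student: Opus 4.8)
The plan is to take $G$ to be exactly the grammar supplied by Theorem~\ref{theo:cfg-clique} and then chain the two available reductions, contradicting the $k$-Clique conjecture. Assume, for contradiction, that $\Member{L(G)}$ admits a \DynFO-program with work $f(n)=\bigO(n^{\omega-1-\epsilon})$ for some $\epsilon>0$. First I would invoke Lemma~\ref{lem:staticUsingDyn} to convert this into a sequential decision procedure for the word problem of $L(G)$ running in time $T(n)=\bigO(n\cdot f(n))=\bigO(n^{\omega-\epsilon})$. Plugging this $T$ into Theorem~\ref{theo:cfg-clique} then yields, for every fixed $k\ge 3$, an algorithm that solves $k$-Clique on $n$-node graphs in time $\bigO\!\big(T(n^{k/3+1})\big)=\bigO\!\big(n^{(k/3+1)(\omega-\epsilon)}\big)$.

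The crux is a comparison of exponents. Expanding gives
\[
  \Big(\tfrac{k}{3}+1\Big)(\omega-\epsilon)=\tfrac{\omega}{3}k-\tfrac{\epsilon}{3}k+\omega-\epsilon,
\]
so the derived $k$-Clique exponent undercuts the conjectured barrier $\tfrac{\omega}{3}k$ by $\tfrac{\epsilon}{3}k-\omega+\epsilon$, a quantity that grows linearly in $k$. Concretely, I would set $\epsilon'\df\tfrac{\epsilon}{2\omega}$ and verify the inequality $(\tfrac{k}{3}+1)(\omega-\epsilon)<(1-\epsilon')\tfrac{\omega}{3}k$; a direct calculation shows this reduces to $\omega-\epsilon<\tfrac{\epsilon}{6}k$, hence holds as soon as $k>\tfrac{6(\omega-\epsilon)}{\epsilon}$. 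Since $\epsilon$ is fixed, I simply choose one such $k\ge 3$. For that $k$ the algorithm above then runs in time $\bigO(n^{(1-\epsilon')\frac{\omega}{3}k})$ with $\epsilon'>0$, which directly refutes the $k$-Clique conjecture.

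The one point requiring care is the interplay of the two slack parameters: the factor $\tfrac{k}{3}+1$ from Theorem~\ref{theo:cfg-clique} multiplies the whole exponent $\omega-\epsilon$, but it is precisely this factor that turns the additive slack $\epsilon$ into a saving $\tfrac{\epsilon}{3}k$ that is linear in $k$ and therefore eventually dominates the fixed loss $\omega-\epsilon$. This is why the contradiction only appears for sufficiently large $k$, which is harmless here because the conjecture quantifies universally over all $k\ge 3$, so falsifying it for a single well-chosen $k$ suffices. Apart from this bookkeeping the argument is routine, using both Lemma~\ref{lem:staticUsingDyn} and Theorem~\ref{theo:cfg-clique} purely as black boxes.
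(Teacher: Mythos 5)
Your proof is correct and follows essentially the same route as the paper: take $G$ from Theorem~\ref{theo:cfg-clique}, chain Lemma~\ref{lem:staticUsingDyn} with that theorem to get a $k$-Clique algorithm running in time $\bigO(n^{(\frac{k}{3}+1)(\omega-\epsilon)})$, and show by exponent arithmetic that for sufficiently large $k$ this beats the conjectured barrier. If anything, your bookkeeping with $\epsilon' = \frac{\epsilon}{2\omega}$ is slightly cleaner than the paper's, since it matches the exact form $(1-\epsilon')\frac{\omega}{3}k$ demanded by the conjecture, whereas the paper's final bound $\bigO(n^{(1-\frac{\epsilon}{6})k})$ leaves that last conversion implicit.
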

The simple proofs of Lemma~\ref{lem:staticUsingDyn} and Theorem~\ref{theo:lower} are presented in the appendix.

Thus, we can not reasonably expect any \DynFO-programs for general context-free languages with considerable less work than $\bigO(n^{1.37})$ barring any breakthroughs for matrix multiplication. In fact, for ``combinatorial \DynFO-programs'', an analogous reasoning yields a work lower bound of $\bigO(n^{2-\epsilon})$.

\subsection{On work-efficient dynamic programs for Dyck languages}

We next turn to Dyck languages. Clearly, all Dyck languages are deterministic context-free, their word problem can therefore be solved in linear time, and thus the lower bound approach of the previous subsection does not work for them.
In \cite{FrandsenHusfeldt+1995} it was shown that $D_1$ is maintainable by a sequential dynamic algorithm in time $\bigO(\log n)$,  per change and query operation, and $D_k$ in time $\bigO((\log n)^3 \cdot \log^\ast n)$, for every $k > 1$. We already mentioned that their membership problem can be maintained in \DynFO. The challenge is to find dynamic programs (with parallel time $\bigO(1)$) with little work.

We show that it is actually possibly by suitably adapting the idea from \cite{FrandsenHusfeldt+1995} to get a \DynFO-program for $D_1$ with polylogarithmic work, but for $D_k$, for $k>1$, the approach so far only yields  $\DynFO$-programs with an additional linear factor in the work bound.
We first present the \DynFO-program with polylogarithmic work for the membership problem of $D_1$. It basically mimics the sequential algorithm from  \cite{FrandsenHusfeldt+1995} that maintains $D_1$. 
\begin{theorem}\label{thm:D1polylog}
    $\Member{D_1}$ can be maintained in \DynFO with $\bigO((\log n)^3)$ work.
  \end{theorem}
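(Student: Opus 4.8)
The plan is to mimic the balanced-tree data structure of \cite{FrandsenHusfeldt+1995} and, above all, to show how its inherently sequential bottom-up update sweep can be carried out in parallel constant time. First I would reduce membership in $D_1$ to two numerical conditions on the $\pm 1$-weighting of the input: reading an opening bracket as $+1$, a closing bracket as $-1$, and $\epsilon$ as $0$, a word $w$ lies in $D_1$ if and only if its total weight is $0$ and every prefix weight is non-negative, that is, $\min_{0\le k\le n}\sum_{i<k}\mathrm{val}(i)\ge 0$. It therefore suffices to maintain the total sum and the minimum prefix sum of the weighted sequence.

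I would store these values on an ordered balanced binary tree $T$ over the leaf set $[n]$, exactly as in the proof of Lemma~\ref{lem:NextInKWork}, keeping at each node $v$ (with interval $S(v)$) two function values $\mathrm{sum}(v)=\sum_{i\in S(v)}\mathrm{val}(i)$ and $\mathrm{low}(v)=\min_k \sum_{i\in S(v),\, i<k}\mathrm{val}(i)\le 0$. These obey the associative combination rule $\mathrm{sum}(v)=\mathrm{sum}(L)+\mathrm{sum}(R)$ and $\mathrm{low}(v)=\min(\mathrm{low}(L),\,\mathrm{sum}(L)+\mathrm{low}(R))$ for a node $v$ with children $L,R$. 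The query rule then just returns whether $\mathrm{sum}(\mathrm{root})=0$ and $\mathrm{low}(\mathrm{root})\ge 0$, which costs $\bigO(1)$ work, and the static tree-navigation functions ($\mathrm{anc}$ and the two children) are $\FO$-definable as in Section~\ref{section:datastructures}.

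Each change $\set_\sigma(i)$ or $\reset(i)$ alters $\mathrm{val}(i)$ by a fixed amount $\Delta$ and affects only the $\bigO(\log n)$ ancestors $v_0=i,\dots,v_h=\mathrm{root}$ of leaf $i$. For $\mathrm{sum}$ the update is trivial: every ancestor's value increases by $\Delta$, realised by $\log n$ independent constant-work assignments inside one parallel branch. The heart of the matter is $\mathrm{low}$, since the combination rule makes $\mathrm{low}(v_j)$ depend on $\mathrm{low}(v_{j-1})$, so a naive bottom-up recomputation is sequential and would cost $\Omega(\log n)$ \emph{time}. To get constant time I would unfold the recursion: letting $w_1,\dots,w_h$ be the off-path siblings along the path (each unchanged, hence already stored), the new $\mathrm{low}(v_j)$ equals the minimum, over those siblings $w$ hanging to the left inside $S(v_j)$ together with the leaf, of $\big(\text{(sum of all intervals to the left of } w\text{)} + \mathrm{low}(w)\big)$. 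All these ``prefixes of sibling sums'' can be obtained simultaneously from the stored $\mathrm{sum}$-values, and the requisite minima then evaluated in parallel, so that all $\bigO(\log n)$ ancestor values are produced in constant time rather than by a sweep.

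The work bounds out to $\bigO((\log n)^3)$ through a three-fold nesting: there are $\bigO(\log n)$ ancestors whose $\mathrm{low}$-value must be recomputed; each recomputation is a minimum over $\bigO(\log n)$ sibling terms; and each individual term reduces, as in the star-free case, to a bounded number of sums over up to $\log n$ stored values together with nearest-neighbour searches, the latter costing $\bigO(\log n)$ each by Lemma~\ref{lem:NextInKWork}. The main obstacle is exactly this middle stage: replacing the sequential bottom-up sweep of the segment tree by a constant-time parallel evaluation of all ancestor min-prefix values, and carrying out the associated $\bigO(\log n)$-fold iterated additions and searches within constant parallel time while keeping the total work polylogarithmic. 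The reduction to sum and minimum prefix sum, and the $\bigO(1)$-work query side, are routine by comparison.
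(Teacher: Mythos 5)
Your reduction of membership in $D_1$ to ``total sum $=0$ and minimum prefix sum $\ge 0$'' is sound, and the data you propose to store is in fact equivalent to what the paper stores: its $\ell(x)$ and $r(x)$ are exactly $-\mathrm{low}(x)$ and $\mathrm{sum}(x)-\mathrm{low}(x)$. The query rule and the update of $\mathrm{sum}$ are fine. The gap is precisely the step you yourself flag as ``the main obstacle'' and then assert rather than solve: evaluating the unfolded recursion for $\mathrm{low}(v_j)$ requires, for each candidate block, the sum of up to $\Theta(\log n)$ stored $\mathrm{sum}$-values, i.e., iterated addition of non-constantly many numbers in constant parallel time. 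This cannot be expressed with polylogarithmic work in the paper's framework: terms in update procedures are constant-size expressions, sequential loops must have constant length, parallel branches cannot aggregate (each parallel instance writes its own location), and the only selection primitives are \textbf{exists}/\textbf{getUnique}, so the sum could only be produced by a \textbf{getUnique} ranging over all $\bigO(n)$ candidate values, which already costs linear work. The obstruction is not merely syntactic: summing $\Theta(\log n)$ stored numbers subsumes exactly counting $\Theta(\log n)$ bits (take the numbers in $\{0,1\}$), and by H{\aa}stad-type lower bounds any depth-$d$ circuit for parity, hence for exact counting, of $m$ bits has size $2^{\Omega(m^{1/(d-1)})}$, which for $m=\Theta(\log n)$ is superpolylogarithmic. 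So no constant-depth, polylog-size realization of your middle stage exists; such sums are computable in constant time only with work in the $n^{o(1)}$/$n^{\epsilon}$ regime (the regime of Theorem~\ref{theo:rangeDkWorkBound}), not within $\bigO((\log n)^3)$. Your appeal to ``as in the star-free case'' is a false analogy: Lemma~\ref{lem:NextInKWork} and Theorem~\ref{theo:FO-work} only ever perform successor searches and selections plus a \emph{constant} number of additions, never iterated addition of $\log n$ values.

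The paper closes exactly this hole with a different update strategy that avoids re-aggregation altogether. It observes that an insertion changes each ancestor's pair $(\ell,r)$ by one of only two unit effects, $(+1,0)$ or $(0,-1)$; whether a node merely relays its child's effect (``effect-preserving'') or forces a fixed effect of its own (``effect-inducing'') is decided by a constant-work comparison of the \emph{old} child values $r(y_1)$ and $\ell(y_2)$; and the effect at an ancestor $x$ is then the one induced at the highest effect-inducing node below $x$ on the path, with only effect-preserving nodes in between. Finding that node is pure selection, costing $\bigO((\log n)^2)$ work per ancestor and $\bigO((\log n)^3)$ overall. If you replace your recomputation-by-unfolding with this $\pm 1$ effect propagation (your $\mathrm{sum}$/$\mathrm{low}$ representation supports it verbatim), your proof goes through; as written, the claimed work bound does not hold.
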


%
%
%
%
%
\begin{proofsketch}
Let  $\Sigma_1 = \{\open,\close\}$ be the alphabet underlying $D_1$.
  The dynamic program uses an ordered binary tree $T$ such that each leaf corresponds to one position from left-to right. A parent node corresponds to the set of positions of its children. We assume for simplicity that the domain is $[n]$, for some number $n$ that is a power of 2.
If the input string is $w$ then each node $x$ of $T$ represents a substring $\str(x)$ of $w$ via the leaves of the induced subtree at $x$ in a natural fashion. 
The main idea from \cite{FrandsenHusfeldt+1995} is to  maintain for each node $x$ of $T$ information about the unmatched parentheses of $\str(x)$. The input word is in the Dyck language if there are no unmatched parentheses in the root of $T$.

  In a nutshell, the dynamic program for $\Member{D_1}$ maintains for each node $x$ of $T$ the numbers $\ell(x)$ and $r(x)$ that represent the number of
  unmatched closing and unmatched opening brackets of the string $\str(x)$. E.g.,  if that string is $\close\open\close\close\open\open\close$ for $x$, then $\ell(x)=2$ and $r(x)=1$. The overall string $w$ is in $D_1$ exactly if $r(\textnormal{root}) = \ell(\textnormal{root}) = 0$.

In the algorithm of \cite{FrandsenHusfeldt+1995}, the functions $\ell$ and $r$ are updated in a bottom-up fashion. However, we will observe that they do not need to be updated sequentially in that fashion, but can be updated  in parallel constant time.
   In the following, we describe how $\prog$ can update $\ell(x)$ and $r(x)$ for all ancestor nodes $x$ of a position~$p$,  after a closing parenthesis $\close$ was inserted at $p$.   Maintaining $\ell$ and $r$ for the other change operations is analogous.
    
    There are two types of effects that an insertion of a closing parenthesis could have on $x$:
    either $\ell(x)$ is increased by one and $r(x)$ remains unchanged, or $r(x)$ is decreased by one and $\ell(x)$ remains unchanged.
    We denote these effects by the pairs $(+1,0)$ and $(0,-1)$, respectively.

    Table~\ref{tbl:table-d1-cases} shows how the effect of a change at a position $p$ below a node $x$ with children $y_1$ and $y_2$ relates to the effect at the affected child. This depends on whether $r(y_1) \leq \ell(y_2)$ and whether the affected child is $y_1$ or $y_2$.
 \begin{table}[t]
        \begin{center}
            \begin{tabular}{l|c|c}
                                        & $p$ is in $\str(y_1)$                                                    & $p$ is in $\str(y_2)$ \\ \hline
                $r(y_1) \leq \ell(y_2)$ & $\begin{array}{c}(+1,0) \to (+1,0) \\ (0,-1) \to (+1,0)\end{array}$   & $\begin{array}{c}(+1,0) \to (+1,0) \\ (0,-1) \to (0,-1)\end{array}$ \\ \hline
                $r(y_1) > \ell(y_2)$ & $\begin{array}{c}(+1,0) \to (+1,0) \\ (0,-1) \to (0,-1)\end{array}$   & $\begin{array}{c}(+1,0) \to (0,-1) \\ (0,-1) \to (0,-1)\end{array}$ \\ 
            \end{tabular}
        \end{center}
        \caption{
            The effect on $x$ after a closing parenthesis was inserted at position $p$.
            The effects depend on the effect on the children $y_1$ and $y_2$ of $x$:
            for example, an entry '$(0,-1) \to (+1,0)$' in the column '$p$ is in $\str(y_1)$' means that
            if the change operation has effect $(0,-1)$ on $y_1$ then the change operation has effect $(+1,0)$ on $x$.
        }
        \label{tbl:table-d1-cases}
      \end{table}
       A closer inspection of Table~\ref{tbl:table-d1-cases} reveals a crucial observation: in the  upper left and the lower right field of the table, the effect on $x$ is \emph{independent} of the effect on the child (being it $y_1$ or $y_2$). That is, these cases induce an effect on $x$ independent of the children. We thus call these cases \emph{effect-inducing}. In the other two fields, the effect on $x$ depends on the effect at the child, but in the simplest possible way: they are just the same. That is the effect at the child is just adopted  by~$x$. We call these cases \emph{effect-preserving}. To determine the effect at $x$ it is thus sufficient to identify the highest affected descendant node~$z$ of~$x$, where an effect-inducing case applies, such that for all intermediate nodes between $x$ and $z$ only effect-preserving cases apply.

Our dynamic program implements this idea. First it determines, for each ancestor $x$ of the change position $p$, whether it is effect-inducing and which effect is induced. Then it identifies, for each $x$, the node $z$ (represented by its height $i$ above $p$) as the unique effect-inducing node that has no effect-inducing node on its path to $x$.
The node $z$ can be identified with work $\bigO((\log n)^2)$, as $z$ is one of at most $\log n$ many nodes on the path from $x$ to the leaf of $p$, and one needs to check that all nodes between $x$ and $z$ are effect-preserving.
As the auxiliary relations need to be updated for $\log n$ many nodes, the overall work of $\prog$ is $\bigO((\log n)^3)$. We refer to the appendix for more details.
\end{proofsketch}

\subsubsection*{A work-efficient dynamic program for range queries for $D_1$ and $D_k$}
Unfortunately, the program of Theorem~\ref{thm:D1polylog} does not support range queries, since it seems that 
one would need to combine the unmatched parentheses of $\log n$ many nodes of the binary tree in the worst case.
However, its idea can be combined with the idea of Proposition \ref{prop:monoidWorkBound}, yielding a program that maintains $\ell$ and $r$  for $\bigO(n^{\epsilon})$ special intervals on a constant number of levels.

In fact, this approach even works for $D_k$ for $k>1$. Indeed, with the help of $\ell$ and $r$, it is possible to identify for each position of an opening parenthesis the position of the corresponding closing parenthesis in $\bigO(1)$ parallel time with work~$n^\epsilon$, and then one only needs to check that they match everywhere. The latter contributes an extra factor $\bigO(n)$ to the work, for $k>1$, but can be skipped for~\mbox{$k=1$}.  

\begin{theorem}\label{theo:rangeDkWorkBound}
    For all $\epsilon > 0$, $k > 1$, 
    \begin{enumerate}[a)]
        \item $\RangeMember{D_1}$ can be maintained in \DynFO with $\bigO(n^{\epsilon})$ work, and
        \item $\RangeMember{D_k}$ can be maintained in \DynFO with $\bigO(n^{\epsilon})$ work per change operation and $\bigO(n^{1+\epsilon})$ work per query operation.
        \end{enumerate}
\end{theorem}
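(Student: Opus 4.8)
The plan is to combine the interval hierarchy of Proposition~\ref{prop:monoidWorkBound} with the quantities $\ell$ and $r$ of Theorem~\ref{thm:D1polylog}. The first step is to observe that the boundary effect $(\ell,r)$ of a string, i.e.\ its numbers of unmatched closing and unmatched opening parentheses, is compositional: if a string $x$ has effect $(\ell_1,r_1)$ and $y$ has effect $(\ell_2,r_2)$, then $xy$ has effect
\[
  (\ell_1,r_1)\cdot(\ell_2,r_2) = \bigl(\ell_1 + \max(0,\ell_2 - r_1),\, r_2 + \max(0, r_1 - \ell_2)\bigr),
\]
since exactly $\min(r_1,\ell_2)$ of the open parentheses of $x$ are matched by closing parentheses of $y$. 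This operation is associative with neutral element $(0,0)$, so the $(\ell,r)$-pairs form a monoid $M_{\mathrm{D}}$; although $M_{\mathrm{D}}$ is infinite, on inputs of length $n$ all values lie in $\{0,\dots,n\}$ and the product is definable from $+,\times,\le$, hence computable with $\bigO(1)$ work. I would therefore apply the construction of Proposition~\ref{prop:monoidWorkBound} verbatim to $M_{\mathrm{D}}$, maintaining the product $(\ell,r)$ of each of the $\bigO(n^\epsilon)$ special intervals of a hierarchy of constant height $h=\bigO(1/\epsilon)$ with splitting factor $t=n^{\epsilon/2}$.

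For part~(a), I would update after each change bottom-up exactly as in Proposition~\ref{prop:monoidWorkBound}: per level only $\bigO(t^2)$ special intervals change, each recomputed in $\bigO(1)$ work from stored (unaffected) sub-products, for a total of $\bigO(ht^2)=\bigO(n^\epsilon)$. A query $\qrange(\ell,r)$ then decomposes $[\ell,r]$ into $\bigO(h)$ special intervals, composes their pairs, and returns true iff the result equals $(0,0)$; this is correct because $w[\ell,r]\in D_1$ exactly when the infix has no unmatched parenthesis.

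For part~(b) the change is handled identically, giving $\bigO(n^\epsilon)$ work, since no auxiliary data depends on the parenthesis types. The query again first tests whether the range has boundary effect $(0,0)$; when it does, every parenthesis in $[\ell,r]$ is matched and it remains to verify that partners have equal type. In parallel over all opening positions $i\in[\ell,r]$ I would compute the partner $\mathrm{match}(i)$ by a descent through the hierarchy: starting with a pending count $1$ for the open parenthesis at $i$, at each of the $h$ levels I scan the $\le t$ sub-intervals of the current interval from left to right---passing a sub-interval $S$ updates the pending count to $\mathrm{pending}-\ell(S)+r(S)$---and recurse into the first $S$ with $\ell(S)\ge\mathrm{pending}$. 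This costs $\bigO(1)$ parallel time and $\bigO(ht)=\bigO(n^{\epsilon/2})$ work per position, hence $\bigO(n^{1+\epsilon/2})\subseteq\bigO(n^{1+\epsilon})$ for all positions together. Finally I would check in parallel, with $\bigO(n)$ work, that $w[i]$ and $w[\mathrm{match}(i)]$ are of matching type for every opening $i$, and return true iff all checks succeed; the $\bigO(n)$ type-check factor vanishes for $k=1$, recovering part~(a).

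The hard part will be the match-finding step. I must argue that the first sub-interval with $\ell(S)\ge\mathrm{pending}$ really contains the partner---equivalently, that $\mathrm{match}(i)$ is the first position right of $i$ at which the running parenthesis balance returns to its pre-$i$ level---that this descent reaches the correct leaf in a constant number of parallel rounds, and that confining the search to $[\ell,r]$ guarantees $\mathrm{match}(i)\in[\ell,r]$ (which holds because a range of boundary effect $(0,0)$ is fully matched). By comparison, the associativity and \FO-definability of the product on $M_{\mathrm{D}}$, and the fact that the maintenance of Proposition~\ref{prop:monoidWorkBound} transfers unchanged to this bounded infinite monoid, should be routine.
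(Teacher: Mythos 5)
Your overall strategy coincides with the paper's: maintain the unmatched-parenthesis data $\ell$ and $r$ for the special intervals of Proposition~\ref{prop:monoidWorkBound}, answer $D_1$ range queries from constantly many stored values, and for $D_k$ compute, in parallel for each opening position in the range, its type-unaware partner and then check that partners have equal types. Your part~(a) is a clean variant of the paper's: the paper checks that the summed balance over the $\bigO(1)$ special intervals covering the range is zero and that no prefix of that sequence of intervals goes negative, whereas you compose the $(\ell,r)$-pairs in your monoid $M_{\mathrm{D}}$ and test for $(0,0)$; these are equivalent, and your formulation is arguably crisper.

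There is, however, a genuine gap in the match-finding step, and it concerns the computation model rather than the combinatorial correctness you rightly flag as ``the hard part.'' You propose to ``scan the $\le t$ sub-intervals of the current interval from left to right'' while updating a running pending counter. A left-to-right scan over $t=n^{\epsilon/2}$ items is a sequential loop of polynomial length, so it is not a constant-parallel-time procedure at all; and the obvious parallelization --- computing, for every candidate sub-interval $S_m$, the pending count before it as $1+\sum_{j<m}\bigl(r(S_j)-\ell(S_j)\bigr)$ --- is an iterated addition of polynomially many numbers, which is not \ACz/\FO-computable (already exact counting, or parity, of $n^{\epsilon/2}$ bits lies outside \ACz), so it cannot appear as a step of a \DynFO-program. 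The repair uses precisely the data you already maintain: every contiguous union of sub-intervals at a level is itself a special interval, and any interval decomposes into $\bigO(1)$ special intervals, so the pending count before each candidate $S$ can be read off with $\bigO(1)$ work from stored pairs. Hence, in parallel over all $\bigO(ht)$ candidates, compute this value, test $\ell(S)\ge\mathrm{pending}$, select the first qualifying $S$ by an exists/unique construct, and descend; this is what the paper means by inspecting ``the special intervals that lead from $p$ to $q$ in the canonical way.'' With this replacement your claimed bounds ($\bigO(n^{\epsilon})$ work per change, $\bigO(n^{1+\epsilon})$ per query, and part~(a) as stated) go through.
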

\begin{proofsketch}
    In the following we reuse the definition of \emph{special intervals} from the proof of Proposition \ref{prop:monoidWorkBound}
    as well as the definition of $\ell$ and $r$ from the proof of Proposition \ref{thm:D1polylog}.
    We first describe a dynamic program $\prog$ for $\RangeMember{D_1}$.
   It maintains $\ell$ and $r$ for all special intervals, which is clearly doable with  $\bigO(n^{\epsilon})$ work per change operation. 
    Similar to the proof of Proposition \ref{prop:monoidWorkBound}, the two crucial observations (justified in the appendix) are that
    (1) a range query can be answered with the help of a constant number of special intervals, and
    (2) the change operation affects only a bounded number of special intervals per level. 

    \newcommand{\diff}{\textnormal{d}}


    As stated before, the program for $\RangeMember{D_k}$ also maintains $\ell$ and~$r$, but it should be emphasised that also in the case of several parenthesis types, the definition of these functions ignores the bracket type. With that information it computes, for each opening bracket the position of its matching closing bracket, with the help of $\ell$ and $r$, and checks that they match. This can be done in parallel and with work $\bigO(n^\epsilon)$ per position. We refer to the appendix for more details.
\end{proofsketch}

\subsubsection*{Moderately work-efficient dynamic programs for $D_k$}

We now turn to the membership query for $D_k$ with $k>1$.
Again, our program basically mimics the sequential algorithm from \cite{FrandsenHusfeldt+1995} which heavily depends on the dynamic problem $\StringEquality$ that asks whether two given strings are equal.

\dynProblemDefinition{$\StringEquality$}
    {Two Sequences $u = u_0 \ldots u_{n-1}$ and $v = v_0 \ldots v_{n-1}$ of letters with $u_i, v_i \in \Sigma \cup \{\epsilon\}$}
    {\begin{qclist}
        \qclitem{$\set_{x,\sigma}(i)$ for $\sigma \in \Sigma, x \in \{u,v\}$:}{}{Sets $x_i$ to $\sigma$, if $x_i = \epsilon$}
        \qclitem{$\reset_x(i)$ for $x \in \{u,v\}$:}{}{Sets $x_i$ to $\epsilon$}
    \end{qclist}}
    {\begin{qclist}
        \qclitem{$\qequals$:}{}{Is $u_0 \circ \ldots \circ u_{n-1} = v_0 \circ \ldots \circ v_{n-1}$?}
    \end{qclist}}

It is easy to show that a linear amount of work is sufficient to maintain $\StringEquality$.
\begin{lemma}\label{lem:stringEqualityLinear}
    $\StringEquality$ is in $\DynFO$ with work $\bigO(n)$.
\end{lemma}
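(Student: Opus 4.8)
The plan is to maintain, for each of the two sequences, a \emph{compacted} copy that stores the word $\word u = u_0 \circ \cdots \circ u_{n-1}$ (which ignores the $\epsilon$-positions) indexed by \emph{rank}, where the rank of a position $i$ is the number of non-$\epsilon$ positions strictly to its left. Since $\word u = \word v$ holds exactly when the two compacted copies agree everywhere, the query $\qequals$ then reduces to a single parallel scan over the $n$ ranks.

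Concretely, I would keep four auxiliary functions. Two rank functions $\text{cnt}_u, \text{cnt}_v \colon [n] \to [n]$ store $\text{cnt}_u(i) \df |\{ j < i : u_j \neq \epsilon\}|$, and two letter functions $\text{let}_u, \text{let}_v \colon [n] \to \Sigma \cup \{\bot\}$ store in $\text{let}_u(k)$ the letter sitting at rank $k$, with value $\bot$ for every rank beyond the length of $\word u$. Over the initial empty sequences these are the constant functions $0$ and $\bot$, so the initialization is \FO-definable. It is worth noting that $\text{cnt}_u(i)$ is an iterated sum of bits and hence \emph{not} computable from scratch in constant parallel time (counting is not in \ACz); the point of the dynamic setting is that we never recompute it but only adjust it incrementally.

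It suffices to describe the updates for changes to $u$, those for $v$ being symmetric. Whether a change $\set_{u,\sigma}(p)$ or $\reset_u(p)$ is effective is read off the old input structure (an insertion acts only when $u_p = \epsilon$, a reset only when $u_p \neq \epsilon$); if it is not, all auxiliary values are reproduced unchanged. For an effective insertion at $p$ we let $r \df \text{cnt}_u(p)$ be its rank and then, reading the old auxiliary structure, (i) increment $\text{cnt}_u(i)$ for all $i > p$ in parallel, leaving smaller arguments untouched, and (ii) set $\text{let}_u(r)$ to $\sigma$, copy the old value of rank $k-1$ into every rank $k > r$, and keep ranks $k < r$ unchanged. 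An effective deletion is symmetric: decrement $\text{cnt}_u(i)$ for $i > p$, and set each $\text{let}_u(k)$ with $k \ge r$ to the old value of rank $k+1$. A one-line index check shows that both shifts preserve the invariant that $\text{let}_u(k) = \bot$ precisely for $k \ge |\word u|$. Each rule ranges its constraint over the $n$ arguments $i \le n$ (resp.\ $k \le n$) and does only a comparison, an addition and a function lookup per argument, so $\workbound = \bigO(n)$ per change by the definition of $\workbound$.

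For the query, $\word u = \word v$ holds iff $\text{let}_u$ and $\text{let}_v$ agree on all of $[n]$ — equal letters at equal ranks, with the matching $\bot$-padding forcing equal lengths — so the rule for $\qequals$ returns the value of $\neg\,\algexists{k \leq n}{\text{let}_u(k) \neq \text{let}_v(k)}$, a single existential over the $n$ ranks with constant work each, hence $\bigO(n)$ work. Translating all rules into the \PseudoDynFO{} syntax and reading off the bound via Proposition~\ref{prop:PseudoDynFOvsPRAM} is then routine. I expect the only point needing care to be the bookkeeping of the rank shifts, i.e.\ checking that the parallel, old-structure-reading reassignment of $\text{let}_u$ really implements ``insert/delete at rank $r$'' and keeps the $\bot$-padding consistent; overall correctness then follows by a straightforward induction on the sequence of change operations.
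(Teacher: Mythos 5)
Your proposal is correct and rests on the same core idea as the paper's proof: align the two words by the rank of their non-$\epsilon$ positions, maintain that rank information incrementally under changes, and answer $\qequals$ by a single parallel, $\bigO(n)$-work comparison of letters at equal ranks. The bookkeeping, however, differs in a way worth noting. The paper defines its rank function only on non-$\epsilon$ positions (together with its inverse, which plays the role of your compacted array $\text{let}_u$), so when a symbol is inserted at position $p$ its rank must be computed as one plus the rank of the nearest non-$\epsilon$ position to the left of $p$; finding that position requires a supplementary \myproblem{NextInK} instance, so the paper's proof depends on Lemma~\ref{lem:NextInKWork} and pays $\bigO(\log n)$ work at the changed position. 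Your prefix-count function $\text{cnt}_u$ is instead defined at \emph{every} position, including $\epsilon$-positions, so the rank of an inserted symbol is the single lookup $\text{cnt}_u(p)$ and no predecessor search is needed; the proof becomes self-contained, and the extra cost of maintaining counts at $\epsilon$-positions is absorbed into the same $\bigO(n)$ bound that the shift of $\text{let}_u$ already forces. Both routes yield $\bigO(n)$ work per operation, so neither is asymptotically better; yours trades a dependency on the tree data structure for slightly more stored data. The only details to tidy in your version are the boundary lookup of rank $k+1$ at $k = n-1$ during a deletion (which should default to $\bot$, e.g.\ via an if-then-else term) and the routine encoding of $\Sigma \cup \{\bot\}$ as domain elements.
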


Because of the linear work bound for $\StringEquality$ our dynamic program for $\Member{D_k}$ also has a linear factor in the work bound.
\begin{theorem}\label{lem:Dklinearithmic}
    $\Member{D_k}$ is maintainable in $\DynFO$ with $\bigO(n \log n  + (\log n)^3)$ work for every fixed $k \in \N$.
\end{theorem}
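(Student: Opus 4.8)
The plan is to reduce membership in $D_k$ to two independent checks: a \emph{structural} check that the string is balanced when bracket types are ignored, and a \emph{type-consistency} check that every matched pair consists of brackets of the same type. For the structural check I would simply run the program of Theorem~\ref{thm:D1polylog} on the projection of the input $w$ to $D_1$ (forgetting types); this already maintains the binary tree $T$ together with the functions $\ell$ and $r$ at every node, and certifies balancedness via $\ell(\textnormal{root})=r(\textnormal{root})=0$ with $\bigO((\log n)^3)$ work. It then remains to detect, with $\bigO(n\log n)$ work, whether some matched pair has mismatching types, since $w\in D_k$ holds exactly if $w$ is structurally balanced and type-consistent.

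The key combinatorial observation I would use is that each matched pair of $w$ is ``resolved'' at a unique node of $T$. If position $i$ (an opening bracket) and position $j=m(i)$ (its matching closing bracket) are matched, then at $x=\mathrm{LCA}(i,j)$ the position $i$ is an unmatched opening bracket of the left child $y_1$ and $j$ is an unmatched closing bracket of the right child $y_2$; moreover the last-in-first-out matching performed when concatenating $\str(y_1)$ and $\str(y_2)$ pairs exactly $i$ with $j$. Writing $O(y_1)$ for the sequence of types of unmatched opening brackets of $y_1$ and $C(y_2)$ for that of the unmatched closing brackets of $y_2$, type-consistency of all pairs is therefore \emph{equivalent} to the conjunction, over all internal nodes $x$, of the condition that the length-$s$ suffix of $O(y_1)$ read backwards agrees type-wise with the length-$s$ prefix of $C(y_2)$, where $s=\min(r(y_1),\ell(y_2))$ is the number of pairs cancelled at $x$.

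To maintain these node-local checks I would keep, for each of the $\bigO(\log n)$ levels of $T$, one instance of $\StringEquality$ whose two strings encode, across all nodes of that level, the reversed opening profiles and the closing profiles at the cancellation boundaries, aligned by the absolute positions of $w$ and padded by $\epsilon$ where no bracket is cancelled. The query $\qequals$ on these instances then reports type-consistency of that level, and the overall answer is the conjunction of the structural check with the $\bigO(\log n)$ equality queries. A single change $\set_\sigma(i)$ or $\reset(i)$ alters the profiles $O(\cdot),C(\cdot)$, and hence the two encoded strings, only at the $\bigO(\log n)$ ancestors of $i$; consequently it triggers only $\bigO(\log n)$ change operations on the $\StringEquality$ instances. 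Since each such operation (and each query) costs $\bigO(n)$ work by Lemma~\ref{lem:stringEqualityLinear}, the type-consistency part contributes $\bigO(n\log n)$, and together with the $\bigO((\log n)^3)$ of the structural part this yields the claimed bound $\bigO(n\log n+(\log n)^3)$.

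The step I expect to be the main obstacle is the locality claim underlying the last paragraph: guaranteeing that a single bracket change induces only $\bigO(\log n)$ position changes in the $\StringEquality$ strings. The difficulty is that when $\ell(y_2)$ or $r(y_1)$ changes by one, the cancelled boundary at $x$ shifts, which could naively move $\Theta(s)$ encoded characters. I would overcome this, following the reduced-word bookkeeping of \cite{FrandsenHusfeldt+1995}, by indexing each boundary encoding by the fixed positions of $w$ rather than by ranks within a profile, so that the shift of a boundary corresponds to a change at only the $\bigO(1)$ newly- or no-longer-cancelled positions per level, while the interior of the boundary (comparing unchanged opening and closing types) is unaffected; the counts $s$ delimiting the boundaries are read off from the $\ell$ and $r$ values already maintained by the Theorem~\ref{thm:D1polylog} program. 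Verifying this locality carefully, and checking that the reversal, alignment, and padding are first-order definable so the whole construction stays within $\DynFO$, is where the real work of the proof lies.
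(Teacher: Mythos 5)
Your overall architecture is essentially the paper's own: the paper also reduces $\Member{D_k}$ to (i) type-unaware bookkeeping of unmatched brackets over a balanced binary tree on the positions and (ii) a per-node type-consistency check that compares the cancelled suffix of unmatched opening brackets of the left child against the (reversed, complemented) cancelled prefix of unmatched closing brackets of the right child via $\StringEquality$, with the same accounting: $\bigO(n\log n)$ for the equality comparisons along the $\log n$ affected nodes plus a polylogarithmic term for the structural bookkeeping. Your two organizational deviations --- aggregating the boundary checks of a whole level into one persistent $\StringEquality$ instance queried at query time (instead of the paper's per-node bit $M(x)$ refreshed at change time), and the position-indexed, mirrored encoding that absorbs the ``pairing shift'' when a boundary moves --- are legitimate and would work.

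There is, however, a genuine gap at precisely the step you delegate to ``the reduced-word bookkeeping of \cite{FrandsenHusfeldt+1995}'': you never explain how to determine, in $\bigO(1)$ parallel time, \emph{which positions} enter or leave the cancellation windows at the $\log n$ ancestors of the changed position $p$. The counts $\ell$ and $r$ maintained by Theorem~\ref{thm:D1polylog} cannot provide this, since they record only cardinalities. Concretely, when a closing bracket is inserted at $p$ and an ancestor $x$ has effect $(+1,0)$, the position that becomes a new unmatched closing bracket of $x$ is in general \emph{not} $p$: it is the position of a closing bracket that was matched before the change, namely the partner of the first unmatched opening bracket of the left child of a certain descendant $z$ of $x$ on the path to $p$, and $z$ itself must be located by a search over that path using a right-heavy/left-heavy case distinction (the paper's Cases (1), (2), (3a), (3b)). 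The algorithm of \cite{FrandsenHusfeldt+1995} finds these positions by a bottom-up \emph{sequential} propagation, which is not constant parallel time; making this propagation constant-time parallel is exactly the technical heart of the paper's proof and the source of its $(\log n)^3$ term, via explicitly maintained per-node quantities (the first unmatched opening position, its matching partner, and the analogous right-boundary position) together with the $\bigO((\log n)^2)$-per-node path search. Without this mechanism, or an equivalent one, your $\StringEquality$ strings cannot be updated at all, so the locality claim you flag as ``the real work'' is missing an idea, not merely a verification.
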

\begin{proofsketch}
    The program can be seen as an extension of the one for $\Member{D_1}$.
    As unmatched parentheses are no longer well-defined if we have more than one type of parenthesis
    the idea of \cite{FrandsenHusfeldt+1995} is to maintain the parentheses to the left and right that remain
    if we reduce the string by matching opening and closing parentheses regardless of their type.
    To be able to answer $\Member{D_k}$, the dynamic program maintains the unmatched parentheses for every node $x$ of a tree spanning the input word,
    and a bit $M(x)$ that indicates whether the types of the parentheses match properly.

    How the unmatched parentheses can be maintained for a node $x$ after a change operation depends on the ``segment'' of $\str(x)$ in which the change happened and
    in some cases reduces to finding a node $z$ with a local property on the path from $x$ to the leaf that corresponds to the changed position.

    To update $M(x)$ for a node $x$ with children $y_1$ and $y_2$ the dynamic program compares the unmatched parentheses to the right of $y_1$
    with the ones to the left of $y_2$ using $\StringEquality$.
    We refer to the appendix for more details.
\end{proofsketch}

\newcommand{\leftS}[1]{\ensuremath{\redu^{\ell}(#1)}}
\newcommand{\rightS}[1]{\ensuremath{\redu^{r}(#1)}}
\newcommand{\reda}{\mu_a}
\newcommand{\redu}{\mu_u}

%

Maintaining string equality and membership in $D_k$ for $k>1$ is even closer related which is stated in the following lemma.

\begin{lemma}\label{lem:DkvsStringEquality}
    \begin{enumerate}[a)]
        \item If $\StringEquality$ can be maintained in $\DynFO$ with work $W(n)$ then $\Member{D_k}$ can be maintained in $\DynFO$ with work $\bigO(W(n) \cdot \log n + (\log n)^3)$, for each $k \geq 1$.
        \item If $\Member{D_k}$ can be maintained in $\DynFO$ with work $W(n)$ for all $k$, then $\StringEquality$ can be maintained in $\DynFO$ with work $\bigO(W(n))$.
    \end{enumerate}
\end{lemma}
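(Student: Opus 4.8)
The two implications run in opposite directions, so I would treat them separately. For~(a) the plan is to reuse the dynamic program of Theorem~\ref{lem:Dklinearithmic} essentially verbatim, but to invoke $\StringEquality$ as a \emph{black box} with work $W(n)$ instead of plugging in the concrete $\bigO(n)$-work program of Lemma~\ref{lem:stringEqualityLinear}; the claimed bound then falls out of a bookkeeping of how often and how expensively that subroutine is called. For~(b) the plan is to give a direct reduction that encodes a single equality instance as a single membership instance for a Dyck language with one bracket type per alphabet letter.

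\textbf{Part (a).} Recall that the program behind Theorem~\ref{lem:Dklinearithmic} maintains, over a balanced binary tree $T$ spanning the input word, for every node $x$ the unmatched opening and closing parentheses of $\str(x)$ (ignoring bracket types), together with a bit $M(x)$ recording whether the pairs matched \emph{inside} $\str(x)$ have compatible types. First I would note that maintaining the unmatched-parenthesis information is exactly the $D_1$ computation of Theorem~\ref{thm:D1polylog}: it touches $\bigO(\log n)$ nodes with $\bigO((\log n)^2)$ work each, i.e.\ $\bigO((\log n)^3)$ in total, and never uses $\StringEquality$. The only place where bracket \emph{types} enter is the update of $M(x)$: for a node $x$ with children $y_1,y_2$ one has $M(x)$ iff $M(y_1)$ and $M(y_2)$ hold and the block of unmatched opening brackets at the right end of $\str(y_1)$ agrees, after reversal and type-complementation, with the block of unmatched closing brackets at the left end of $\str(y_2)$ that it is matched against --- which is precisely a $\StringEquality$ query on these two boundary strings. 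A change affects $\bigO(\log n)$ ancestors, and by the effect analysis of Theorem~\ref{thm:D1polylog} the counts of unmatched brackets change by at most one per node, so each boundary string changes in only $\bigO(1)$ positions per tree level. Pushing these $\bigO(1)$ changes into one $\StringEquality$ instance per level and then querying it costs $\bigO(W(n))$ per level, hence $\bigO(W(n)\log n)$ overall; adding the $\bigO((\log n)^3)$ bookkeeping gives $\bigO(W(n)\log n + (\log n)^3)$. For $k=1$ every $M(x)$ is trivially true and the first summand disappears, consistently with Theorem~\ref{thm:D1polylog}.

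\textbf{Part (b).} Here I would let $\Sigma$ be the alphabet of $\StringEquality$, set $k \df |\Sigma|$, and maintain a single supplementary instance of $\Member{D_k}$ over a domain of size $2n$ with one bracket pair $\open_\sigma,\close_\sigma$ per $\sigma\in\Sigma$. The encoding represents the letter $\sigma$ at position $i$ of $u$ by $\open_\sigma$ at position $i$, and the letter $\sigma$ at position $i$ of $v$ by $\close_\sigma$ at position $2n-1-i$, leaving $\epsilon$-positions empty. The represented word then reads $\open_{u'_0}\cdots\open_{u'_{p-1}}\close_{v'_{q-1}}\cdots\close_{v'_0}$, where $u'$ and $v'$ are obtained from $u$ and $v$ by deleting the $\epsilon$-positions; by the nesting structure of the Dyck language this word lies in $D_k$ exactly when $p=q$ and $u'_j=v'_j$ for all $j$, i.e.\ exactly when $u_0\circ\cdots\circ u_{n-1}=v_0\circ\cdots\circ v_{n-1}$. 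Each $\set$/$\reset$ of $\StringEquality$ is translated into a single $\set$/$\reset$ on the $\Member{D_k}$ instance, and the query $\qequals$ into $\qmember$; both incur work $\bigO(W(2n))=\bigO(W(n))$, using that the work functions we consider (polylogarithmic, $n^\epsilon$, linear) satisfy $W(2n)=\bigO(W(n))$.

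\textbf{Main obstacle.} The delicate point lies in part~(a): I must verify that in the construction of Theorem~\ref{lem:Dklinearithmic} the subroutine $\StringEquality$ is really invoked only $\bigO(\log n)$ times per operation, and that --- thanks to the structured propagation of effects along the tree --- each boundary string stays up to date under only $\bigO(1)$ changes per level, since otherwise the per-level cost could exceed $W(n)$ and the product with $\log n$ would break the claimed bound. The remaining technical care is to fit these per-level updates into the $\PseudoDynFO$ restriction that change invocations to a supplementary instance must not occur inside parallel branches; this I would handle exactly as in the detailed construction already behind Theorem~\ref{lem:Dklinearithmic}.
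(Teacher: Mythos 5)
Your proposal is correct and follows essentially the same route as the paper: part~(a) by observing that the linear factor in Theorem~\ref{lem:Dklinearithmic} stems solely from the $\StringEquality$ calls at the $\bigO(\log n)$ affected tree nodes and replacing that subroutine by a black box of work $W(n)$, and part~(b) by the same encoding $\open_{u_0}\cdots\open_{u_{n-1}}\close_{v_{n-1}}\cdots\close_{v_0}$ of an equality instance as a Dyck membership instance. Your treatment is in fact somewhat more careful than the paper's sketch (explicit handling of $\epsilon$-positions, the $W(2n)=\bigO(W(n))$ point, and the count of per-level changes to the boundary strings), but these are refinements, not a different argument.
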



    \section{Conclusion}\label{section:conclusion}
    In this paper we proposed a framework for studying the aspect of work for the dynamic, parallel complexity class \DynFO. We established that all regular languages can be maintained in $\DynFO$ with $\bigO(n^\epsilon)$ work for all $\epsilon > 0$, and even with $\bigO(\log n)$ work for star-free regular languages. For context-free languages we argued that it will be hard to achieve work bounds lower than $\bigO(n^{\omega - 1 - \epsilon})$ in general, where $\omega$ is the matrix multiplication exponent. For the special case of Dyck languages $D_k$ we showed that $\bigO(n \cdot (\log n)^3)$ work suffices, which can be further reduced to $\bigO(\log^3 n)$ work for $D_1$. For range queries, dynamic programs with work $\bigO(n^{1+\epsilon})$ and $\bigO(n^\epsilon)$ exist, respectively.

We highlight some research directions. One direction is to improve the upper bounds on work obtained here. For instance, it would be interesting to know whether all regular languages can be maintained with polylog or even $\bigO(\log n)$ work and how close the lower bounds for context-free languages can be matched. Finding important subclasses of context-free languages for which polylogarithmic work suffices is another interesting question.
Apart from string problems, many \DynFO results concern problems on dynamic graphs, especially the reachability query \cite{DattaKMSZ18}. How large is the work of the proposed dynamic programs, and are more work-efficient dynamic programs possible?

The latter question also leads to another research direction: to establish further lower bounds. The lower bounds obtained here are relative to strong conjectures. Absolute lower bounds are an interesting goal which seems in closer reach than lower bounds for \DynFO without bounds on the work.


    \bibliography{bibliography}
    \appendix
   \newpage
      \section{Appendix for Section~\ref{section:model}}
   \begin{proofsketch}[for Proposition~\ref{prop:PseudoDynFOvsPRAM}]
  We briefly sketch, how  a PRAM $\prog$ can evaluate an update rule $\pi$ of the form
   \begin{algorithmic}[0]
    \OnChange{$\delta(\tpl p)$}
        \UpdateAtWhereBy{$R$}{$(t_1(\tpl p, \tpl x), \ldots, t_k(\tpl p, \tpl x))$}{$x_1 \le g_1(n) \land \ldots \land x_\ell \le g_\ell(n)$}
            $P$.
        \EndUpdateAtWhereBy
    \EndOnChange
  \end{algorithmic}
  with $P=P_1;P_2$ as before, 
  in constant time and with work $\bigO(\workbound(\pi))$. The initial procedure $P_1$ has to be evaluated only once and therefore contributes at most $\workbound(P_1)$. 
   The main procedure $P_2$ has to be evaluated once, for every tuple $\tpl x $ that fulfils $x_1 \le g_1(n) \land \ldots \land x_k \le g_\ell(n)$. To this end, it can use $m \df  g_1(n) \cdot \ldots \cdot g_\ell(n)$ processors, each taking care of one evaluation, as in the following pseudo-code.
   \begin{algorithm}
     \begin{algorithmic}[1]
       \Procedure{$\text{Evaluate}_{\delta}^{R}$}{$\tpl p$}
               \State Evaluate $P_1$
               \Parfor{$x_1 = 1\ \To\ g_1(n)$}
                   \State $\ddots$
                   \Parfor {$x_\ell = 1\ \To\ g_\ell(n)$}
                       \State $R\big(t_1(\tpl p, \tpl x), \ldots, t_k(\tpl p, \tpl x)\big) \gets \Call{$\text{Eval}_{P_2}$}{\tpl p, \tpl x}$
                   \EndFor
               \EndFor
           \EndProcedure
       \end{algorithmic}
       \caption{Evaluation of the update rule for $\delta(\tpl p)$.}
       \label{alg:EvalUpdateRule}
     \end{algorithm}
       This amounts to
       at most  $m$ evaluations of $P_2$ and  it thus suffices to show that one evaluation of $P_2$ only requires work $\bigO(\workbound(P_2))$.

       The proof of this is relatively straightforward, but tedious. Each computation path consists only of a constant number of steps (unless a query towards a supplementary instance occurs). Thus, assuming sufficient parallelisation, the PRAM only needs time $\bigO(1)$.  Along each path, $\workbound$ basically yields only 1, but any constant number is captured by the big $\bigO$. This explains, why we can use $\max$ instead of addition, everywhere. 
       Parallel branches, \textbf{exists}-expressions and \textbf{unique}-expressions translate into parallel branching of the PRAM, and that is captured in the definition of $\workbound$ by respective factors. Likewise, query operations to supplementary instances are captured, by respective factors in the definition of $\workbound$.
\end{proofsketch}


  \newpage
      \section{Appendix for Section~\ref{section:regular}}
   
\begin{proofsketch}[of Proposition~\ref{prop:monoidWorkBound}, continued]
    We start by introducing necessary notation.
    In the following we fix the length $n$ of a sequence $m_0,\ldots,m_{n-1}$ of monoid elements, we chose $t$ as described above, and assume that $t^h=n$.
    We write $m[\ell,r)$ for the subsequence $m_\ell,\ldots,m_{r-1}$, and, depending on the context, for the product $m_\ell\circ\ldots\circ m_{r-1}$. If $r \le \ell$ then $m[\ell,r)$ is the neutral monoid element.

    In order to define special intervals, we consider positions $j\in[n]$ in $t$-adic representation as \( \sum_{i=0}^{h-1} j_i t^i \),
    where $j_i\in[t]$, for every $0 \le i < h$. 
    We say that a number $j$ has \emph{height} $k$ if $k < h$ is the maximum number such that \mbox{\( \sum_{i=k}^{h-1} j_i t^i= \sum_{i=0}^{h-1} j_i t^i\)}, that is, if $j_0,\ldots,j_{k-1}$ are zero and $j_k$ is not (or $j=0$ and $k=h-1$). 

    We say that an interval $[i,j)$ of positions is \emph{special} if, for
    some $k$ and $p$, the height of $i$ and $j$ is at
    least $k$ and  $pt^{k+1}\le i \le j \le (p+1)t^{k+1}$.  
    It is easy to see that, for each $k$ and $p$ there are $\bigO(t^2)$ such special intervals.

    The dynamic program maintains the product $m[i,j)$ for all special intervals $[i,j)$.

    To show that the two crucial observations (1) and (2) stated above are correct, we need some more notation.
    For a number $j\in[n]$ we write $\tpred{j}{k}$ for the largest number $p$ of height at least $k$ with $p\le j$ and $\tsucc{j}{k}$ for the smallest number $q$ of height at least $k$ with $j\le q$,
    i.e. 
     
    \[\tpred{j}{k}\df\displaystyle j_kt^k + \sum_{i=k+1}^{h-1} j_i t^i \quad \text{ and } \quad \tsucc{j}{k}\df\displaystyle (j_k+1) t^k+\sum_{i=k+1}^{h-1} j_i t^i.\]
%
    Towards observation (1), it is not hard to see that the following equality holds for all $0 \leq \ell, r \leq n-1$ and that the product consists of $2(h-1)+1=\bigO(1)$ factors. Suppose that $k < h$ is the smallest number such that $\tsucc{\ell}{k+1} > \tpred{r}{k+1}$, then:
    \[
        m[\ell,r)= \prod_{j=0}^{k-1} m[\tsucc{\ell}{j},
        \tsucc{\ell}{j+1}) \; \circ \;
            m[\tsucc{\ell}{k},\tpred{r}{k}) \; \circ \; 
            \prod_{j=0}^{k-1} m[\tpred{r}{j+1}, \tpred{r}{j})
    \]
    It is thus straightforward to answer range queries by a dynamic program with work $\bigO(1)$.

    Furthermore, towards observation (2), if position $p$ is changed, only the information for the $ht^2$ special intervals that include $p$ needs to be updated.
    This can be done in $h=\bigO(1)$ phases. In phase $k$, the special intervals with height $k$ are updated.
    The new product $m'[i,j)$ can be obtained via
			\[m'[i,j) = m[i,\tpred{p}{k}) \circ m'[\tpred{p}{k},\tsucc{p}{k}) \circ m[\tsucc{p}{k},j),\]
    where $m'[\tpred{p}{k},\tsucc{p}{k})$ is the already computed new value of the special interval $[\tpred{p}{k},\tsucc{p}{k})$ of height $k-1$. 
    This completes the proof sketch for Proposition~\ref{prop:monoidWorkBound}.
\end{proofsketch}

Towards a proof of Theorem~\ref{theo:FO-work}, 
we employ the following algebraic property of syntactic monoids of
star-free languages, due to Schützenberger, McNaughton and Papert
(see, e.g., \cite[Chapter V.3]{Straubing1994}). A monoid $M$ is called
a \emph{group} if for each $m \in M$ there is an inverse $m^{-1}$ with
$m \cdot m^{-1} = 1$. A monoid is called \emph{group-free} if it does
not contain a nontrivial group.  

\begin{lemma}[{\cite[Theorem V.3.2]{Straubing1994}}]\label{lem:FO-group-free}
    A language $L$ is star-free iff $M(L)$ is finite and group-free.
\end{lemma}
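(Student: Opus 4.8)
The statement is the classical Schützenberger--McNaughton--Papert characterisation of star-free languages, and the plan is to establish its two implications separately, as is standard in algebraic automata theory. Throughout I would work with the syntactic morphism $\varphi \colon \Sigma^* \to M(L)$ and use the equivalence, for finite monoids, between being \emph{group-free} in the sense of the paper and being \emph{aperiodic}, i.e.\ there being an integer $n$ with $x^n = x^{n+1}$ for every $x \in M$.

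For the forward direction, that a star-free $L$ has a finite, group-free syntactic monoid, finiteness is immediate since star-free languages are regular and regularity is equivalent to finiteness of $M(L)$. For group-freeness I would argue by induction on a star-free expression for $L$, exploiting that the class of languages whose syntactic monoid is finite and aperiodic forms a pseudovariety. It therefore suffices to check the atomic languages (the empty language and singletons, whose syntactic monoids are trivially aperiodic) and closure under the star-free operations. Closure under Boolean combinations is routine, since $M(L^c) = M(L)$ and $M(L_1 \cap L_2)$ divides the direct product $M(L_1) \times M(L_2)$, and direct products of aperiodic monoids are aperiodic. The only substantial step is concatenation: here I would invoke the Schützenberger product of $M(L_1)$ and $M(L_2)$, recall that $M(L_1 L_2)$ divides it, and verify that the Schützenberger product of two aperiodic monoids is again aperiodic.

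For the backward direction, assume $M = M(L)$ is finite and aperiodic. Since $L$ is a finite union of classes $\varphi^{-1}(m)$, it suffices to prove each such class star-free, which I would do by induction on $|M|$; the base case of the trivial monoid is clear. For the inductive step the aperiodicity bound $x^n = x^{n+1}$ is the engine: using a factorisation argument that marks, in each input word, the first and last positions at which a prefix (respectively suffix) enters a given $\mathcal{J}$-class of $M$, one expresses $\varphi^{-1}(m)$ as a Boolean combination of concatenations of languages recognised by proper submonoids and proper quotients of $M$, to which the induction hypothesis applies. Since star-free languages are closed under exactly these operations, the result would follow.

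I expect the backward direction, specifically this marker/factorisation step, to be the main obstacle: one must carefully control the behaviour of Green's relations in the aperiodic monoid (in particular that $\mathcal{H}$ is trivial, so that each $\mathcal{J}$-class is genuinely group-free) and ensure that the stable factorisation of a word actually reduces the ambient monoid, so that the induction is well-founded. As this is a known theorem, I would follow the algebraic development of \cite{Straubing1994} for the precise combinatorial bookkeeping rather than reconstruct it from scratch.
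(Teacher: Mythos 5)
The paper gives no proof of this lemma at all: it is quoted directly as Theorem V.3.2 of Straubing's book, i.e., as the classical Schützenberger--McNaughton--Papert theorem, and used as a black box. Your outline is a correct sketch of the standard proof of that theorem (induction over star-free expressions via closure properties and the Schützenberger product in one direction, induction on the size of the aperiodic monoid with a Green's-relations factorisation in the other) and, like the paper, it ultimately defers to Straubing's development for the details, so the two treatments rest on exactly the same classical result.
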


In \cite{FrandsenMiltersen+1997} it was shown that $\RangeEval{M}$ for a group-free monoid $M$ can be maintained in $\bigO(\log \log n)$ sequential time on RAMs with cell size $\bigO(\log n)$. We adapt their algorithm to show that $\RangeEval{M}$ can be maintained in \DynFO with logarithmic work. 

The main ingredient for the algorithm of  \cite{FrandsenMiltersen+1997} is the following consequence of a decomposition theorem~by Krohn and Rhodes.

\begin{lemma}[\cite{KrohnRhodes1965}\cite{FrandsenMiltersen+1997}]\label{lem:KrohnRhodes}
    Let $M$ be a finite, group-free monoid. Then one of the following holds:
    \begin{enumerate}[(a)]
        \item $M = \{1\}$,
        \item there is a $k$ such that $M - \{1\} = \{\sigma, \sigma^2, \ldots, \sigma^k = \sigma^{k+1}\}$,
        \item $\sigma\sigma' = \sigma$, for all $\sigma,\sigma' \in M - \{1\}$, or
        \item $M = V \cup T$ where $T \neq M$ and $V \neq M$ are submonoids of $M$ and
            $T - \{1\}$ is a left ideal of $M$, i.e. $\sigma_M \sigma_T \in T - \{1\}$ for all $\sigma_M \in M$ and $\sigma_T \in T - \{1\}$
    \end{enumerate}
\end{lemma}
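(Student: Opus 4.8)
The plan is to reduce everything to the internal structure of the subsemigroup $S\df M\setminus\{1\}$, and to treat (d) as the \emph{reducible} case of an analysis that otherwise lands in (a), (b) or (c); the whole argument is naturally organized as an induction on $|M|$ with (d) as the reduction step. First I would record a consequence of group-freeness: the group of units of $M$ is a subgroup, hence trivial, so $1$ is the only unit. Since a one-sided inverse is two-sided in a finite monoid, $ab=1$ forces $a=b=1$; therefore $S$ is closed under multiplication, i.e.\ a subsemigroup, and $MS=S$. The four cases are then statements purely about $S$: (a) $S=\emptyset$, (b) $S$ monogenic with $\sigma^k=\sigma^{k+1}$, (c) $S$ a left-zero semigroup, and (d) $S$ splits off a proper left ideal.

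The key construction realizes (d). Introduce the property $(\star)$: every proper left ideal $I\subsetneq S$ of $M$ satisfies $\langle S\setminus I\rangle=S$, where $\langle X\rangle$ denotes the subsemigroup generated by $X$. If $(\star)$ fails, choose a proper left ideal $I\subsetneq S$ with $\langle S\setminus I\rangle\neq S$ and an element $a\in I\setminus\langle S\setminus I\rangle$, and set $T\df I\cup\{1\}$ and $V\df\langle S\setminus I\rangle\cup\{1\}$. Then $T\setminus\{1\}=I$ is a left ideal of $M$ (and a subsemigroup, since a left ideal absorbs every left multiplication), both $V$ and $T$ are submonoids, $M=V\cup T$ because $V\supseteq S\setminus I$ and $T\supseteq I$, while $T\neq M$ because $S\setminus I\neq\emptyset$ and $V\neq M$ because $a\notin V$. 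This is precisely case (d). A clean way to guarantee a usable $I$ is to locate an indecomposable element $a\in S\setminus S^2$ inside a proper left ideal: every product of elements of $S\setminus I$ lies in $S^2$ and so can never equal $a$.

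It remains to assume $(\star)$ and derive (a)/(b)/(c), splitting on whether $M$ has a proper left ideal contained in $S$. If it has none, then $Mx=S$ for all $x\in S$; I would then show $Sx=S$ for every $x$, since otherwise $Sx=S\setminus\{x\}$ is itself a proper left ideal of $M$ (using $MS=S$), a contradiction. Thus $S$ is left-simple, and a finite left-simple aperiodic semigroup is a left-zero semigroup, giving (c) (or $|S|\le 1$, giving (a)). If $M$ does have a proper left ideal, let $I^{*}$ be the union of all of them and put $U\df S\setminus I^{*}=\{x\in S:Mx=S\}$; property $(\star)$ forces $\langle U\rangle=S$. When $S\neq S^{2}$ this pins everything down: any two distinct $a,b\in U$ satisfy $b\in Ma=Sa\subseteq S^{2}$ (and symmetrically), so $|U|\ge 2$ would give $\langle U\rangle\subseteq S^{2}\subsetneq S$, contradicting $(\star)$; hence $|U|=1$ and $S=\langle U\rangle$ is monogenic, and aperiodicity yields $\sigma^{k}=\sigma^{k+1}$, i.e.\ case (b).

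The main obstacle is completing this last case analysis when $S=S^{2}$ (no indecomposable generators), where the simple counting argument above no longer applies. There I expect to need the structure theory of finite aperiodic semigroups: the minimal two-sided ideal of $S$ is completely simple and, being aperiodic, a rectangular band, and a Green's-relations argument that isolates the left-zero factor is what ultimately forces either (c) or the existence of a proper left ideal violating $(\star)$, hence (d). Aperiodicity is essential throughout, since it is exactly what excludes the nontrivial cyclic groups that would otherwise sit inside monogenic subsemigroups and destroy the $\sigma^{k}=\sigma^{k+1}$ conclusion; this is the content imported from the Krohn--Rhodes decomposition that the lemma is citing.
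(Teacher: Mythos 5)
First, a point of reference: the paper does not prove this lemma at all --- it is imported as a black box from \cite{KrohnRhodes1965,FrandsenMiltersen+1997} --- so your attempt can only be judged on its own merits, not against an in-paper argument. Your framework is sound and in fact tight: with $S = M\setminus\{1\}$ (a subsemigroup, by your units argument), case (d) holds \emph{if and only if} your property $(\star)$ fails, so the lemma reduces exactly to ``group-free $+$ $(\star)$ implies (a), (b) or (c).'' The two branches you complete are essentially correct: in the left-simple branch, aperiodicity makes every element an idempotent power, idempotents in a left-simple semigroup are right identities, and $x^{n+1}=x^n$ then forces $x=x^n$, giving a left-zero semigroup; in the branch $S\neq S^2$, your counting argument works once one adds the observation (which you state only in the context of (d)) that $S\setminus S^2\subseteq U$, so that $U\neq\emptyset$ and $S\setminus U$ really is a \emph{proper} left ideal to which $(\star)$ can be applied --- without this, $|U|=0$ is not excluded and the step ``$(\star)$ forces $\langle U\rangle = S$'' has nothing to apply $(\star)$ to.

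The genuine gap is the branch you yourself flag: $(\star)$ holds, a nonempty proper left ideal exists inside $S$, and $S=S^2$. This case must be shown to be \emph{vacuous} (neither (b) nor (c) is compatible with it: a monogenic aperiodic $S$ with $S=S^2$ is a single idempotent, and it, like any left-zero semigroup, has no nonempty proper left ideal), and the route you sketch does not close it. The kernel $K$ of $S$ is indeed an aperiodic completely simple semigroup, hence a rectangular band, and if $K=S$ your plan works (the complement of a union of columns is a subsemigroup, violating $(\star)$ when there are at least two columns); but when $K\subsetneq S$ the complement of a column is \emph{not} closed under products --- elements of $S\setminus K$ can multiply into any column --- so no violation of $(\star)$ follows, and Green's relations on $K$ alone do not rescue this. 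What actually closes the case is a different use of aperiodicity: take a maximal nonempty proper left ideal $J$ and set $A=S\setminus J$; maximality gives $A\subseteq Mx$, hence $\langle A\rangle\subseteq Mx$, for every $x\in A$, so $(\star)$ forces $Mx=S$ for all $x\in A$, i.e.\ $A=U$ and $J=S\setminus U$. If $|U|\geq 2$, then distinct $u,v\in U$ satisfy $u\in Sv$ and $v\in Su$, so $u=(ts)^N u$ for an idempotent power $(ts)^N\in S$, giving $u\in Su$ and thus $Su=S$; right translation by $u$ is then a surjection of the finite set $S$, hence a bijection, and $\rho_u^N=\rho_{u^N}=\rho_{u^{N+1}}=\rho_u^{N+1}$ with $\rho_u$ bijective forces $\rho_u=\mathrm{id}$. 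So every element of $U$ is a right identity, whence $\langle U\rangle=U$ and $(\star)$ makes $S$ left-zero, contradicting the existence of $J$. If $|U|=1$, $(\star)$ makes $S$ monogenic, and $S=S^2$ plus group-freeness collapses $S$ to one idempotent, again contradicting the existence of $J$. This ``surjective translation in a finite aperiodic semigroup is the identity'' argument (or an equivalent) is the missing ingredient; as submitted, your proof covers only part of the statement.
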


We illustrate this lemma by the following example. It will be reused
to illustrate the construction in our dynamic program for maintaining $\RangeEval{M}$. 

\begin{example}\label{ex:FOdecomposition}
    Consider the language $L = L\big(c^\ast a c^\ast a c^\ast b (a+b+c)^\ast\big)$ that contains all words over the alphabet $\Sigma = \{a,b,c\}$
    having at least one $b$ and at least two $a$'s somewhere before the first $b$.
    It syntactic  monoid is $M = \{ 1,A,A^2,B,D,E \}$ with the
    following multiplication table:

    \begin{table}[H]
        \begin{center}
            \begin{tabular}{l|llllll}
                ${}_x\setminus {}^y$    & $1$   & $A$   & $A^2$ & $B$   & $D$   & $E$   \\ \hline 
                $1$                     & $1$   & $A$   & $A^2$ & $B$   & $D$   & $E$   \\
                $A$                     & $A$   & $A^2$ & $A^2$ & $D$   & $E$   & $E$   \\
                $A^2$                   & $A^2$ & $A^2$ & $A^2$ & $E$   & $E$   & $E$   \\
                $B$                     & $B$   & $B$   & $B$   & $B$   & $B$   & $B$   \\
                $D$                     & $D$   & $D$   & $D$   & $D$   & $D$   & $D$   \\
                $E$                     & $E$   & $E$   & $E$   & $E$   & $E$   & $E$
            \end{tabular}
        \end{center}
        \label{tbl:exampleFO}
      \end{table}
The elements of $M$ (and thus the states of the minimal automaton for
$L$) correspond to the following languages.  In particular, $E$ corresponds to $L$.
  \begin{table}[H]
        \begin{center}
            \begin{tabular}{c|l}
                 $1$    & $L\big(c^\ast\big)$\\
            $A$    & $L\big(c^\ast a c^\ast\big)$\\
            $A^2$  & $L\big(c^\ast a c^\ast a c^\ast\big)$\\
            $B$    & $L\big(c^\ast b (a+b+c)^\ast\big)$ \\
            $D$    & $L\big(c^\ast a c^\ast b (a+b+c)^\ast\big)$\\
            $E$    & $L\big(c^\ast a c^\ast a c^\ast b (a+b+c)^\ast\big)$
            \end{tabular}
        \end{center}
        \label{tbl:monoid-languages}
      \end{table}
     

    The monoid can be decomposed as $M=V\cup T$ with $T = \{ 1,B,D,E \}$ and $V = \{ 1,A,A^2 \}$,
    which are Type (c) and Type (b) submonoids, respectively, of $M$
    in the sense of Lemma \ref{lem:KrohnRhodes}.
\end{example}


The characterization of group-free monoids provided by Lemma \ref{lem:KrohnRhodes} is the basis for a recursive approach for maintaining $\RangeEval{M}$.

\newcommand{\res}{\textnormal{res}}
    \newcommand{\rangev}{\textnormal{range}_v}
    \newcommand{\ranget}{\textnormal{range}_t}
    \newcommand{\rangeu}{\textnormal{range}_u}
    \newcommand{\T}{T_{\neq 1}}
\begin{theorem}\label{thm:group-free-work}
    Let $M$ be a group-free monoid. Then $\RangeEval{M}$ can be maintained in \DynFO with work $\bigO(\log n)$
    for each change and query operation.
\end{theorem}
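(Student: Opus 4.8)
The plan is to induct on the structure of the group-free monoid $M$ provided by the Krohn--Rhodes decomposition of Lemma~\ref{lem:KrohnRhodes}, using the data structure of Lemma~\ref{lem:NextInKWork} as the only nontrivial subroutine. Throughout, the dynamic program stores, via a supplementary \NextInK instance, the set of positions carrying a relevant non-identity monoid element, so that it can locate in $\bigO(\log n)$ work the next position in a range that carries such a letter. A change $\set_m(i)$ only touches the single affected position, so updating the \NextInK sets and the supplementary instances costs $\bigO(\log n)$ per level of the recursion.

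For the three base cases (a)--(c) of Lemma~\ref{lem:KrohnRhodes}, a query $\qrange(\ell,r)$ is answered directly. In case (a) the product is always $1$. In case (c), where $\sigma\sigma'=\sigma$ for all non-identity elements, the product $m_\ell\circ\cdots\circ m_r$ equals the first non-identity element in the range (or $1$ if there is none), found by a single successor query on the \NextInK instance. In case (b), where every non-identity element is a power of a fixed $\sigma$ with $\sigma^{k}=\sigma^{k+1}$, the product equals $\sigma^{\min(k,s)}$, where $s$ is the sum of the exponents occurring in the range; since the cap $k=\bigO(1)$ is reached after at most $k$ non-identity positions, it suffices to retrieve the first $k$ non-identity positions of the range by $\bigO(k)$ successor queries and read off their exponents. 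Each such query costs $\bigO(\log n)$ work.

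The recursive case is (d), $M=V\cup T$ with $T-\{1\}$ a left ideal and $V,T$ proper submonoids. Here the program maintains supplementary instances of $\RangeEval{V}$ and $\RangeEval{T}$ -- which, by the induction hypothesis, are maintainable in \DynFO{} with $\bigO(\log n)$ work since $\size{V},\size{T}<\size{M}$ -- together with a \NextInK instance marking the positions carrying an element of $T-\{1\}$. To answer $\qrange(\ell,r)$, the program first locates the first position $j$ in $[\ell,r]$ carrying a $T-\{1\}$-element (a single successor query). If there is none, all letters of the range lie in $V$ and the answer is the $\RangeEval{V}$-product of $[\ell,r]$. Otherwise, the left ideal property guarantees that the tail $m_j\circ\cdots\circ m_r$ lies in $T$ and is recovered from the $\RangeEval{T}$ instance, while the prefix $m_\ell\circ\cdots\circ m_{j-1}$ lies in $V$ and comes from the $\RangeEval{V}$ instance; the answer is the product of these two elements, a $\bigO(1)$-work monoid multiplication. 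As in the algorithm of \cite{FrandsenMiltersen+1997}, the sequences fed to the two supplementary instances are obtained by projecting each letter to its $V$- and $T$-components and by folding each maximal run of $V$-letters following a $T-\{1\}$-letter into the preceding $T$-letter, so that a single change propagates as only $\bigO(1)$ changes and range look-ups per level.

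The work bound follows by unwinding the recursion: each application of case (d) strictly shrinks the monoid, so the recursion tree has depth at most $\size{M}-1=\bigO(1)$ and constant fan-out, and at every node the program performs only a constant number of \NextInK operations, monoid multiplications, and calls to its child instances, each contributing $\bigO(\log n)$ work; summed over the constantly many nodes this gives $\bigO(\log n)$ work per operation. The hard part will be the recursive case: one must verify the algebraic splitting identity for case (d) -- in particular that the tail product from the first $T-\{1\}$-position is captured correctly by the projected $\RangeEval{T}$ instance even in the presence of interspersed $V$-letters -- and must describe precisely how the projected sequences are maintained so that one change induces only $\bigO(1)$ changes per level. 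This is exactly the content of the data structure behind Theorem~2.4.2 of \cite{FrandsenMiltersen+1997}, which we recast in our framework with the \NextInK subroutine of Lemma~\ref{lem:NextInKWork} replacing their searches for a next neighbour.
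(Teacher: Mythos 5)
Your overall strategy --- induction on the Krohn--Rhodes decomposition of Lemma~\ref{lem:KrohnRhodes}, with the successor structure of Lemma~\ref{lem:NextInKWork} as the workhorse --- is exactly the paper's, and your treatment of cases (a)--(c) matches the paper's proof and is correct. But the recursive case (d), which you yourself flag as the hard part, contains a genuine error: you apply the left ideal property in the wrong direction. The hypothesis says $\sigma_M \sigma_T \in T-\{1\}$ for all $\sigma_M \in M$ and $\sigma_T \in T-\{1\}$, i.e.\ a product lies in $T-\{1\}$ whenever its \emph{last} factor does. Your tail $m_j\circ\cdots\circ m_r$ begins with an element of $T-\{1\}$, and such a product need not lie in $T$ at all. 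Concretely, take the flip-flop monoid $M=\{1,s,r\}$ in which $s$ and $r$ are right zeros ($xs=s$ and $xr=r$ for every $x$), with $T=\{1,s\}$ and $V=\{1,r\}$; this is group-free and satisfies all conditions of case (d), yet $s\circ r=r\notin T$. So for the input sequence $s,r$ your query algorithm fails: the prefix is empty and the ``tail'' $r$ can be neither produced by nor even stored in a $\RangeEval{T}$ instance. The same counterexample kills your folding rule: folding a run of $V$-letters into the \emph{preceding} $T-\{1\}$-letter creates values $\sigma_T\cdot\sigma_V$ (here $s\cdot r=r$) that may lie outside $T$, so the projected sequence is not a sequence over the monoid $T$ and the induction hypothesis cannot be applied to it.

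The paper folds in the opposite direction, and that is precisely what makes the left ideal property usable. It defines the switching positions $K=\{i \mid m_i \in T-\{1\} \text{ and } m_{i+1}\notin T-\{1\}\}$, i.e.\ the \emph{ends} of maximal runs of $T-\{1\}$-letters, and stores at each switching position the product of the preceding run of non-$T$ letters together with the $T-\{1\}$-run ending there. Each such block product ends with an element of $T-\{1\}$ and therefore lies in $T-\{1\}$ by the left ideal property, so these values form a legitimate $\RangeEval{T}$ instance $u$. A query $\qrange(\ell,r)$ is then answered not by your two-factor prefix/tail split but by a five-factor product: the partial block at the left end (a $V$-range product times a $T$-range product over the letter-wise projections $v$ and $t$), the full blocks in between (a range product over $u$), and the partial block at the right end, all multiplied together in $M$. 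Changes still touch only $\bigO(1)$ entries of $u$ (a case analysis over block merges and splits), each recomputed by $\bigO(1)$ range queries to $v$ and $t$, giving the claimed $\bigO(\log n)$ work. If you rework case (d) along these lines, the rest of your argument goes through.
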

\begin{proof}   
    The proof is by induction on the decomposition of $M$. We describe, for each case of Lemma~\ref{lem:KrohnRhodes},
    how a dynamic program $\prog$ can maintain $\RangeEval{M}$ with $\bigO(\log n)$ work.
    
    Let $m = m_0 \ldots m_{n-1}$ be an input sequence. Case (a) from Lemma ~\ref{lem:KrohnRhodes} is trivial.

    For Case (b), suppose that $M = \{1, \sigma, \sigma^2, \ldots, \sigma^k = \sigma^{k+1}\}$ for some fixed $k$. As in this case the monoid elements (except for $1$) only differ in their exponents, the result of a range query $\qrange(\ell,r)$ can be computed by summing over the exponents of all elements $\neq 1$ between $\ell$ and $r$. A closer look reveals that it suffices to sum over the exponents of the first up to $k$ elements differing from the identity element $1$, because $\sigma^k = \sigma^{k+1}$. For finding these up to $k$  elements quickly, the program uses a supplementary instance of \NextInK for the set $K = \{ i \mid w_i \neq 1 \}$. 
    
    These ideas lead to Algorithm \ref{alg:RangeUpdatesB} for answering range queries. As only a constant number of successors are queried from $K$, it requires at most $\bigO(\log n)$ work due to Lemma \ref{lem:NextInKWork}. For changes of the input instance, only the supplementary \NextInK instance needs to be updated, which also requires at most $\bigO(\log n)$ work by Lemma \ref{lem:NextInKWork}.

    \newcommand{\sumVar}{s}
    \newcommand{\curVar}{i}
    \newcommand{\exponent}{\textnormal{exponent}}
    \begin{algorithm}[t!]
        \begin{algorithmic}[1]
            \OnQuery{$\qrange(\ell, r)$}
                \State $\sumVar \gets 0$
                \State $\curVar \gets (\ell-1)$
                \For{$1\ \To\ k$}
                    \State $\curVar = K.\mtext{succ}(\curVar)$
                    \If{$\curVar \le r$}
                        \State $\sumVar \gets \sumVar+\exponent(\curVar)$
                    \EndIf
                \EndFor
                \If{$s>0$}
                    \State \Return $a^{\sumVar}$
                \Else
                    \State \Return $1$
                \EndIf
            \EndOnQuery
        \end{algorithmic}
        \caption{Querying the product $m_{\ell} \cdot \ldots \cdot m_r$ in Case (b) of the proof of Theorem \ref{thm:group-free-work}.}
        \label{alg:RangeUpdatesB}
    \end{algorithm}

    For Case (c), suppose that $M$ is a monoid with $\sigma\sigma' = \sigma$ for all $\sigma,\sigma' \in M - \{1\}$. In this case, a range query $\qrange(\ell,r)$ results in the first element in the interval $[\ell,r]$ which is not the identity element (if such an element does not exist, then the result is the identity element). In order to find such an element quickly, the dynamic program maintains a supplementary \NextInK instance for the set $K = \{ i \mid w_i \neq 1 \}$.
    
    Now range queries can be answered according to Algorithm \ref{alg:RangeUpdatesC}, which clearly requires at most $O(\log n)$ work. For changes of the input instance, again only the supplementary \NextInK instance needs to be updated, which also requires at most $\bigO(\log n)$ work.

    \begin{algorithm}[t!]
        \begin{algorithmic}[1]
            \OnQuery{$\qrange(\ell,r)$}
                    \State $i \gets K.\mtext{succ}(\ell-1)$
                    \If{$i \le r$}
                        \State \Return $m_i$
                    \Else
                        \State \Return $1$
                    \EndIf
            \EndOnQuery
        \end{algorithmic}
        \caption{Querying the product $m_{\ell} \cdot \ldots \cdot m_r$ in Case (c)  of the proof of Theorem \ref{thm:group-free-work}.}
        \label{alg:RangeUpdatesC}
    \end{algorithm}
    

    For Case (d), suppose that $M = V \cup T$ where $T \neq M$ and $V \neq M$ are submonoids of $M$ and $T - \{1\}$ is a left ideal of $M$.
    Because of $T \neq M$ and $V \neq M$, by induction we can assume that $\RangeEval{T}$ and $\RangeEval{V}$ can be maintained with work $\bigO(\log n)$. 
    
    Denote $T - \{1\}$ by $\T$. The idea of the dynamic program is to split the input sequence~$m$ into $\T$-blocks, consisting of a maximal sequence of consecutive elements of $\T$ only, and $\overline{\T}$-blocks consisting of maximal sequence of elements of $\overline{\T}$ only. The program maintains precomputed  products of each $\overline{\T}$-block combined with its successive $\T$-block. Because $\T$ is a left ideal, all these precomputed block products result in elements of $\T$. A range query $\qrange(\ell,r)$ for $m$ can then be answered by computing the product of these precomputed block products between $\ell$ and $r$ and adding the at most two incomplete block products at the beginning and the end of the queried range.

    For encoding the blocks as well as the precomputed products, the program uses three sequences $t$, $v$ and $u$ with the following intention. The sequences $v$ and $t$ partition $m$ into elements of $\T$ and $V-\T$, respectively. For the definition of $u$, let $K = \{ i \mid w_i \in \T \land w_{i+1} \notin \T\}$ be the set of \emph{switching positions}, i.e.\ positions of the input sequence where the monoid element changes from $\T$ to $V-\T$. At a switching position $i$, $u_i$ stores the product of elements between the previous switching position and $i$. In order to find switching positions quickly, the dynamic program maintains a supplementary \NextInK instance for the set~$K$.
%
%
%
    
    More precisely, the three sequences are defined as
    \begin{align*}
        t_i = &
            \begin{cases}
                m_i    & \textnormal{if $m_i \in \T$}\\
                1      & \textnormal{otherwise}
            \end{cases}, \\
        v_i = &
            \begin{cases}
                m_i    & \textnormal{if $m_i \in V-\T$}\\
                1      & \textnormal{otherwise}
            \end{cases},       \\
        u_i = &
            \begin{cases}
                \prod_{j=k_i+1}^{i}m_j   &\textnormal{if $i \in K$}\\
                1_T                    & \textnormal{otherwise}
            \end{cases},
    \end{align*}
    where $k_i = K.\mtext{pred}(i)$ if $K.\mtext{pred}(i) \neq \bot$ and $k_i=0$ otherwise.
    Note that $m_i \in \T$ for all $i$ with $u_i \neq 1$ because $\T$ is a left interval. We refer to Example \ref{example:FOcaseD} for an illustration of these sequences. 

    The result of a range query $\qrange(\ell,r)$ can then be computed as
    \begin{equation*}
        \prod_{i=\ell}^{r}m_i = \prod_{i=\ell}^{k_1}v_i \prod_{i=\ell}^{k_1}t_i \prod_{i=k_1+1}^{k_q}u_i \prod_{i=k_q+1}^{r}v_i \prod_{i=k_q+1}^{r}t_i
    \end{equation*}
    where $k_1 < \ldots < k_q$ are all positions in $K \cap \{ \ell-1, \ldots, r\}$. The resulting dynamic program, see Algorithm \ref{alg:RangeUpdatesD}, clearly requires at most $O(\log n)$ work.

    \begin{algorithm}[t!]
        \begin{algorithmic}[1]
            \OnQuery{$\qrange(\ell,r)$}
                    \State $k_1 \gets K.\mtext{succ}(\ell-1)$
                    \State $k_q \gets K.\mtext{pred}(r)$
                    \State \Return $v.\qrange(\ell,k_1) \circ t.\qrange(\ell,k_1) \circ u.\qrange(k_1+1,k_q)$ \\ \hspace{5.0cm}$\circ \; v.\qrange(k_q+1,r) \circ t.\qrange(k_q+1,r)$
            \EndOnQuery
        \end{algorithmic}
        \caption{Querying the product $m_{\ell} \cdot \ldots \cdot m_r$ in Case (d)  of the proof of Theorem \ref{thm:group-free-work}.}
        \label{alg:RangeUpdatesD}
    \end{algorithm}
    

    For changes of the input instance, the supplementary \NextInK instance for the switching positions needs to be updated, which requires at most $\bigO(\log n)$ work.
    Also the sequences $t$, $v$ and $u$ need to be updated. For $v$ and $t$ this is easy, but updating $u$ requires some effort.
    
    For maintaining $u$ after a change of the input instance at position $p$, we distinguish the cases as summarized in Table \ref{tbl:FO-cases-update-u}.
    In every case, the precomputed block product stored in the next switching position after $p$ needs to be recomputed.
    Additionally, if the insertion created a new block by splitting a block (see Cases (4) and (5)) or removed a block by merging two blocks (see Cases (2) and (3)) either $u_{p}$ or $u_{p-1}$ has to be updated as well.

    \begin{table}
        \begin{center}
            \scalebox{0.85}{
            \begin{tabular}{l|l}
                Case condition & Changes to $u$ \\ \hline

                \begin{minipage}{0.5\textwidth}
                    \begin{enumerate}[(1)]
                        \item Switching positions do not change, i.e. either
                            \begin{enumerate}[(a)]
                                \item $m_p \in \T$ iff $m'_p \in \T$,
                                \item $m_{p-1} \notin \T$ and $m'_p, m_{p+1} \in \T$, or 
                                \item $m_{p-1}, m'_p \notin \T$ and $m_{p+1} \in \T$
                            \end{enumerate}
                    \end{enumerate}
                \end{minipage}
                & $\displaystyle u_k \gets \prod_{i=j+1}^k v_i \circ \prod_{i=j+1}^k t_i$\\ \hline

                \begin{minipage}{0.5\textwidth}
                    \begin{enumerate}[(2)]
                        \item Two $\T$-blocks are merged, i.e.
                            \begin{enumerate}[]
                                \item $m_{p-1}, m'_p, m_{p+1} \in \T$ and $m_p \notin \T$,
                            \end{enumerate}
                    \end{enumerate}
                \end{minipage}
                & $u_k$ as in Case (1) and $u_{p-1} \gets 1$\\ \hline

                \begin{minipage}{0.5\textwidth}
                    \begin{enumerate}[(3)]
                        \item Two $\overline{\T}$-blocks are merged, i.e.
                            \begin{enumerate}[]
                                \item $m_{p-1}, m'_p, m_{p+1} \notin \T$ and $m_p \in \T$,
                            \end{enumerate}
                    \end{enumerate}
                \end{minipage}
                & analogous to (2)\\ \hline

                \begin{minipage}{0.5\textwidth}
                    \begin{enumerate}[(4)]
                        \item A $\T$-block is split, i.e.
                            \begin{enumerate}[]
                                \item $m_{p-1}, m_p \in \T$ and $m'_p \notin \T$,
                            \end{enumerate}
                    \end{enumerate}
                \end{minipage}
                & $\displaystyle u_k \gets \prod_{i=j+1}^{p-1} v_i \circ \prod_{i=j+1}^{p-1} t_i$ and $\displaystyle u_{p-1} \gets \prod_{i=p}^{k} v_i \circ \prod_{i=p}^{k} t_i$, \\ \hline

                \begin{minipage}{0.5\textwidth}
                    \begin{enumerate}[(5)]
                        \item A $\overline{\T}$-block is split, i.e.
                            \begin{enumerate}[]
                                \item $m'_p, m_{p+1} \notin \T$ and $m_p \in \T$,
                            \end{enumerate}
                    \end{enumerate}
                \end{minipage}
                & analogous to (4)

            \end{tabular}
        }
        \end{center}
        \caption{
            Summary of the cases for updating $u$ after a change at position $p$.
            Suppose that $v, t$ and $K$ are already updated.
            By $m_p$ and $m'_p$ we denote the elements at position $p$ before and after the change operation, respectively.
            In all cases $k = K.\mtext{succ}(p)$ is defined as the next switching position after~$p$.
            For Case (1), $j$ is defined as $K.\mtext{pred}(p)$; for Case (3) as $K.\mtext{pred}(p-1)$.
        }
        \label{tbl:FO-cases-update-u}
    \end{table}

    In all cases, $u$ is updated with a constant number of range queries to $v$ and $t$ which results overall in $\bigO(\log n)$ work.
\end{proof}

Combining Lemma~\ref{lem:FO-group-free} and Theorem~\ref{thm:group-free-work} we get the desired upper work bound for \FO definable languages.

    \begin{example}\label{example:FOcaseD}
        We revisit the monoid $M$ with decomposition into $T$ and $V$ introduced in Example \ref{ex:FOdecomposition}. Consider the sequence $m$ with its sequences $t,v$ and $u$ as defined in the proof of Theorem \ref{thm:group-free-work}:
        \definecolor{cadmiumgreen}{rgb}{0.0, 0.42, 0.24}
        \newcommand{\highlightU}[1]{\underline{#1}}
        \newcommand{\highlightR}[1]{{\color{red}#1}}
        \newcommand{\highlightG}[1]{{\color{cadmiumgreen}#1}}
        \newcommand{\highlightB}[1]{{\color{blue}#1}}
        \begin{align*}
                &\texttt{\ \ \ \ \ \ \ \ \ \ 1\ \ \ \ \ \ \ \ \ 2\ \ \ \ \ \ \ \ \ 3}\\
                &\texttt{01234567890123456789012345678901}\\
            m = &\texttt{A\highlightU{A1AABBA11A1AABAA1B11A1B1A1}A11BB}\\
            t = &\texttt{1\highlightB{1111BB}1111111B111B1111B\highlightB{111}111BB}\\
            v = &\texttt{A\highlightR{A1AA11}A11A1AA1AA1111A11\highlightR{1A1}A1111}\\
            u = &\texttt{111111E\highlightG{1111111E111E1111D}1111111E}
        \end{align*}
        The sequences $t$ and $u$ consist of $\T$-elements and $v$ of $V-\T$-elements only.
        For a query $\qrange(1,26)$ on $m$ the set of switching positions is $\{6,14,18,23\}$, so the result of the query can be computed as follows:
        \begin{align*}
            \prod_{i=1}^{26}m_i &= \highlightR{\prod_{i=1}^{6}v_i} \highlightB{\prod_{i=1}^{6}t_i} \highlightG{\prod_{i=7}^{23}u_i} \highlightR{\prod_{i=24}^{26}v_i} \highlightB{\prod_{i=24}^{26}t_i}\\
                                &= \highlightR{\texttt{A}^2} \highlightB{\texttt{B}} \highlightG{\texttt{E}} \highlightR{\texttt{A}} \highlightB{\texttt{1}}\\
                                &= \texttt{E}
        \end{align*}
        Note, that the results of the five subproducts are determined by subqueries to $t,v$ and $u$ and can be evaluated as described in Case (b) for $v$ and Case (c) for $t$ and $u$.
    \end{example}


  \newpage
      \section{Appendix for Section~\ref{section:cfl}}

\begin{proofsketch}[of Lemma ~\ref{lem:staticUsingDyn}]
    Let $\prog$ be a dynamic program that maintains $\Member{L}$ with work $f(n)$.
    Whether $w \in L$ holds for a given input string $w = \sigma_0 \sigma_1 \ldots \sigma_{n-1}$ can be decided
    by simulating $\prog$ on the sequence of change operations that inserts $\sigma_1$ to $\sigma_n$ one after the other.
    By simulating the PRAM sequentially, the work $\bigO(f(n))$ for one change step translates into  sequential time $\bigO(f(n))$, thus yielding an overall sequential time  $\bigO(n \cdot f(n))$.
\end{proofsketch}

\begin{proof}[Proof of Theorem~\ref{theo:cfg-clique}]
  Let $G$ be the grammar from  Theorem~\ref{theo:cfg-clique} and let us assume, towards a contradiction, that there is  a dynamic program for $L(G)$ with work $\bigO(n^{\omega-1-\epsilon})$. 
  By Lemma~\ref{lem:staticUsingDyn} this would yield  an algorithm for the word problem for $L(G)$ with time $\bigO(n^{\omega-\epsilon})$. Theorem~\ref{theo:cfg-clique} would then give an algorithm for $k$-Clique with time bound $\bigO(n^{(\frac{k}{3}+1)(\omega-\epsilon)})$. 

  However, for $k\ge \frac{18}{\epsilon}$, this yields a contradiction as follows.
  
  We have $(\frac{k}{3}+1)(\omega-\epsilon)=\frac{\omega}{3}k+\omega-(\frac{\epsilon}{3}k +\epsilon)\le \frac{\omega}{3}k-\frac{\epsilon}{6}k$, since $\frac{\epsilon}{3}k +\epsilon=\frac{\epsilon}{6}k + \frac{\epsilon}{6}k +\epsilon\ge 3+\frac{\epsilon}{6}k +\epsilon$.
  Thus, $k$-Clique could be solved in time $\bigO(n^{(1-\frac{\epsilon}{6})k})$.
\end{proof}

\begin{proof}[Proof of Theorem~\ref{thm:D1polylog}]
  We give a complete proof here, repeating parts of the proof sketch in the body, for convenience.
  
With a string $w$ over $\Sigma_1$ we associate the \emph{reduced string} $\mu(w)$ that basically consists of the unmatched closing parentheses of $w$ followed by the unmatched opening parentheses of $w$. 

Formally, $\red(w)$ can be defined for strings $w \in \Sigma_1^\ast$ as $\red(\epsilon) = \epsilon$ and
\[
    \red(z \sigma) = 
        \begin{cases}
            z'              & \textnormal{if $\red(z)=z'\open$ and $\sigma = \close$}\\
            \red(z)\sigma & \textnormal{else}
        \end{cases}
\]
for every $\sigma \in \Sigma_1$ and $z \in \Sigma_1^\ast$.
As indicated before, a reduced string can be split into two strings $u$ and $v$ of the forms $\close^\ast$ and $\open^\ast$, respectively.
We call $u$ the unmatched (closing) parentheses of $w$ to the left and $v$ the unmatched (opening) parentheses of $w$ to the right.

The algorithm of \cite{FrandsenHusfeldt+1995} stores, for each node $x$ of $T$, the reduced string $\mu(\str(x))$ of its associated string, and updates these strings bottom up after each change, starting from the changed position, resulting in logarithmic (sequential)  update-time.
We show that the new reduced strings can be computed in constant parallel time. To this end,  we show that for each affected node $x$ of $T$ its new reduced string can be determined by a parallel computation that does not need to know the new reduced string of its affected child.

    We describe a dynamic program $\prog$ that maintains $\Member{D_1}$, using a binary tree $T$ as sketched above,  which is constructed by the first-order initialisation as in the proof of Lemma \ref{lem:NextInKWork}.
    For each node $x$ of the tree, $\prog$ represents its reduced string by  the number of unmatched closing parentheses of $\str(x)$ to the left and to the right as auxiliary functions $\ell(x)$ and $r(x)$, respectively.
   In particular, the current word is well-balanced if $r(\textnormal{root}) = \ell(\textnormal{root}) = 0$.

    In the following, we describe how $\prog$ can update $\ell(x)$ and $r(x)$ for a node $x$ after a closing parenthesis $\close$ was inserted at some position $p$.   Maintaining $\ell$ and $r$ for the other change operations is analogous.
    Note that $\ell$ and $r$ only need to be updated for nodes whose substring contains the changed position $p$, that is,  for  ancestors of the leaf at position~$p$.
    
    There are two types of effects that an insertion of a closing parenthesis could have on $x$:
    Either $\ell(x)$ is increased by one and $r(x)$ remains unchanged, or $r(x)$ is decreased by one and $\ell(x)$ remains unchanged.
    We denote these effects by the pairs $(+1,0)$ and $(0,-1)$, respectively.
    We next analyse how the  effect that applies to $x$ depends on the effect of its affected child. We will see then that the effect on the child needs not to be known to update $x$, but that it can be determined by a direct parallel exploration of the nodes on the path from $x$ to~$p$.

    To this end, let $y_1$ be the left child of $x$ and $y_2$ its right child. The effect on $x$ depends on whether the changed position $p$ is in $\str(y_1)$ or in $\str(y_2)$, and on the relation between $r(y_1)$ and $\ell(y_2)$ before the change.
   Table~\ref{tbl:table-d1-cases-app} lists all possible combinations for (1) whether the changed position $p$ is in $\str(y_1)$ or in $\str(y_2)$, (2) which effect occurs there, and (3) the relation between $r(y_1)$ and $\ell(y_2)$, and which effect is applied to $x$ in each case.
For example, if $p$ is in the subtree of $y_1$, the effect on $y_1$ is $(0,-1)$, and $r(y_1) \leq \ell(y_2)$ holds. Then, the inserted closing parenthesis matches (and is to the right of) a former unmatched opening parenthesis in $\str(y_1)$. As $r(y_1) \leq \ell(y_2)$, all unmatched opening parentheses of  $\str(y_1)$ could previously be matched   by unmatched closing parentheses  of $\str(y_2)$. Since after the change there is one unmatched opening parenthesis less in $\str(y_1)$, the unmatched opening parentheses of $y_1$ can still be matched but, on the other hand, one more closing parenthesis of $y_2$ cannot be matched in $\str(x)$, so the effect on $x$ is $(+1,0)$.

    \begin{table}
        \begin{center}
            \begin{tabular}{l|c|c}
                                        & $p$ is in $\str(y_1)$                                                    & $p$ is in $\str(y_2)$ \\ \hline
                $r(y_1) \leq \ell(y_2)$ & $\begin{array}{c}(+1,0) \to (+1,0) \\ (0,-1) \to (+1,0)\end{array}$   & $\begin{array}{c}(+1,0) \to (+1,0) \\ (0,-1) \to (0,-1)\end{array}$ \\ \hline
                $r(y_1) > \ell(y_2)$ & $\begin{array}{c}(+1,0) \to (+1,0) \\ (0,-1) \to (0,-1)\end{array}$   & $\begin{array}{c}(+1,0) \to (0,-1) \\ (0,-1) \to (0,-1)\end{array}$ \\ 
            \end{tabular}
        \end{center}
        \caption{
            The effect on $x$ after a closing parenthesis was inserted at position $p$.
            The effects depend on the effect on the children $y_1$ and $y_2$ of $x$:
            for example, an entry '$(0,-1) \to (+1,0)$' in the column '$p$ is in $\str(y_1)$' means that
            if the change operation has effect $(0,-1)$ on $y_1$ then the change operation has effect $(+1,0)$ on $x$.
        }
        \label{tbl:table-d1-cases-app}
    \end{table}
    A closer inspection of Table~\ref{tbl:table-d1-cases-app} reveals a crucial observation: in the  upper left and the lower right field of the table, the effect on $x$ is \emph{independent} of the effect on the child (being it $y_1$ or $y_2$). That is, these cases induce an effect on $x$ independent of the children. We thus call these cases \emph{effect-inducing}. In the other two fields, the effect on $x$ depends on the effect at the child, but in the simplest possible way: they are just the same. That is the effect at the child is just adopted  by $x$. We call these cases \emph{effect-preserving}. To determine the effect at $x$ it is thus sufficient to identify the highest affected descendant node~$z$ of~$x$, where an effect-inducing case applies, such that for all intermediate nodes between $x$ and $z$ only effect-preserving cases apply. 

    Algorithm~\ref{alg:D1} implements this idea. First it determines, for each ancestor $x$ of the change position $p$, whether it is effect-inducing (Ind) and which effect is induced (Ind$+$ for $(+1,0)$ and Ind$-$ for $(0,-1)$). Then it identifies, for each $x$, the node $z$ (represented by its height $i$ above $p$) as the unique effect-inducing node that has no effect-inducing node on its path to $x$.


    \begin{algorithm}[t]
        \begin{algorithmic}[1]
                    \OnChange{$\ins_\close(p)$}
			\With{}
			\Parfor{$1 \le i \le \log n$}
			    \State $x(i) \gets T.\anc(p,i)$ 
			    \State $c(i) \gets T.\anc(p,i-1)$ 
			 	\State $\text{Ind+}(i) \gets \fst(x(i)) = c(i) \land r(\fst(x(i))) \leq \ell(\snd(x(i)))$
			 	\State $\text{Ind-}(i) \gets \snd(x(i)) = c(i) \land r(\fst(x(i))) > \ell(\snd(x(i)))$	
			 	\State $\text{Ind}(i) \gets \text{Ind+}(i) \lor \text{Ind-}(i)$	
			\EndFor 	
			\EndWith            
            \UpdateAtWhereBy{$(\ell,r)$}{$T.\anc(p,k)$}{$k\le \log n$}
            \If{$k=0$}
                        \State \Return $(1,0)$
                    \Else\
                     \State $x \gets T.\anc(p,k)$
                     \If{\algexists{$1 \le i \le k$}{$\text{Ind}(i)$}}
                      \State $i \gets \algunique{1 \le i \le k}{\text{Ind}(i) \land \neg \algexists{i < j \le k}{\text{Ind}(j)}}$
                         \If{$\text{Ind+}(i)$}
                        \State \Return $(\ell(x) + 1,r(x))$
                    	\Else\
                    	\State \Return $(\ell(x),r(x)-1)$
                    \EndIf
                    	\Else\
                    	\State \Return $(\ell(x) + 1,r(x))$
                    \EndIf
                    \EndIf
            \EndUpdateAtWhereBy
            \EndOnChange   
        \end{algorithmic}
        \caption{Updating $(\ell,r)$ after the insertion of a closing parenthesis in the proof of Theorem \ref{thm:D1polylog}.}
        \label{alg:D1}
      \end{algorithm}

    The node $z$ can be identified with work $\bigO((\log n)^2)$, as $z$ is one of at most $\log n$ many nodes on the path from $x$ to $v$, and one needs to check that all nodes between $x$ and $z$ are effect-preserving.
    As the auxiliary relations need to be updated for $\log n$ many nodes, the overall work of $\prog$ is $\bigO((\log n)^3)$.

\end{proof}

\begin{proofsketch}[of Theorem~\ref{theo:rangeDkWorkBound}]
    In the following we reuse the definition of \emph{special intervals} from the proof of Proposition \ref{prop:monoidWorkBound}
    as well as the definition of $\ell$ and $r$ from the proof of Proposition \ref{thm:D1polylog}.
    We first describe a dynamic program $\prog$ for $\RangeMember{D_1}$.
   It maintains $\ell$ and $r$ for all special intervals.
    Similar to the proof of Proposition \ref{prop:monoidWorkBound}, the two crucial observations are that
    (1) a range query can be answered with the help of a constant number of special intervals, and
    (2) the change operation affects only a bounded number of special intervals per level. 

    \newcommand{\diff}{\textnormal{d}}
    Towards observation (1), recall that each (not necessarily special) interval can be split into a constant number of special intervals.
    To answer a query $\qrange(p,q)$ the program can verify whether the \emph{balance factor}, that is the sum of the differences $\ell-r$ over all special intervals splitting $[p,q]$, is zero,
    and that the balance factor is not negative for any prefix of the sequence of the special intervals.

    Towards observation (2), note that a change operation only affects the special intervals that contain the changed position and
    that $\ell$ and $r$ can be updated bottom up in the same fashion as in the proof of Proposition \ref{prop:monoidWorkBound}.

    We turn now to $\RangeMember{D_k}$.
    Here, $\prog$ maintains again $\ell$ and $r$ for all special intervals but ignoring the type of parenthesis.
    If the query asks whether the string from position $p$ to position $q$ is in $D_k$, then the program checks that inside $[p,q]$ all opening parentheses have consistent closing parentheses and
    that the ``level'' of $p$ is the same as of $q$, that is, that the number of opening parentheses is the same as the number of closing parentheses. This can be done by inspecting the  special intervals that lead from $p$ to $q$ in the canonical way.
    
    As there are a bounded number of levels in the hierarchy the work is $\bigO(n^\epsilon)$ per position and therefore $\bigO(n^{1+\epsilon})$ in total.
\end{proofsketch}

\begin{proofsketch}[of Lemma~\ref{lem:stringEqualityLinear}]
    \newcommand{\rank}{\textnormal{rank}}
    The proof is not entirely trivial, as we allow strings to have positions labelled by $\epsilon$, so we cannot simply compare the positions of the input strings one by one in parallel. \tsm{Mental note: otherwise it would be even in FO with work O(n)} To deal with positions labelled by $\epsilon$, the dynamic program maintain a bijection between the non-$\epsilon$ positions of the two given strings $s_1$ and $s_2$ such that the $i$-th non-$\epsilon$ position of $s_1$ is mapped to the $i$-th non-$\epsilon$ position in $s_2$, and vice versa.
    The strings are equal if all mapped positions have the same label, which can be checked with work $\bigO(n)$ after each change.

    Such a bijection can be defined using the functions $\rank_j(i)$ that map a \mbox{non-$\epsilon$} position $i$ in the string $s_j$ to the number $m$, if $i$ is the $m$-th non-$\epsilon$ position $i$ in~$s_j$, and the inverse $\rank_j^{-1}(i)$ of this function.
    The function $\rank_j$ for a string $s_j$ can be maintained as follows.
    If a (non-$\epsilon$) symbol is inserted at position $p$, the rank of all non-$\epsilon$ positions $i>p$ is increased by $1$ and
    the changed position $p$ gets $\rank_j(i')+1$ where $i'$ is the next non-$\epsilon$ position to the left of $p$.
    Finding this position can be done in \DynFO with $\bigO(\log n)$ work thanks to Lemma~\ref{lem:NextInKWork}.
    If a symbol is deleted from a position $p$, the rank of all non-$\epsilon$ positions $i>p$ is decreased by $1$ and
    the $\rank_j(p)$ is no longer defined.
    The inverse can be maintained in a very similar way.
    The necessary work for updating the functions $\rank_j$ and $\rank_j^{-1}$ at the changed position $p$ is bounded by $\bigO(\log n)$, for all other positions it is bounded by a constant.
    So, the overall work is $\bigO(n)$.
\end{proofsketch}

Unfortunately, if we generalize the definition of a reduced string $\red(w)$ to arbitrary many types of parentheses,
the result is in general not a string of closing parentheses followed by a string of opening ones.
The idea of \cite{FrandsenHusfeldt+1995} is now to maintain the parentheses to the left and right that remain
if we reduce the string by matching opening and closing parenthesis regardless of their type.

For a string $w$ over the alphabet $\Sigma_k = \Sigma_k^{\open} \cup \Sigma_k^{\close} = \{\open_1, \ldots, \open_k\} \cup \{\close_1, \ldots, \close_k\}$
we define a \emph{type-aware reduced string} $\reda(w)$ and a \emph{type-unaware reduced string} $\redu(w)$ as $\reda(\epsilon) = \redu(\epsilon) = \epsilon$ and
\[
    \reda(z \sigma) = 
        \begin{cases}
            z'              & \textnormal{if $\reda(z)=z'\open_i$ and $\sigma = \close_i$, for some $i \leq k$}\\
            \reda(z)\sigma   & \textnormal{else}
        \end{cases}
\]
\[
    \redu(z \sigma) = 
        \begin{cases}
            z'              & \textnormal{if $\redu(z)=z'\open_i$ and $\sigma = \close_j$, for some $i,j \leq k$}\\
            \redu(z)\sigma  & \textnormal{else.}
        \end{cases}
\]

Note that, as before, the type-unaware reduced string can always be split into a first and second part that consist only of closing and opening parentheses, respectively.
For a string $w \in \Sigma_k^\ast$, we therefore use the type-unaware reduced string to define the unmatched parentheses to the left $\leftS{w}$ and to the right~$\rightS{w}$.
We slightly abuse the notation by writing $\leftS{x}$ for a node $x$ instead of $\leftS{\str(x)}$ (as well as $\rightS{x}$, $\reda(x)$ and $\redu(x)$).

To be able to answer $\Member{D_k}$, the dynamic program we construct maintains the unmatched parentheses for every node $x$ of a tree spanning the input word, and a bit $M(x)$ that indicates whether $\reda(x) = \redu(x)$ holds.

In \cite{FrandsenHusfeldt+1995} the unmatched parentheses and the bit $M(x)$ are updated bottom up after a change operation.
For a node $x$ with children $y_1$ and $y_2$ the bit $M(x)$ is set to $1$ if both 
$M(y_1)$ and $M(y_2)$ are set to $1$ and if the unmatched opening parentheses of $\str(y_1)$ match correctly with the unmatched closing parentheses of $\str(y_2)$ to the left. The latter condition can be tested by testing whether two strings are equal, namely the string $\rightS{y_1}$ and the string that results from $\leftS{y_2}$ by reversing the string and exchanging a closing parenthesis by an opening parenthesis of the same type (assuming that both strings have equal length, otherwise surplus symbols in the longer string are ignored). 

In \cite{FrandsenHusfeldt+1995}, \StringEquality is maintained in polylogarithmic sequential time using~\cite{MehlhornSundar+1997},
in~\cite{FreydenbergerThompson2020} it is shown that maintaining whether two substrings are equal is in \DynFO with a polynomial amount of work.
Lemma~\ref{lem:stringEqualityLinear}  tells us that a linear amount of work is sufficient if one is only interested in equality of entire strings.

\begin{proofsketch}[of Lemma~\ref{lem:Dklinearithmic}]
    We describe a dynamic program $\prog$ that maintains $\Member{D_k}$, using a precomputed binary tree $T$ spanning the input word.
    For each node $x$ of $T$ the program maintains $\leftS{x}$ and $\rightS{x}$, represented by the subset of positions that constitute these strings, 
    the bit $M(x)$, as well as the string $\big(\overline{\leftS{x}}\big)^R$ that results from $\leftS{x}$ by first exchanging each closing parenthesis by an opening parenthesis of the same type, and then reversing the string.  
    Further auxiliary information is introduced later on. 
    The input word is in $D_k$ if the root of the tree satisfies $M(\textnormal{root})=1$ and $\leftS{\textnormal{root}} = \rightS{\textnormal{root}} = \epsilon$.

    We first describe how $\leftS{x}$ and $\rightS{x}$ can be maintained with work $\bigO((\log n)^3)$ for a node $x$ after a closing parenthesis is inserted at a position $p$ corresponding to a leaf below $x$. Similar to $D_1$, after such an insertion either an opening parenthesis has to be removed from $\rightS{x}$ or a closing one has to be inserted into $\leftS{x}$. In contrast to $D_1$ it is not sufficient to maintain only the number of unmatched parentheses, as we have to check whether only parenthesis of the same type are matched, so the unmatched parentheses will be maintained explicitly. But similar to $D_1$, one can infer how the unmatched parentheses need to be updated for a tree node $x$ either directly from the existing auxiliary information of $x$ and its children, or from the auxiliary information of some easy-to-identify node that lies on the path from $x$ to the leaf corresponding to the changed position $p$. 
Which case applies depends on in which ``segment'' of $\str(x)$ the change happened.

Suppose that $y_1$ and $y_2$ are the children of $x$.  We call $x$ \emph{right-heavy} if $\size{\rightS{y_1}} \le \size{\leftS{y_2}}$ and \emph{left-heavy} otherwise. Denote by $p_r(x)$ the position of the first unmatched opening parenthesis in $\rightS{x}$ (if it exists), let $p_m(x)$ be the position of the first unmatched opening parenthesis in $\rightS{y_1}$ (if it exists), and let $p_m'(x)$ be the position of the parenthesis that is matched with the parenthesis at position~$p_m$ (so, the parenthesis at position $\size{\rightS{y_1}}-1$ in $\leftS{y_2}$).

We first assume that $x$ is right-heavy. In this case we have that $p_m(x) < p'_m(x) < p_r(x)$ (if these positions exist) and consider the following four cases:
\begin{enumerate}[(1)]
    \item $p_r(x)$ is defined and $p > p_r(x)$,
    \item $p_m(x)$ is defined and $p_m(x)< p < p_m'(x)$,
    \item Cases (1) and (2) do not apply and 
			\begin{enumerate}
			 \item[(3a)] there is a node $z$ on the path from $x$ to the leaf for $p$ such that Case (2) applies for $z$, or
			 \item[(3b)] there is no such node $z$. 
			\end{enumerate}
\end{enumerate}

\nilsm{A picture might be nice}

%
    
In Case (1), that is, if $p_r(x)$ is defined and $p>p_r(x)$, the next opening parenthesis to the left of $p$ in $\rightS{x}$ has to be removed from $\rightS{x}$ since it is now matched by the new parenthesis. The problem of finding this position is an instance of \NextInK for the set of opening parentheses, which can be solved with work $\bigO(\log n)$ thanks to Lemma~\ref{lem:NextInKWork}.

In all other cases, a closing parenthesis has to be inserted into $\leftS{x}$.

In Case (2), that is, if $p_m(x)$ is defined and $p_m(x)< p < p_m'(x)$, the closing parenthesis at position $p_m'(x)$ is no longer matched by the opening parenthesis at position $p_m(x)$. It is also not matched by any other opening parenthesis, as no such unmatched parentheses exist, so it needs to be inserted into $\leftS{x}$.

In Case (3) we distinguish two subcases. For Case (3a), suppose that $z$ is the first right-heavy node on the path from $x$ to the leaf of $p$
    such that $p_m(z)< p < p_m'(z)$ holds. 
    The same closing parenthesis that has to be inserted into $\leftS{z}$ has to be inserted into $\leftS{x}$,
    that is, the parenthesis at position $p_m'(z)$. For Case (3b), that is, if there is no such node $z$, the newly inserted parenthesis itself has to be inserted into $\leftS{x}$.

We now argue the correctness of Case (3). As Cases (1) and (2) do not apply, the position $p$ lays in a substring $v$ of $\str(x)$ that either (i) begins at the first position of $\str(x)$ and ends at $p_m(x)-1$, or (ii) begins at $p_m'(x)+1$ and ends at $p_r(x)$. 
The string $v$ may be balanced, or it may have some unmatched closing parentheses, but there are no unmatched opening parentheses. 
So, $v$ can be split into the unmatched closing parentheses and the balanced substrings $v_1, \ldots, v_j$ between them, that is, $v = v_1 \sigma_1 v_2 \sigma_2 \ldots \sigma_{j-1} v_j$
    with $\sigma_i \in \Sigma_{k_i}^\close$ and $\redu(v_i) = \epsilon$, for all $i$. 

    We first consider the case that $p$ is directly before or after an unmatched parenthesis $\sigma_i$.
    In this case, 
    there is no opening parenthesis before position $p$ that the newly inserted parenthesis can match with.
    Therefore, this parenthesis has to be inserted into $\leftS{x}$.
    Observe that there is no node $z$ with $p_m(z)< p < p_m'(z)$ on the path from $x$ to the leaf of $p$, as that would imply that the inserted parenthesis could be matched. So, in this case, the reasoning above is correct.

    Now we consider the case that $p$ lies in some $v_i$.
    The balanced substring $v_i$ can be split again into minimal balanced substrings.
    Let $p_{\open}$ and $p_{\close}$ be the first and the last position of the minimal balanced substring that contains $p$.
    Because a closing parenthesis is inserted between $p_{\open}$ and $p_{\close}$,
    the parenthesis at position $p_{\open}$ gets matched by another parenthesis than before.
    Therefore, the parenthesis at position $p_{\close}$ remains unmatched and has to be inserted into $\leftS{x}$.

    To see that $p_{\close}$ is in fact $p_m'(z)$ for the first right-heavy node $z$ with $p_m(z) < p < p_m'(z)$ on the path from $x$ to the leaf of $p$, let $z'$ be the node for which $p_{\open}$ is in the left child $z'_1$ and $p_{\close}$ is in the right child $z'_2$ of $z'$.
    Since $p_{\open}$ and $p_{\close}$ matched each other before the insertion, they were part of $\rightS{z'_1}$ and $\leftS{z'_2}$, respectively.
    So, $p$ is strictly between $p_m(z')$ and $p_m'(z')$, and $z'$ must be right-heavy, since the string from $p_{\open}$ to $p_{\close}$ is well-balanced, and any prefix of $\str(z')$ that consists of positions smaller than $p_{\open}$ cannot contain unmatched opening parentheses.
    Additionally, if there was another right-heavy node $z''$ between $z'$ and $x$ for which $p$ is also between $p_m(z'')$ and $p_m'(z'')$,
    this would be a contradiction to the definition of $p_{\open}$ and $p_{\close}$ as being the first and last position of a minimal balanced substring of $v_i$ that contains $p$: the existence of such a node $z''$ would imply that there are unmatched opening parentheses before position $p_{\open}$.
    It follows that $z'=z$ and therefore $p_{\close}=p_m'(z)$ for the first right-heavy node $z$ with $p_m(z) < p < p_m'(z)$ on the path from $x$ to the leaf of $p$. 

    Now, let $x$ be a left-heavy node.
    As $p_r(x) = p_r(y_1) = p_m(x)$ (and as $p_m'(x)$ is undefined), we have one case less. The reasoning for the other cases is exactly as for right-heavy nodes. 

    The strings $\leftS{x}$ and $\rightS{x}$ need to be updated for all $\log n$ nodes $x$ on the path from the root to the leaf of $p$.
    The work for a single node is $\bigO((\log n)^2)$, the dominating factor is the case where the dynamic program has to find some specific node $z$, for which it examines all $\bigO((\log n)^2)$ pairs of nodes between $x$ and the leaf of position $p$. 
    So, overall the work to update $\leftS{x}$ and $\rightS{x}$ is $\bigO((\log n)^3)$.

    The updates after an insertion of an opening parenthesis are completely dual.
    For deletions of opening or closing parentheses, observe that we can simulate the deletion of, e.g., an opening parenthesis at position $p$
    by inserting a closing one at position $p+1$ and deleting both afterwards.
    To ensure that position $p+1$ is always free $\prog$ maintains the auxiliary structure actually on a string $w'$ of length $2 \cdot \size{w}$.
    If an opening parenthesis is inserted at position $p$ it is inserted at position $2 \cdot p - 1$ (i.e. at the $p$-th odd position) in $w'$.
    A closing parenthesis is inserted at the $p$-th even position.
    Removing both parentheses afterwards is easy.
    Because they match each other, both parentheses just have to be removed from $\leftS{x}$ and $\rightS{x}$ for every node $x$.

    Because on each change operation just one parenthesis is inserted to or removed from $\redu(x)$,
    maintaining the positions $p_r(x), p_m(x)$ and $p_m'(x)$ is easy.

    As mentioned above, $M(x)=1$ if $M(y_1)=M(y_2)=1$ and the first $t$ positions of $\rightS{y_1}$ and the last $t$ positions of $\big(\overline{\leftS{y_2}}\big)^R$ coincide, where $t$ is the length of the shorter of the two strings.
 So, $M(x)=1$ is one if for all nodes $y$ on the path from $x$ to the leaf of $p$ it holds that $M(y') = 1$ for the child $y'$ of $y$ that does not lie on this path, and that the partial string equality condition holds for the other child. Given the latter information, the necessary work to update $M$ is bounded by $\bigO((\log n)^2)$.

    As maintaining string equality needs a linear amount of work, using Lemma~\ref{lem:stringEqualityLinear}
    the overall work for maintaining the partial string equality for all the $\log n$ many nodes whose strings $\leftS{x}$ and $\rightS{x}$ are updated is $\bigO(n \cdot \log n)$.
 In sum, $\prog$ maintains $D_k$ with work $\bigO(n \cdot \log n + (\log n)^3)$.
\end{proofsketch}

\begin{proofsketch}[of Lemma~\ref{lem:DkvsStringEquality}]
    Part a) follows directly from the proof of Theorem~\ref{lem:Dklinearithmic} as the linear factor in the work bound is due to maintaining string equality for $\log n$ many nodes. 
For b), observe that two strings $s_1 = \sigma_1^1 \ldots \sigma_n^1$ and $s_2 = \sigma_1^2 \ldots \sigma_n^2$ over an alphabet of size $k$ are equal if and only if the word  $w \df \open_{\sigma_1^1} \ldots \open_{\sigma_n^1} \close_{\sigma_n^2} \ldots \close_{\sigma_1^2}$ is in~$D_k$. 
\end{proofsketch}


\end{document}
